\documentclass[a4paper, 11pt]{article}
\usepackage[hmargin=1in,vmargin=1in]{geometry}
\usepackage{style}
\usepackage[noend, ruled]{algorithm2e}

\newtheorem{theorem}{Theorem}[section]
\newtheorem{lemma}[theorem]{Lemma}

\newtheorem{corollary}[theorem]{Corollary}

\newtheorem{definition}[theorem]{Definition}

\hypersetup{
    colorlinks=true,
    linkcolor=red!70!black,
    linktoc=all,
    citecolor=blue
}

\newcommand{\qedhere}{}

\author{
    \entry{Sujoy Bhore}{iitb}{Department of Computer Science \& Engineering, Indian Institute of Technology Bombay, Mumbai, India.  This work is supported in part by ANRF ARG-MATRICS, Grant 002465.}{sujoy@cse.iitb.ac.in} &
    \entry{Karl Bringmann}{uds}{Saarland University and Max-Planck-Institute for Informatics, Saarland Informatics Campus, Saarbr\"ucken, Germany. This work is part of the project TIPEA that has received funding from the European Research Council (ERC) under the European Unions Horizon 2020 research and innovation programme (grant agreement No. 850979).}{bringmann@cs.uni-saarland.de} \\
    \entry{Timothy M. Chan}{uiuc}{Department of Computer Science, University of Illinois at Urbana-Champaign, USA\@.  This work is supported in part by NSF Grant CCF-2224271.}{tmc@illinois.edu} &
    \entry{Yanheng Wang}{uds}{}{yanhwang@cs.uni-saarland.de}
}

\mathcode`l="8000
\begingroup
\makeatletter
\lccode`\~=`\l
\DeclareMathSymbol{\lsb@l}{\mathalpha}{letters}{`l}
\lowercase{\gdef~{\ifnum\the\mathgroup=\m@ne \ell \else \lsb@l \fi}}%
\endgroup

\newcommand{\size}{\textsc{size}}
\newcommand{\sample}{\textsc{sample}}
\newcommand{\contains}{\textsc{contains}}
\newcommand{\Pois}{\mathrm{Pois}}
\newcommand{\level}{\mathrm{level}}
\newcommand{\supp}{\mathrm{supp}}
\newcommand{\Alg}{\mathcal{A}}
\newcommand{\F}{\mathbb{F}}
\newcommand{\eps}{\varepsilon}
\renewcommand{\Pr}{\mathbb{P}}
\newcommand{\class}[1]{\mathcal{#1}}
\newcommand{\discrete}[1]{#1^\diamond}

\DontPrintSemicolon
\SetFuncSty{textsc}
\SetArgSty{}
\SetKwProg{Function}{function}{}{}
\SetKwInput{Parameter}{parameters}
\SetKwInput{Public}{public variables}

\SetKwFunction{initialize}{initialize}
\SetKwFunction{ins}{insert}
\SetKwFunction{del}{delete}
\SetKwFunction{est}{estimate}
\SetKwFunction{ILP}{list}
\SetKwFunction{refresh}{refresh}
\SetKwFunction{KLM}{KLM}
\SetKwFunction{resample}{erase-resample}
\SetKwFunction{filter}{filter}
\SetKwFunction{update}{update}
\SetKwFunction{support}{support}

\title{Dynamic and Streaming Algorithms\\for Union Volume Estimation}

\begin{document}

\maketitle

\begin{abstract}
The \emph{union volume estimation} problem asks to $(1\pm\eps)$-approximate the volume of the union of $n$ given objects $X_1,\ldots,X_n \subset \R^d$. In their seminal work in 1989, Karp, Luby, and Madras solved this problem in time $O(n/\eps^2)$ in an oracle model where each object $X_i$ can be accessed via three types of queries: obtain the volume of $X_i$, sample a random point from $X_i$, and test whether $X_i$ contains a given point $x$. This running time was recently shown to be optimal~[Bringmann, Larsen, Nusser, Rotenberg, and Wang, SoCG'25]. In another line of work, Meel, Vinodchandran, and Chakraborty [PODS'21] designed algorithms that read the objects in one pass using polylogarithmic time per object and polylogarithmic space; this can be phrased as a dynamic algorithm supporting insertions of objects for union volume estimation in the oracle model.

In this paper, we study algorithms for union volume estimation in the oracle model that support both \emph{insertions} and \emph{deletions} of objects. We obtain the following results:
\begin{enumerate}
    \item an algorithm supporting insertions and deletions in polylogarithmic update and query time and linear space (this is the first such dynamic algorithm, even for 2D triangles);
    \item an algorithm supporting insertions and suffix queries (which generalizes the sliding window setting) in polylogarithmic update and query time and space;
    \item an algorithm supporting insertions and deletions of convex bodies of constant dimension in polylogarithmic update and query time and space.
\end{enumerate}
\end{abstract}
\newpage

\setcounter{tocdepth}{2}
	\tableofcontents

\newpage	
\setcounter{page}{1}

\section{Introduction}
In the \emph{union volume estimation} problem, we are given \emph{objects} $X_1,\ldots,X_n$, which are assumed to be measurable sets in $\R^d$, and the task is to $(1\pm\eps)$-approximate the volume of $\bigcup_{i=1}^n X_i$.%
\footnote{A discrete variant of the problem, estimating the \emph{cardinality} of the union of sets $X_1,\ldots,X_n \subset \Z^d$, can be easily reduced to the continuous version, by replacing each grid point with a unit grid cell.}
This is a fundamental problem in computer science, with connections to several areas such as network reliability~\cite{Karger01, KarpLM89}, DNF counting~\cite{KarpLM89}, general model counting~\cite{pavan2023model, chakraborty2013scalable}, probabilistic databases~\cite{kimelfeld2008query, DalviS07, re2006efficient}, and probabilistic query evaluation on databases~\cite{carmeli2020answering, carmeli2023tractable}.
Perhaps the most popular special case in computational geometry is 
\emph{Klee's measure problem}~\cite{Klee77}, where each object is a $d$-dimensional axis-aligned box of the form $[a_1, b_1] \times \cdots \times [a_d, b_d]$. There has been extensive work on Klee's measure problem (either exact or approximate) in constant dimensions~\cite{OY91,Chan10,Chan13,Kunnemann22,GorbachevK23,BLNRW25}.

To capture a wide variety of input objects with a unified framework, Karp and Luby~\cite{KarpL85} introduced an \emph{oracle model}. An algorithm in this model can access each object $X_i$ exclusively via three types of queries: (i)~$X_i.\size()$ gives the volume of $X_i$, (ii)~$X_i.\sample()$ samples a point uniformly at random from the set $X_i$, and (iii)~$X_i.\contains(x)$ tests whether a given point $x$ is contained in $X_i$. The time complexity of the algorithm is measured by the number of queries plus the number of additional steps; that is, each query is assumed to take unit time. In a seminal result, Karp, Luby, and Madras~\cite{KarpLM89} gave an algorithm in the oracle model that solves union volume estimation in time $O(n/\eps^2)$ with constant success probability, where $\eps$ is the approximation precision. Black-box usage of this result also yields the best known algorithms for several applications, including DNF counting. (For Klee's measure problem, it implies an $O(nd / \eps^2)$-time approximation algorithm~\cite{BringmannF10}; in small dimensions, this was improved to $O((n+ 1/\eps^2)\log^{O(d)}n)$ time at last year's SoCG \cite{BLNRW25}.) Only recently has a matching lower bound been established: $\Omega(n/\eps^2)$ queries are necessary to solve union volume estimation in the oracle model~\cite{BLNRW25}.

With the static setting now fully understood, attention naturally shifts to the dynamic setting, where the input evolves over time through insertions and deletions of objects. The study of \emph{dynamic union volume estimation} has a twofold motivation. On one hand, it captures modern data-driven applications where datasets are inherently mutable, and volume estimates must be maintained continuously rather than recomputed from scratch after each update. On the other hand, the dynamic setting presents a compelling theoretical challenge, as it is far from obvious whether the tools developed for the static setting can be dynamized efficiently.
Indeed, dynamic geometric problems have repeatedly proven to be substantially more difficult than their static counterparts. An extensive body of work in recent years deals with dynamic geometric problems such as set cover \cite{AgarwalCSXX22, ChanHSX25}, independent set \cite{Henzinger0W20, BhoreNTW24, BhoreC25}, planar point location \cite{Nekrich21}, connectivity \cite{ChanPR11, ChanH24a}, and nearest neighbor search \cite{AgarwalAS18}, yet in this broader context dynamic union volume estimation remains largely unexplored. The following question has not been resolved even for very simple classes of objects such as triangles in the plane:

\begin{quote}
\begin{center}
   \emph{Q1: Can we solve dynamic union volume estimation in the oracle model under insertions and deletions of objects in $\poly \left( \log n, \frac{1}{\eps} \right)$ update and query time?}\footnote{Low-dimensional fat objects can be easily handled using quadtrees; see Section~\ref{sec:fat} for details. The problem becomes much more challenging for non-fat objects.}
\end{center}
\end{quote}

An even more demanding paradigm is the (one-pass) streaming setting, where the algorithm has to not only handle updates, but use space that is typically required to be polylogarithmic in the input size. Earlier work has explored streaming algorithms under strong restrictions on the input objects. For example, Tirthapura and Woodruff~\cite{TW12} designed a dynamic streaming algorithm for Klee’s measure problem on axis-aligned boxes. However, the problem remains unaddressed for more general classes of objects:

\begin{quote}
\begin{center}
   \emph{Q2: Can we solve dynamic union volume estimation in the oracle model under insertions and deletions of objects in $\poly \left( \log n, \frac{1}{\eps} \right)$ space as well as update and query time?}
\end{center}
\end{quote}

In the restricted \emph{insertion-only} setting, Meel, Vinodchandran, and Chakraborty~\cite{MVC21} gave a positive answer to this question, using $O\left(\frac{1}{\eps^2} \log^2(n) (\log \log(n) + \log(\frac{1}{\eps})) \right)$ time per update and query, and $O\left( \frac{1}{\eps^2} \log(n) \right)$ space in the oracle model.%
\footnote{In this paper, we assume that points from the input objects and $O(\log n)$-bit integers fit into a machine word, and space complexity counts the number of machine words. An arithmetic operation on machine words is assumed to take unit time. Both are natural assumptions for a continuous universe. Some previous work such as \cite{MVC21} focuses on sets from a finite universe and studies bit complexity; when stating previous work in the introduction, we have translated their results into our setting. In the finite universe $\Omega$, different trade-offs between $\log(n)$ and $\log |\Omega|$ were obtained in~\cite{MCV22,NandiVGMP024}.}
However, their algorithm cannot handle deletions.

\section{Our Contributions}

\subsection{Dynamic algorithm with insertions and deletions}
We answer our driving question (Q1) positively, not only for 2D triangles, 3D simplices, and other common types of objects encountered in computational geometry, but in the full generality of the oracle model.  Our new dynamic algorithm can handle arbitrary sequences of both insertions and deletions.

More precisely, we consider an initially empty multiset of objects $\class{X}$ under the updates \ins{$X$}, which adds object $X$ to $\class{X}$, and \del{$X$}, which removes object $X$ from $\class{X}$. Upon query \est{}, the algorithm outputs a $(1\pm\eps)$-approximation to the volume $\vol\left(\bigcup_{X \in \class{X}} X\right)$.

During insertion/deletion the algorithm is given oracle access to the inserted/deleted object. We also need the ability to \emph{store} objects (or their oracles), in case of future lookup of the previously inserted objects. We make the natural assumption that (i) the argument of \ins{$X$} and \del{$X$} is a \emph{pointer} to the oracle of $X$, (ii) we can store a pointer in a machine word, and (iii) knowing the pointer allows us to access the oracle to which the pointer is pointing. In this interpretation, $\class{X}$ is treated as a multiset of pointers: the same pointer can be inserted multiple times, and upon deletion, the multiplicity of the pointer is decreased by one. Naturally, the multiplicity of any pointer must be non-negative throughout the process (in particular, objects being deleted must have previously been inserted).

\begin{theorem}[Section~\ref{sec:dynamic}]
\label{thm:dynamicalgo}%
There is a dynamic algorithm in the oracle model that maintains a multiset $\class{X}$ of objects under updates \ins{$X$} and \del{$X$}. Upon query \est{}, it outputs a $(1\pm\eps)$-approximation to $\vol\left(\bigcup_{X \in \class{X}} X\right)$ with high probability. The amortized expected update time is $O(\log^5(n/\eps)/\eps^2)$, the query time is $O(\log^4(n)/\eps^2)$, and the space complexity is $O(n + \log^4(n)/\eps^2)$.
\end{theorem}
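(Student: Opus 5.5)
We will build the dynamic structure from the Karp--Luby--Madras (KLM) idea of assigning each point of the union to a canonical object, combined with geometric sampling levels in the style of Meel--Vinodchandran--Chakraborty. The union volume is
\[
V=\sum_{X\in\class{X}}\vol\Bigl(X\setminus\bigcup_{Y\prec X}Y\Bigr)
\]
for any fixed total order $\prec$ on the objects. We fix the order by giving each inserted copy an independent random priority. Instead of a static KLM sweep, we maintain a Poisson point process of rate $p$ on every object, where $p=2^{-\ell}$ ranges over a geometric scale. Concretely, on insertion of $X$ we draw $\Pois(p_0\cdot X.\size())$ many points with $X.\sample()$ at the finest rate $p_0$. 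Each point gets an independent random level, so that the points of level at least $\ell$ form a Poisson process of rate $p_0 2^{-\ell}$ on $X$.

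A point $x$ sampled from $X$ is \emph{canonical} if no object $Y\succ X$ in $\class{X}$ contains $x$, where $\succ$ means inserted later or of higher priority. The canonical points of level at least $\ell$ then form a Poisson process of rate $p_02^{-\ell}$ on the union. Hence $V\approx 2^{\ell}\cdot\#\{\text{canonical points of level}\ge \ell\}/p_0$, and this is accurate once the count is $\Theta(\log n/\eps^2)$. The query finds the smallest such level $\ell$ and returns the estimate. With a counter per level and a binary search over the $O(\log(n/\eps))$ levels, the query time is $O(\log^4(n)/\eps^2)$ after the bookkeeping described below.

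We only need to keep samples at levels where they could matter, roughly the top $O(\log n/\eps^2)$ canonical points per level. This keeps the total stored sample count at $O(\log^{O(1)}(n)/\eps^2)$ plus $O(n)$ for the pointers to the objects, which matches the claimed space bound. To keep the $\size$ sums manageable, the sampling rate must adapt to the current total volume $\sum_X \vol(X)$, which can vary by a factor of $n$ relative to $V$. This adaptation is the source of the additional $\log$ factors.

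The key operations and their costs are as follows.
\begin{itemize}
\item \ins{$X$}: sample points from $X$, and for each existing stored point run $X.\contains$ to mark it non-canonical if $X$ is the newer object.
\item \del{$X$}: remove $X$'s points, and restore canonicity of points that $X$ covered. We decide this by testing them against the remaining objects.
\end{itemize}
To make deletions cheap, we will not test against all objects. Instead, we resample: each stored point keeps only its \emph{owner*} (the latest object containing it), found by checking objects in order of recency. Charging is done by the standard argument that a uniformly random point of the union has an expected number of covering objects bounded via the KLM ratio $\sum\vol(X)/V$. We will organize the levels so that we only sample at rate $\Theta(\log n/(\eps^2\sum\vol))$, which keeps this ratio cost balanced.

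Deletions are handled by an amortized rebuild of a logarithmic hierarchy: a Bentley--Saxe style decomposition over insertion time into $O(\log n)$ groups, each maintained with its own sample. The deletion of $X$ removes only $X$'s own samples. The points that $X$ had shadowed are handled by the convention that a point is charged to its \emph{latest} containing object, so deleting $X$ requires re-examining only points sampled from objects inserted before $X$ that $X$ contains. This is done lazily, via a per-group periodic rebuild when a constant fraction of the group is deleted.

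The main obstacle is exactly this deletion step. We need to bound the expected number of stored points whose canonical status changes when $X$ is deleted, and to find them without scanning everything. I would first try to show that, with random priorities independent of the adversary's updates (an oblivious adversary), the expected number of stored points owned by $X$ is $O(\log^{O(1)}n/\eps^2)$ divided by the number of objects. Then a deletion costs polylog amortized, and re-owning each affected point costs $O(\log n)$ containment queries via a search over the recency order using the group structure.

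The high-probability correctness follows from Poisson concentration at the chosen level and a union bound over $\poly(n)$ queries. The $\log^5(n/\eps)$ factor arises as levels $\times$ groups $\times$ per-point search $\times$ the $\log n/\eps^2$ sample count, together with one more log from the rate adaptation.
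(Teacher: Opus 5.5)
There is a genuine gap. The deletion step is the heart of the theorem, and you leave it open: you only state what you ``would first try to show''. The mechanisms you sketch for it fail.

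\emph{Ownership bound.} Your claim that, with random priorities and an oblivious adversary, the deleted object owns only $\mathrm{polylog}(n)/m$ stored points in expectation is false. Ownership averages out over objects, but the adversary chooses which object to delete. Take a group with one object of volume $(3n/\eps)^4$ and many disjoint objects of volume $(3n/\eps)^2$. The large object is the sole cover of almost all stored points, whatever the priorities. Deleting it shrinks the union by a factor of about $\eps^2/n$, so at the previously correct level essentially no stored points lie in the new union. Yet your rebuild trigger (``a constant fraction of the group deleted'') does not fire.

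\emph{Finest rate.} Sampling once at a finest rate $p_0$ does not help. $p_0$ must suit the smallest union that can later arise, so $p_0\cdot\vol(X)$ is polynomial in $n/\eps$ for large objects. This holds even after the volume-range reduction of Lemma~\ref{lem:reduce-aspect-ratio}, which you never invoke although your $O(\log(n/\eps))$ levels depend on it.

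\emph{Canonical points.} Keeping only the top canonical points discards exactly the shadowed points that must become canonical again after a deletion. Keeping all of them costs $\sum_X\vol(X)/2^\ell$ points, up to $n$ times the budget, in both space and insertion time. Conversely, a rate of $\Theta(\log n/(\eps^2\sum_X\vol(X)))$ gives too few canonical points when $V\ll\sum_X\vol(X)$.

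\emph{Combining groups.} Your groups each have ``their own sample'', but your estimator uses one global level and one Poisson process on the whole union. You never say how per-group samples at different rates are combined.

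The missing ideas are as follows.

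\emph{No owner tracking within a bin.} The restriction $S\cap U_t$ of an $L$-sample of $U$ is an $L$-sample of $U_t$. So one only needs to know \emph{whether} a sample point is still covered, not by whom. The paper keeps an $L$-sample of each bin's union, built by Algorithm~\ref{alg:erase-resample}, together with a counter $m_x$ of the bin's objects containing $x$. A deletion just decrements counters and drops points with $m_x=0$.

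\emph{Refresh trigger and amortization.} Rebuilding from the stored objects is triggered by the \emph{sample size} dropping below $24\log n/\eps^2$, not by the number of deletions. Each refresh forces $L$ down by at least one. Since volumes lie in $[(3n/\eps)^2,(3n/\eps)^4]$, only $O(\log(n/\eps))$ refreshes happen between merges of a bin (Lemma~\ref{lem:decremental}); this is what pays for collapsing unions. The random, data-dependent $L$ is handled via digests.

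\emph{Combining bins.} Across the $O(\log n)$ bins, each with its own $L_j$, the problem is made decomposable by counters $n_x$ of objects in \emph{lower} bins containing $x$. The estimator $\sum_j 2^{L_j}|\{x\in S_j: n_x=0\}|$ estimates $\sum_j\vol(U_j\setminus(U_0\cup\dots\cup U_{j-1}))=\vol(U)$ with additive error $\eps\sum_j\vol(U_j)\le\eps(1+\log n)\vol(U)$. Hence each bin runs with precision $\hat\eps/(1+\log n)$. This rescaling turns the $O(\log^3(n/\eps)/\eps^2)$ amortized cost into the stated $\log^5$ bound. Your ``levels $\times$ groups $\times$ search $\times$ sample count'' accounting does not correspond to any actual argument.
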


Note that since the algorithm operates in the general oracle model, it works well even for \emph{high-dimensional} objects (e.g., for simplices in $\R^d$, with polynomial dependence on $d$).

\subsection{Streaming algorithm with insertions and suffix queries}

In our remaining results, we turn to streaming algorithms and our driving question (Q2) and show that polylogarithmic time \emph{and} space can be achieved in certain settings. 
We first consider the \emph{sliding window} setting~\cite{datar2002maintaining}, which supports insertions of objects and querying over the last $w$ objects for a fixed window size $w$;
thus, objects are deleted in the same order as they are inserted. Numerous results and techniques have been developed in the sliding window setting (see~\cite{muthukrishnan2005data,babcock2002models,Braverman16} for an overview and~\cite{woodruff2022tight,JayaramWZ22,FengSW25} for recent work), but no such result was known for union volume estimation.

We obtain the first efficient sliding-window algorithm for union volume estimation. Again, it
works in the general oracle model! Moreover, the result holds in an extension of the sliding-window setting, where at each time $t = 1, \dots, n$ an object $X_t$ is inserted (but cannot be deleted), and a query \est{$s$} may return a $(1\pm\eps)$-approximation to the volume of the \emph{suffix} $\vol(X_s \cup \cdots \cup X_t)$ with high probability for \emph{any} number $s$ the user specifies. 

\begin{theorem}[Section~\ref{sec:window}]
\label{thm:windowalgo}%
There is an algorithm that implicitly maintains a sequence of objects $X_1, \dots, X_t$ of aspect ratio $M$. The algorithm supports update \ins{$X$}, which appends an object $X$ (as $X_{t+1}$) to the sequence, and query \est{$s$}, which outputs a $(1\pm\eps)$-approximation to $\vol(X_s \cup \cdots \cup X_t)$ with high probability. The update time is $O(\log^2(n/\eps)/\eps^2)$, the query time is $O(\log(n)/\eps^2)$, and the space complexity is $O(\log(M) \log(n)/\eps^2)$.
\end{theorem}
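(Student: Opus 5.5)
We will build on the sampling-based streaming approach of Meel, Vinodchandran, and Chakraborty, but with a ``latest occurrence'' twist that lets us answer suffix queries. Conceptually, assign to every point $x \in \R^d$ the last time $\tau_t(x) = \max\{i \le t : x \in X_i\}$ at which it was covered. The suffix union $X_s \cup \cdots \cup X_t$ is exactly the set of points with $\tau_t(x) \ge s$. So it suffices to maintain a random sample of covered points, at a known sampling rate $p$, with each sample labeled by its current $\tau_t$. The estimate is then $|\{\text{samples with label} \ge s\}|/p$.

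We realize this with a Poisson point process of intensity $p$ restricted to the union. Upon \ins{$X$}, we first run $X.\contains$ on every stored sample and relabel those inside $X$ with time $t+1$. We then draw $\Pois(p\cdot X.\size())$ fresh points via $X.\sample$ and keep them all, since their latest cover is now $X_{t+1}$. The old samples inside $X$ are deleted rather than kept, so the process inside $X$ is refreshed and remains Poisson of intensity $p$. This is the standard resample trick, and it preserves the invariant that the stored points form a Poisson process of rate $p$ on the union, independent of the labels.

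Space is controlled by running $L=O(\log M+\log(n/\eps))$ levels with rates $p_j = 2^{-j}p_0$. At level $j$ we keep only the samples with the largest labels, discarding the oldest ones once more than $K = \Theta(\log(n)/\eps^2)$ samples are stored. Let $T_j$ be the largest label discarded so far at level $j$. For every $s > T_j$, the level-$j$ sample restricted to labels $\ge s$ is intact.

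To answer \est{$s$}, we pick the smallest $j$ with $T_j < s$ and return the count of labels $\ge s$ divided by $p_j$. Completeness (some level works) uses the aspect ratio. The volumes range over a factor of at most $\mathrm{poly}(M)$, or $M^d$ (our reading of the definition), giving $O(\log M)$ relevant levels. Accuracy comes from a Chernoff bound: at the chosen level $j$, level $j-1$ overflowed for this suffix, so the suffix volume is $\gtrsim K/(2p_j)$. A Poisson count with mean $\Omega(\log(n)/\eps^2)$ is then concentrated to within $1\pm\eps$ with probability $1-1/\mathrm{poly}(n)$. A union bound over times and queries finishes the correctness argument.

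Space is $L\cdot K = O(\log(M)\log(n)/\eps^2)$ words. Query time is $O(K)$ after locating the level, matching $O(\log(n)/\eps^2)$ if the levels are scanned cleverly or $T_j$ is monotone in $j$ so a binary search suffices.

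The main obstacle is the update time $O(\log^2(n/\eps)/\eps^2)$. Naively, each insertion costs $\contains$ queries on all $LK$ stored samples, plus $\Pois(p_j\,\vol(X))$ samples per level, which could be huge when $\vol(X)$ is large relative to $1/p_j$. Two fixes address this. First, we generate fresh samples only up to the cap. Since new points have the maximum label $t+1$, at most $K$ of them survive per level, so we draw $\min(K, \Pois(\cdot))$ of them. If the Poisson count exceeds $K$, only $K$ are kept and $T_j$ is set to $t$. Second, we restrict attention to only $O(\log(n/\eps))$ active levels per insertion: those where $p_j\,\vol(X)$ is between roughly $1/\mathrm{poly}(n/\eps)$ and $K$. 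On higher-rate levels the whole level is reset by $X$, and on lower-rate levels $X$ contributes no sample with high probability. The lost mass there is at most an $\eps/\mathrm{poly}(n)$ fraction, which is negligible. With this restriction, the cost is $O(\log(n/\eps))$ levels $\times$ $O(K)$ containment tests, which is $O(\log(n/\eps)\log(n)/\eps^2)$. This step is where I expect the delicate analysis, in particular arguing that the ignored levels introduce only a $(1\pm\eps)$ distortion.
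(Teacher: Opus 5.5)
\textbf{Overall.} Your core mechanism is the paper's, but your handling of the aspect ratio has a gap: as written, it does not give the $M$-independent update and query times. There is also an off-by-one bug that breaks correctness for some suffix queries.

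\textbf{The part that matches, and the off-by-one.} Your per-level scheme is essentially Algorithm~\ref{alg:window}:
\begin{itemize}
\item delete stored points covered by the new object (do not relabel them, as one of your sentences suggests);
\item add $\Pois(p_j\vol(X))$ fresh points carrying the new timestamp;
\item keep only the newest timestamps within a budget $K$;
\item answer with the highest-rate level whose threshold lies below $s$.
\end{itemize}
Defining $T_j$ as a running maximum, and arguing accuracy from an overflow of level $j-1$ at some time $t'\le t$ together with $U[s,t']\subseteq U[s,t]$, is a sound substitute for the paper's invariants (i)--(iii) and Lemma~\ref{lem:select-sample}. The bug: when the fresh count at level $j$ exceeds $K$, the points you throw away carry label $t+1$, so you must set $T_j:=t+1$ (the paper sets $s^{(l)}:=t+1$ and empties $S^{(l)}$). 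With $T_j:=t$, every level overflowed by $X_{t+1}$ looks eligible for $s=t+1$, now and later. The query then picks the finest such level and returns about $K/p_0$, which can undershoot $\vol(X_{t+1})$ by a factor of order $M$.

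\textbf{The gap in the aspect-ratio handling.} With a single hierarchy of $\Theta(\log M)$ levels, the higher-rate levels for $X$ cannot just be skipped. Each must at least have its threshold raised to $t+1$; otherwise a later query finds it eligible and misses $X$ entirely.
\begin{itemize}
\item Alternating tiny and huge objects forces $\Theta(\log M)$ such updates per insertion.
\item The query must find the smallest $j$ with $T_j<s$ among $\Theta(\log M)$ thresholds. These need not be monotone in $j$, since the levels use independent randomness.
\item Fixing $p_0$ presupposes the absolute volume scale, because levels created mid-stream lack history.
\end{itemize}
So your accounting of $O(\log(n/\eps))$ active levels times $O(K)$ does not yield $O(\log^2(n/\eps)/\eps^2)$ update and $O(\log(n)/\eps^2)$ query time. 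You would need extra machinery, e.g.\ recording wipes lazily as a staircase of (time, level-prefix) pairs, and a suffix maximum of volumes to locate the relevant window of levels. Your treatment of the lower-rate levels is fine once phrased as a coupling: with high probability such a level holds no point of $X$ and receives none, and the failure probabilities sum geometrically over levels.

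\textbf{How the paper avoids this.} Theorem~\ref{thm:windowalgo2} only handles volumes in $[(3n/\eps)^2,(3n/\eps)^4]$, where $O(\log(n/\eps))$ levels suffice. Lemma~\ref{lem:reduce-aspect-ratio} then splits the objects into overlapping volume classes $(m^{l-1},m^{l+1}]$ with $m=3n/\eps$, running a rescaled copy per class. Each object enters only two copies, and a newly created class genuinely contains no earlier objects. The answer is read off the second-highest nonempty class, since all smaller objects together have volume at most $n m^{l-1}\le \eps z_l/3$. In short, rather than simulating the exact sampling process at every rate, the paper approximates the union by the objects of a single volume band.
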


Our algorithm not only generalizes previous insertion-only results~\cite{MVC21}, but is also remarkably simple. Note that the space bound depends on the aspect ratio $M$, defined as the ratio of the largest volume of any object ever inserted and the smallest volume of any object ever inserted. In many applications, $M$ is a polynomial in $n$ and thus $\log(M) = O(\log n)$. Furthermore, geometric algorithms in the sliding window setting~\cite{FeigenbaumKZ04,ChanS06} usually have logarithmic dependencies on parameters such as the aspect ratio or spread.

\subsection{Streaming algorithm for convex bodies with insertions and deletions}
Lastly, we consider streaming algorithms that support \emph{both} insertions and deletions.
We manage to develop an efficient algorithm for a reasonably wide class of geometric objects---namely, convex bodies in any constant dimension:

\begin{theorem}[Section~\ref{sec:convex}]
\label{thm:convexalgo}
Given $R \ge r > 0$, there is an algorithm that implicitly maintains a multiset $\class{X}$ of convex bodies in $[0,R]^d$, where each convex body contains a ball of radius $r$, under updates \ins{$X$} and \del{$X$}. Upon query \est{}, it outputs a $(1\pm\eps)$-approximation to $\vol\left(\bigcup_{X \in \class{X}} X\right)$ with high probability. The expected update and query time and space are $O(\polylog(\frac{nR}{\eps r}) / \eps^2)$ for constant $d$.
\end{theorem}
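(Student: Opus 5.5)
The plan is to reduce the convex-body case to a discrete, turnstile-style sampling problem, where deletions can be handled by linear sketches. First, using the assumption that each body lies in $[0,R]^d$ and contains a ball of radius $r$, we show that the volume of a union can be approximated by counting grid cells. Concretely, consider grid cells of side $\delta = \Theta(\eps r/(d\sqrt d))$. For a convex body $X$ containing a ball of radius $r$, the cells meeting the boundary of $X$ have total volume at most $O(\eps)\vol(X)$. This follows by the standard argument: scaling $X$ about the center of its inscribed ball by $1\pm O(\delta\sqrt d/r)$ sandwiches the boundary layer. The same bound should hold for the union, since boundary cells of the union are boundary cells of some $X$. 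However, summing over all $X$ could overcount, so instead we argue directly that the union of $X$'s $\delta$-neighbourhoods is contained in the union of the scaled copies $(1+O(\eps))X$, and that $\vol(\bigcup (1+c\eps)X)\le (1+c\eps)^d\vol(\bigcup X)$ fails in general because the scalings are about different centers. This is the main obstacle. We will try to replace it by the bound that each point of the $\delta$-neighbourhood lies in the Minkowski sum, and control $\vol(U\oplus B_\delta)-\vol(U)$ via a local "fatness" argument: every such point has a nearby body containing a cone toward its inscribed ball. This gives a charging of $\delta$-shell volume to $\vol(U)$ with factor $O(d\delta/r)$ per unit of $U$ near the boundary. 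If that charging fails, the fallback is to charge each boundary cell to one body $X$ and use a quadtree-type multiscale argument.

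Second, the discrete problem is to estimate the number $|\discrete{U}|$ of grid cells whose centers lie in the union, among $N=(R/\delta)^d$ cells. We maintain, for each level $\ell=0,\dots,\log N$, a hash $h_\ell$ subsampling cells with probability $2^{-\ell}$. At each level we keep an $O(1/\eps^2)$-sparse-recovery sketch over the cells with their coverage counts $c(x)=|\{X\in\class{X}:x\in X\}|$. An update to $X$ must add $\pm1$ to $c(x)$ for every sampled cell $x\in X$. This is the key difficulty for efficiency, since $X$ may contain many sampled cells at low levels. To handle it, we only touch levels where $X$ contributes few sampled cells. At level $\ell$ the expected number of sampled cells in $X$ is $|\discrete{X}|2^{-\ell}$, computable from $X.\size()$, and we enumerate them by sampling. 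More precisely, we need to list all sampled cells in $X$. We will use a structured hash, namely a random shifted coarser grid: level-$\ell$ samples are cells in a random sub-lattice, so the sampled cells inside a convex $X$ can be enumerated by a lattice-point enumeration in constant dimension in time proportional to the output plus $\polylog$, using \contains{} and the convexity of $X$.

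Third, the estimator. At query time we find the smallest level $\ell$ at which the number of sampled cells with $c(x)>0$ is $O(1/\eps^2)$, recover that support via the sparse-recovery sketch (a turnstile $\ell_0$-sampler), and output $2^\ell$ times the count. For levels at which some current body is too large to have been fully inserted, its contribution is large enough that the level is not the one selected. We make this precise by storing, at each level, only bodies with $|\discrete X|2^{-\ell}\le C/\eps^2\cdot\polylog$ and noting that a larger body forces the union count above the threshold. For this we keep a counter of "overflow" bodies per level and skip levels with a nonzero counter. Concentration follows from pairwise-independent (or $O(\log)$-wise) hashing plus Chebyshev, amplified by a median of $O(\log n)$ repetitions. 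Space and time are then $\polylog(nR/(\eps r))/\eps^2$ over $O(d\log(R/\delta))$ levels.
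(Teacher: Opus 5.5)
\textbf{There is a genuine gap in your discretization step.} With cells of side $\Theta(\eps r/d^{3/2})$, you correctly note that summing the per-body boundary-layer bounds only bounds the error by $O(\eps)\sum_X \vol(X)$, which can far exceed $\vol(U)$. Your replacement argument is never established. It charges the $\delta$-shell of the union to cones from boundary points toward inscribed balls. But those cones all contain the inscribed balls, so they can overlap with unbounded multiplicity, and you give no way to localize the charge. The ``quadtree-type multiscale'' fallback is not an argument.

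The missing idea is that you never need to confront this obstacle. Refine the grid by an extra factor $n$: scale space by $\lambda = 18d^{3/2}n/(\eps r)$ and use unit cells. Each body then contains a ball of radius $18d^{3/2}n/\eps$, so the single-body convexity bound gives $\vol(X_i \oplus X_i') \le \vol(\partial_{\sqrt d} X_i) < \frac{\eps}{3n}\vol(X_i)$. The crude sum
\[
\vol(U\oplus U') \le \sum_i \vol(X_i\oplus X_i') \le \frac{\eps}{3n}\sum_i\vol(X_i) \le \frac{\eps}{3}\vol(U)
\]
then works, because there are at most $n$ bodies and each has $\vol(X_i)\le\vol(U)$. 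Every later cost is polylogarithmic in the grid size $\Delta = O(nR/(\eps r))$, so the extra factor $n$ costs nothing. This is exactly why $n$ sits inside $\polylog(\frac{nR}{\eps r})$ in the statement.

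\textbf{Your sampling and enumeration step also fails as described.} A randomly shifted coarser grid is not pairwise independent: cells differing by a vector of the coarse lattice are perfectly correlated. At the relevant level the coarse spacing is about $(\eps^2|\discrete{U}|)^{1/d}$. Take a single axis-parallel slab of thickness $2r$ and side length $R \gg r/\eps^2$; this spacing then exceeds the slab's thickness. The sampled count becomes all-or-nothing, and Chebyshev gives nothing.

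You need a family that is genuinely pairwise independent on $[\Delta]^d$ yet still linear, so that the sampled points can be enumerated. The paper takes $h_{a,b}(x) = (a + b\sum_i \Delta^{i-1}x_i) \bmod p$ for a prime $p\in[\Delta^d,2\Delta^d]$ and keeps $x$ iff $h_{a,b}(x)\le \lceil p/2^l\rceil-1$. It writes this condition as an integer linear program in $d+2$ variables (adding the residue $y$ and the quotient $z$), and lists its solutions output-sensitively via Lenstra's algorithm.

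Moreover, with only \size{}, \sample{}, and \contains{} access, you have no explicit region to enumerate over. The paper first computes, from $O(\log n)$ samples, a rotated box $B\supseteq X$ with $\vol(B)\le(12d^4)^d\vol(X)$ (minimum enclosing ellipsoid of the samples, dilated). It then enumerates the ILP solutions inside $B$ and filters them with \contains{}, for expected cost $O(\polylog(\Delta)\,|\discrete{X}|/2^l)$ per level.

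Your overflow-counter rule for choosing the level is a reasonable alternative to the paper's size test. The two ingredients above, however, are needed for the rest of the argument to go through.
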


The bounds depend polylogarithmically on $R/r$, which is common in volume estimation algorithms for a single convex object~\cite{DyerFK91} as well as geometric streaming algorithms~\cite{Indyk04,FrahlingIS08}.

Previously, Tirthapura and Woodruff obtained an efficient dynamic streaming algorithm only for axis-aligned boxes~\cite{TW12} (with $\poly(\log n, \log \Delta, \frac{1}{\eps}, d)$ update time and space in a discrete universe $\{1,\ldots,\Delta\}^d$).
To emphasize the generality of our algorithm, note that many non-convex objects can be decomposed into a small number of convex objects, e.g., non-convex polyhedra of constant combinatorial complexity in constant dimensions can be decomposed into a constant number of simplices, to which our algorithm can be  applied.

\section{Technical Overview}

\subsection{Dynamic algorithm with insertions and deletions}
Karp, Luby and Madras' algorithm~\cite{KarpLM89} and subsequent work are based on the observation that we can estimate the volume of a union $U$ by maintaining a finite random subset $S^{(l)} \subset U$ that selects every point independently with probability density $1/2^l$. It turns out that the ``right'' choice of $l$ is one such that $|S^{(l)}| \approx \log(n)/\eps^2$; in this case $2^l|S^{(l)}|$ provides a good estimate of the volume of $U$.

Meel, Vinodchandran, and Chakraborty~\cite{MVC21} described a particularly simple way to maintain such a random set $S^{(l)}$ for insertion-only updates: when a new object $X$ is inserted, we remove from $S^{(l)}$ all points that are contained in $X$, then add to $S^{(l)}$ freshly sampled points in $X$ with density $1/2^l$. This step is inexpensive for the right choice of $l$ since $|S^{(l)}|$ is logarithmic. If $|S^{(l)}|$ becomes too big, we increment $l$ (thereby halving the sampling rate and also $|S^{(l)}|$ in expectation) and move on.

Difficulty arises in the presence of deletions: the volume is no longer monotonically increasing, and the right choice of $l$ may increase or decrease over time. Our first observation is that the \emph{deletion-only} case can be handled by a variant of Meel, Vinodchandran, and Chakraborty's insertion-only algorithm. Roughly, we can maintain $S^{(l)}$ for the right $l$ by keeping a counter on the number of objects containing each sample point (recall that $|S^{(l)}|$ is logarithmic for the right $l$). When an object $X$ is deleted, we decrement the counters of all points in $S^{(l)} \cap X$, and when a counter drops to 0, the point is removed from $S^{(l)}$. When the number of surviving sample points drops below logarithmic, we decrease $l$, rebuild $S^{(l)}$ by iterating over all objects that has not been deleted (recall that can store pointers to all objects), and reinitialize the counters.

A popular technique in dynamic computational geometry by Bentley and Saxe~\cite{BS80}, called the \emph{logarithmic method}, transforms deletion-only data structures into fully dynamic ones. It works by decomposing a fully dynamic set into logarithmically many deletion-only subsets, increasing the update and query time by typically just a logarithmic factor. The logarithmic method applies to so-called \emph{decomposable} problems that satisfy the following property:
if the input is decomposed into subsets, then the answer to the original problem can be quickly obtained from the answers to the individual subsets. Unfortunately, union volume estimation is \emph{not} a decomposable problem: We cannot determine the volume of the union of $\class{X}_0\cup \cdots \cup \class{X}_{\log n}$ just from the volume of the union of each $\class{X}_i$.

Nevertheless, we show that a novel variant of the logarithmic method can still work. Recall that the deletion-only data structure already maintains a sample for each $\class{X}_i$. For each sample point we keep track of more counters---specifically, how many sets from the other $\class{X}_i$'s contain the point. These counters provide enough information to avoid double counting and help us estimate the union volume accurately.
One issue though is that many counters need to be adjusted when one of the deletion-only data structures changes $l$ and recomputes $S^{(l)}$. Fortunately, we manage to bound the total number of such changes, so the amortized cost for maintaining all counters remains polylogarithmic. See Section~\ref{sec:dynamic} for details.

\subsection{Streaming algorithm with insertions and suffix queries}

The dynamic algorithm with arbitrary insertions and deletions needs linear space because the data structure rebuilds on the remaining objects when the ``right'' $l$ decreases. However, in the sliding window setting and more generally the setting of suffix queries, the remaining objects are contiguous in time and need not be stored explicitly. We obtain a streaming algorithm in this case by a simple yet non-trivial generalization of Meel, Vinodchandran, and Chakraborty's algorithm. Specifically, we keep the timestamp when each point in $S^{(l)}$ is sampled. For each $l$ we keep track of the largest timestamp $s^{(l)}$ such that discarding points older than $s^{(l)}$ from $S^{(l)}$ makes $|S^{(l)}|$ logarithmically bounded. This timestamp $s^{(l)}$ is monotonically increasing over time for each fixed $l$. To estimate the volume of the union of $X_s \cup \cdots \cup X_t$ for a given $s$ at time $t$, we identify the smallest $l$ for which $s^{(l)} \leq s$, and output $2^l|S^{(l)}\cap (X_s \cup \ldots \cup X_t)|$. See Section~\ref{sec:window} for details.

\subsection{Streaming algorithm for convex bodies with insertions and deletions}
Our streaming algorithm for low-dimensional convex objects for arbitrary insertions and deletions is based on another simple sampling strategy. First, it is not difficult to discretize the space $\R^d$ to a grid $[\Delta]^d$ for some large $\Delta$, so that the volume of the union $U$ is close to the number of grid points in $U \cap [\Delta]^d$. The idea now is to select a random subset $R^{(l)}$ of the grid $[\Delta]^d$ in the beginning, and maintain the intersection $S^{(l)} := U \cap R^{(l)}$ during the updates. Using a pairwise independent family of hash functions, $R^{(l)}$ can be represented implicitly with logarithmically many bits.

When we insert or delete an object $X$, we just need to insert or delete $X \cap R^{(l)}$ to or from $S^{(l)}$ (viewed as a multiset of points).
An issue is that the space usage may be large for a $S^{(l)}$ with a ``wrong'' sampling rate, and we do not know which $l$ is the right choice beforehand. Furthermore, a wrong choice may become right at a later time.
This issue can be resolved by representing $S^{(l)}$ by data structures for sparse vectors~\cite{FrahlingIS08,JowhariST11}, which support adding numbers to a given coordinate, as well as recovering the vector when its norm is below a threshold. In this way, we essentially just pay for the space at the right sampling rate.

A question remains: how to enumerate the points of $X \cap R^{(l)}$ for a given object $X$ and for our implicitly represented set $R^{(l)}$? For the case where the objects are rotated (non-axis-aligned) boxes in constant dimension, and for a particular pairwise independent family of hash functions, we observe that this subproblem directly reduces to integer linear programming (ILP) with a constant number of constraints and variables, and therefore can be solved efficiently by well-known algorithms~\cite{Lenstra83}.
For more general convex objects in constant dimension, we bound each object $X$ by a small rotated box $B \supseteq X$, use an ILP to enumerate the points of $B \cap R^{(l)}$ and discard those not in $X$. By ``small'' we mean that the volume of $B$ is comparable to that of $X$, so in particular the number of discarded points is small. See Section~\ref{sec:convex} for the details.

Our approach is related to Tirthapura and Woodruff's algorithm~\cite{TW12} for axis-aligned boxes. Their idea was to directly adapt a streaming algorithm for approximately counting distinct elements to estimate $|U\cap [\Delta]^d|$: 
to insert or delete an object $X$, we cannot afford to literally insert or delete all points in $|X\cap [\Delta]^d|$, but they showed how to efficiently simulate the effect for a specific distinct elements algorithm in the case of axis-aligned boxes.
Our approach is more modular in that we apply sparse recovery only as a black box.
Our main contribution however is the realization, via solving an ILP, that this type of approach works far more generally than for axis-aligned boxes, namely, for general convex objects.

\section{Preliminaries}
\label{sec:preliminaries}

\subsection{Technical setup}
We assume that all relevant parameters (the precision $\eps$, an upper bound $n$ on the number of operations, etc.) are given to the algorithms at initialization. Knowing $n$ is in line with the usual assumptions in streaming algorithms; it can easily be avoided but this complicates notation. The phrase ``with high probability'' means that the probability is at least $1 - n^{-\Omega(1)}$.

We denote $[n] := \set{1, 2 \dots, n}$. The notation $a \in b \pm c$ means that $b-c \leq a \leq b + c$. Similarly, $a \in (1 \pm \eps) b$ means that $(1-\eps)b \leq a \leq (1 + \eps)b$, and we say that $a$ is a $(1\pm\eps)$-approximation to $b$.

\subsection{Reduction of volume range}
The following lemma reduces the volumes of objects to a narrow range $[(3n/\eps)^2, (3n/\eps)^4]$, without affecting efficiency much.
\begin{lemma}
    \label{lem:reduce-aspect-ratio}
    Suppose that $(1 \pm \eps)$ union volume estimation under insertions and deletions (resp. suffix queries) can be solved in $S(n,\eps)$ space and $T(n,\eps)$ update time when objects have volumes in $[(3n/\eps)^2, (3n/\eps)^4]$. Then the problem without volume restriction can be solved in $S(n,\eps/3) \cdot O\left(\frac{\log(M)}{\log(n/\eps)} + 1\right)$ space and $T(n,\eps/3) + O(\log n)$ update time. Moreover, if $S(n,\eps)/n$ is non-decreasing in $n$, then the space bound can be strengthened to $O(S(n,\eps/3))$.
\end{lemma}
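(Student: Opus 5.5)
We use geometric scaling together with a bucketing of objects by volume. Since volumes are scale-invariant up to a common factor, we may rescale space; scaling the whole space by factor $\lambda$ per coordinate multiplies every volume by $\lambda^d$. The oracles of a scaled object are simulated with $O(1)$ overhead: $\size$ is multiplied, sampled points are scaled, and $\contains$ queries are rescaled first.

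We partition the volume range into classes. Put $L := (3n/\eps)^2$ and say object $X$ has \emph{class} $j = \lfloor \log_L \vol(X) \rfloor$, so class-$j$ objects have volume in $[L^j, L^{j+1})$. The union is dominated by large objects in the following sense. Let $V$ be the maximum volume among current objects and $j^*$ its class. All objects of class $\le j^*-2$ have volume below $L^{j^*-1} \le V/L = V\eps^2/(9n^2)$. Since there are at most $n$ of them, their total volume is at most $\eps V/(9n) \le (\eps/3)\cdot\vol(U)$. Hence discarding them changes the union volume by a factor in $[1-\eps/3,1]$.

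It therefore suffices to estimate the union of objects of classes $j^*-1$ and $j^*$. We maintain one data structure $D_j$ per class $j$ (created lazily), each storing all objects of classes $j-1$ and $j$, so each object goes into two structures. Within $D_j$ the volumes lie in $[L^{j-1}, L^{j+1})$. We scale by $L^{2-j}$ (in volume), mapping this range into $[L, L^3) \subseteq [L,L^2]$ after adjusting the bucket width. Concretely, we use $L^{1/2}$-wide bands or shift the class definition so that two consecutive classes span at most the ratio $L^2/L = L$, i.e.\ classes of width $\sqrt L = 3n/\eps$; the negligibility argument then uses classes $\le j^*-3$, whose volumes are $< V/L$. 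The volume restriction for the assumed algorithm thus holds.

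To answer a query, we find $j^*$ and return $\lambda^{-1}$ times the $(1\pm\eps/3)$-estimate of the appropriate $D$. Maintaining $j^*$ under deletions needs counts per class plus a max structure (a balanced BST over nonempty classes), which costs $O(\log n)$ per update. The combined error is $(1\pm\eps/3)^2 \subseteq 1\pm\eps$. Suffix queries work the same way: we must estimate the max volume class within the suffix, which we do by storing, for each class, the latest insertion time; the relevant class is the largest one with latest time $\ge s$.

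For the space bound, there are $O(\log M/\log(n/\eps)+1)$ classes, each using $S(n,\eps/3)$ space. For the strengthening, each $D_j$ holds only $n_j$ objects with $\sum n_j \le 2n$. Run $D_j$ with parameter $n_j$, or use lazy creation and rebuild on doubling. The superadditivity-type property that $S(n,\eps)/n$ is non-decreasing then gives $\sum_j S(n_j,\eps/3) \le \sum_j (n_j/n) S(n,\eps/3) = O(S(n,\eps/3))$. The main obstacle is this last step: the structures require the parameter $n$ both for the volume window and for correctness, but the window should depend on the global $n$ while the size parameter is $n_j$. We handle this by passing the global window but sizing by $n_j$, with doubling rebuilds. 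Since no update time bound is claimed for rebuilding beyond amortization, the rebuild cost must be argued to be amortized, or avoided by streaming-compatible empty-class removal. We would try to argue that the classes are only the nonempty ones, which suffices in the dynamic setting.
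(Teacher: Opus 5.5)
You have the paper's architecture: bands of volume ratio $m=3n/\eps$, one structure per band that also holds the band below it, a balanced search tree over nonempty bands to find the top band $j^*$, and answering from $D_{j^*}$. Each object then lies in two structures, and each structure sees a volume ratio of $m^2$, which fits the assumed window after scaling. Your computation $\sum_j S(n_j,\eps/3)\le\sum_j (n_j/n)\,S(n,\eps/3)=O(S(n,\eps/3))$ for the ``moreover'' part is also the paper's.

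However, the step that makes a two-band window sufficient is wrong as you state it. After shrinking the bands to width $\sqrt L=m$, you keep the threshold $V/L$ and discard only bands $\le j^*-3$. Band $j^*-2$ is then neither discarded nor stored in $D_{j^*}$, which holds only bands $j^*-1$ and $j^*$. You cannot let $D_j$ cover three bands instead, since that spans ratio $m^3$ and no longer fits the window.

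The missing observation is that the threshold $V/m$ already suffices. Every object in a band $\le j^*-2$ has volume $<m^{j^*-1}\le V/m=\eps V/(3n)$. So at most $n$ of them contribute at most $\eps V/3\le(\eps/3)z'$, where $z'\ge V$ is the union volume of bands $j^*-1$ and $j^*$, and hence $z'\le z\le(1+\eps/3)z'$. This is exactly the paper's bound $z_{<l}\le n\,m^{l-1}\le\eps z_l/3$, and it is why $m$ is chosen as $3n/\eps$. Your first computation, with width $L=m^2$, had a spare factor of $m$, and you carried it over after shrinking the bands. Once you discard bands $\le j^*-2$, the rest of your argument goes through.

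Two side remarks. First, your per-band latest insertion time for suffix queries is a real addition. The paper's proof only spells out the dynamic case, where it always answers from the second-highest active class. That choice is wrong for a suffix whose largest objects precede time $s$. Taking the largest band whose latest insertion time is $\ge s$ is the right fix; you also need each $D_j$ to interpret $s$ as a global timestamp, since it sees only a subsequence.

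Second, for the ``moreover'' part, the paper does no rebuilding. It runs the same algorithm and bounds the space of the copy for class $l$ by $S(n_l,\eps/3)$, i.e.\ it reads $S$ as a function of the number of objects currently stored. Your worry that the hypothesis formally ties the window to the same $n$ is legitimate. But the doubling-rebuild fix you sketch is left unresolved, and under the paper's reading it is not needed.
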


\begin{proof}
    Write $m := 3n/\eps$. Let $\Alg$ be a union volume estimation algorithm using $S(n,\eps)$ space and $T(n,\eps)$ update time when objects have volumes in $[(3n/\eps)^2, (3n/\eps)^4]$. We construct another algorithm that handles general volumes.
    
    Let class $\class{C}_l$ collect all input objects whose volume is in $(m^{l-1}, m^{l+1}]$. Non-empty classes are said to be \emph{active}. Note that each object appears in exactly two classes, and there are $O(\frac{\log M}{\log m} + 1)$ active classes. We organize the active classes in a binary search tree ordered by indices $l$. However, we do not store $\class{C}_l$ explicitly but only a counter $|\class{C}_l|$. Moreover, we run a copy $\Alg_l$ of $\Alg$ with parameter $\eps/3$ on each active class $\class{C}_l$, after scaling the objects in it by a factor of $m^{3-l}$. The objects inside the class now have volumes in $(m^2, m^4]$, so algorithm $\Alg$ can solve union volume estimation in space $S(n,\eps/3)$ and update time $T(n,\eps/3)$.
    
    When an object $X$ is inserted, we determine from $X.\size$ the two classes $\class{C}_l, \class{C}_{l'}$ containing $X$. If $C_l$ or $C_{l'}$ was empty before the insertion, then we add it to the binary search tree. We increment the counters $|\class{C}_l|$ and $|\class{C}_{l'}|$. Finally, we insert $X$ (after scaling) into $\Alg_l$ and $\Alg_{l'}$. Deletion is handled similarly, and if some counter drops to zero we remove the class from the tree. To estimate the volume of the union, we locate the second largest $l$ in the tree, and output the estimate of $\Alg_l$ scaled by a factor of $m^{l-3}$.

    It is clear that the space and update time of this new algorithm is as stated. To see correctness, we first observe that the highest class is a subset of the second highest. Indeed, every object $X$ is in exactly two classes; if it is in the highest then it needs to be in the second highest, too.
    Next we write $z$ for the correct output, $z_l$ for the volume of union of the second highest class, and $z_{<l}$ for the volume of union of lower classes. Note that there exists an object of volume at least $m^l$, because $l$ is not the highest class. In particular, $z_l \geq m^l$. On the other hand, every object contained in some lower class but not in the second highest class has volume at most $m^{l-1}$ by definition, so $z_{<l} \leq n \cdot m^{l-1} \leq \eps z_l/3$. Therefore,
    \[ z_l \leq z \leq z_l + z_{<l} \leq \left( 1+\frac{\eps}{3} \right) z_l. \]
    It follows that any $(1\pm\frac{\eps}{3})$-approximation to $z_l$ is a $(1\pm\frac{\eps}{3})^2 \subseteq(1\pm\eps)$-approximation to $z$.

    For the ``moreover'' part, let us write $n_l := |\class{C}_l|$. The space of the same algorithm can be bounded up to constant factors by
    \[
        \sum_l S \left( n_l, \frac{\eps}{3} \right)
        = \sum_l n_l \cdot \frac{S(n_l, \frac{\eps}{3})}{n_l}
        \leq \sum_l n_l \cdot \frac{S(n, \frac{\eps}{3})}{n}
        \leq 2 n \cdot \frac{S(n, \frac{\eps}{3})}{n}
        = 2 S\left( n, \frac{\eps}{3} \right),
    \]
    where the second step uses the assumption that $S(n,\eps)/n$ is non-decreasing, and the third step uses the fact that each object appears in two classes.
\end{proof}

\subsection{Volume estimation via sampling}
As in all previous work, our algorithms exploit the close relation between volume estimation and sampling. The notion of $l$-samples is central:

\begin{definition}[$l$-sample]
    Let $U$ be a measurable set and $l \in \Z$. A random finite subset $S \subset U$ is an \emph{$l$-sample} of $U$, denoted $S \sim \Pois(U,l)$, if $|S \cap V| \sim \Pois(\vol(V)/2^l)$ for every measurable subset $V \subseteq U$.
\end{definition}

It follows from this definition that: (i) if $S \sim \Pois(U,l)$, then $S \cap V \sim \Pois(V,l)$ for any measurable subset $V \subseteq U$; (ii) if $S \sim \Pois(U,l)$ and $T \sim \Pois(V,l)$ are independent and $U,V$ are disjoint, then $S \cup T \sim \Pois(U \cup V, l)$.

Note that an $l$-sample $S$ of $U$ has expected size $\vol(U)/2^l$.
If $l$ is small, then intuitively $2^l |S|$ should concentrate around $\vol(U)$. The threshold of being small is given in Definition~\ref{def:level}, and its justification in Lemma~\ref{lem:l-sample}.
\begin{definition}[level]
    \label{def:level}
    For a measurable set $U$, we define $\level(U)$ as the unique integer such that $\frac{32 \log n}{\eps^2} \leq \frac{\vol(U)}{2^{\level(U)}} < \frac{64 \log n}{\eps^2}$.
\end{definition}

\begin{lemma}
    \label{lem:l-sample}
    If $S$ is an $l$-sample of $U$ for $l \leq \level(U)$, then $2^l |S| \in (1\pm\eps) \vol(U)$ with probability at least $1 - O(n^{-8})$.
\end{lemma}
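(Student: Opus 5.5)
The plan is to apply a Chernoff bound for Poisson random variables. Taking $V = U$ in the definition of an $l$-sample, $|S| \sim \Pois(\mu)$ with $\mu := \vol(U)/2^l$. Since $l \le \level(U)$, Definition~\ref{def:level} gives
\[ \mu = \frac{\vol(U)}{2^l} \ge \frac{\vol(U)}{2^{\level(U)}} \ge \frac{32\log n}{\eps^2}. \]
The event $2^l|S| \in (1\pm\eps)\vol(U)$ is exactly the event $|S| \in (1\pm\eps)\mu$. It therefore suffices to bound $\Pr[\,||S|-\mu| > \eps\mu\,]$.

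For this I will use the standard multiplicative Chernoff bound for a Poisson variable with mean $\mu$ and $0<\eps\le 1$:
\[ \Pr\bigl[\,|Z-\mu| > \eps\mu\,\bigr] \le 2\exp\bigl(-\eps^2\mu/3\bigr). \]
This bound holds because a Poisson variable is a limit of binomials $\mathrm{Bin}(k,\mu/k)$, to each of which the binomial Chernoff bound applies uniformly in $k$. Alternatively, it follows directly from the moment generating function $\mathbb{E}[e^{\lambda Z}] = e^{\mu(e^\lambda-1)}$ by the usual optimization over $\lambda$.

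Plugging in the lower bound on $\mu$ gives a failure probability of at most
\[ 2\exp\bigl(-\tfrac{32}{3}\log n\bigr) = 2n^{-32/3} \]
if $\log$ is the natural logarithm. If $\log$ is base $2$, the exponent only improves. Either way this is $O(n^{-8})$, as claimed.

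There is no real obstacle here. The only care needed is to use a tail bound valid for Poisson rather than binomial variables, and to check the constant. For $\eps > 1$ the lower tail is trivial and the upper tail bound $\exp(-\eps\mu/3)$ still suffices, although the interesting case is $\eps \le 1$. Notably, the upper bound $\vol(U)/2^{\level(U)} < 64\log n/\eps^2$ is not needed for this lemma. The lemma only requires $l$ to be at most the level, and a smaller $l$ makes $\mu$ larger and the concentration stronger.
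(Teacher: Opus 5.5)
Your proof is correct and is essentially the paper's argument: both apply a Poisson tail bound to $|S| \sim \Pois(\mu)$ with $\mu = \vol(U)/2^l \ge 32\log n/\eps^2$. The paper uses the form $2\exp(-\eps^2\mu/4)$ to get $n^{-8}$, and you use the Chernoff form $2\exp(-\eps^2\mu/3)$, which gives a slightly better exponent.

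One correction to your aside on $\eps>1$. There the bound $\exp(-\eps\mu/3)$ only yields $\exp(-\tfrac{32}{3\eps}\log n)$, which is not $O(n^{-8})$ once $\eps>4/3$. For instance, with $\eps=n$ and $l=\level(U)$ the lemma actually fails: then $(1+\eps)\mu<1$, so a single sample point already spoils the estimate, and this happens with probability about $\mu \gg n^{-8}$. So the lemma should be read with $\eps\le 1$, as the paper tacitly assumes.
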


\begin{proof}
    Note that $|S| \sim \Pois(\vol(U)/2^l)$. By concentration of Poisson variables~\cite{Canonne19},
    \[
        \Pr \left( \left| 2^l |S| - \vol(U) \right| \geq \eps \vol(U) \right)
        \leq 2\exp\left( \frac{-\eps^2 \vol(U)}{4 \cdot 2^l} \right)
        \leq 2\mathrm{e}^{-8 \log n} = O(n^{-8}).\qedhere
    \]
    \aftermath
\end{proof}

Let $U$ denote the union we are interested in. Each of our algorithms morally maintains an $l$-sample $S^{(l)}$ of $U$ for every $l \in \Z$ throughout the updates; upon query, it tries to identify some $L \leq \level(U)$ and outputs $2^L |S^{(L)}|$. In the analysis, we morally apply Lemma~\ref{lem:l-sample} to conclude that $2^L |S^{(L)}|$ is a good estimate of $\vol(U)$. However, there are two pitfalls. The first pitfall is that the identification of $L$ actually depends on the outcomes of the random sets $S^{(l)}$, so $L$ is a random variable. Conditioned on~$L$, it is not true that $S^{(L)} \sim \Pois(U, L)$, and Lemma~\ref{lem:l-sample} does not apply directly.

The second pitfall is that we cannot afford to maintain all sets $S^{(l)}$ at once. Our algorithms only explicitly store those $S^{(l)}$ with $l \approx \level(U)$; since such $S^{(l)}$ has an expected size roughly $\vol(U)/2^{\level(U)} = \Theta(\log(n)/\eps^2)$ it can be stored in low space. In contrast, for $l \ll \level(U)$ then $S^{(l)}$ is too large to sample or store, so these sets must be compressed or discarded. Naturally, this further complicates the algorithms. We use different techniques in different settings to avoid the aforementioned pitfalls.

\section{Dynamic Algorithm With Insertions and Deletions}
\label{sec:dynamic}

In this section, we prove Theorem~\ref{thm:dynamicalgo} by presenting a fully dynamic (non-streaming) algorithm for union volume estimation in linear space and polylogarithmic update time. Due to Lemma~\ref{lem:reduce-aspect-ratio}, we assume without loss of generality that the volumes of input objects are in the range $[(3n/\eps)^2, (3n/\eps)^4]$.

\subsection{Preparations}

We will use the Karp--Luby--Madras algorithm as a subroutine:

\begin{theorem}[Karp, Luby, Madras~\cite{KarpLM89}]
    \label{thm:KLM}
    Let $n \in \N$ be a parameter. There is an algorithm \KLM{$\class{X}$} that computes a $(1\pm 0.5)$-approximation to $\vol(\bigcup_{X \in \class{X}} X)$ with probability at least $1 - n^{-3}$ and runs in time $O(|\class{X}| \log n)$.
\end{theorem}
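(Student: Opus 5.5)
We prove Theorem~\ref{thm:KLM} by running the classical Karp--Luby--Madras estimator with only constant relative precision and then boosting the success probability with a median trick. Write $U := \bigcup_{X \in \class{X}} X$, $m := |\class{X}|$ and $W := \sum_{X\in\class{X}} \vol(X)$; computing $W$ costs $m$ \size{} queries. Since $\vol(U) \le W \le m\,\vol(U)$, the ratio $\vol(U)/W$ lies in $[1/m, 1]$, so it suffices to estimate $p := \vol(U)/W$ to within a constant factor.

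\emph{Step 1: a single trial.} Pick an object $X$ with probability $\vol(X)/W$ and draw $x \sim X.\sample()$. The pair $(X,x)$ is then uniform over the disjoint union $\{(X,x) : x \in X\}$, which has total measure $W$. Let $c(x)$ be the number of objects containing $x$, and define the trial value $Y := 1/c(x)$. Then $\mathbb{E}[Y] = \frac{1}{W}\int_U \sum_{X \ni x} \frac{1}{c(x)}\,dx = p$ and $0 \le Y \le 1$. Computing $c(x)$ naively costs $m$ \contains{} queries, giving $O(m)$ per trial; since we want $O(m)$ trials in total, we cannot afford this. We therefore use KLM's trick of estimating $1/c(x)$ unbiasedly. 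Repeatedly pick a uniformly random object $X'$ (uniform over indices, not weighted by volume) and test $X'.\contains(x)$. The number of tests until the first success is geometric with mean $m/c(x)$, so $\frac{1}{m}$ times this count is an unbiased estimator of $1/c(x)$. Equivalently, we run the KLM self-adjusting coverage algorithm: keep sampling pairs and making membership tests until a fixed number $T$ of successful tests has occurred, and output $T\,W/(m\cdot\#\text{steps})$.

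\emph{Step 2: constant-precision analysis.} This is the main obstacle. We must show that with $T = O(1)$ successes the total number of steps is $O(m)$ in expectation and that the output is a $(1\pm 0.5)$-approximation with constant probability, say at least $3/4$. We follow the KLM stopping-rule analysis. Each step succeeds independently with probability $q = \mathbb{E}_x[c(x)/m] = \frac{1}{m}\cdot\frac{\int_U c(x)\,dx}{W} \cdot \frac{W}{\vol(U)} \ge 1/m$. The number of steps needed for $T$ successes is a sum of $T$ geometric variables of mean $1/q$, so the output concentrates around $T/q$ up to a constant factor for a suitable constant $T$, by Chebyshev or a Chernoff bound for the negative binomial. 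Note that $W/(m \cdot (1/q))$ equals $\vol(U)$ in exactly the right way, so the estimator is $\vol(U)$ times the ratio $T/(q\cdot\#\text{steps})$. The expected running time of one run is $T/q \le Tm = O(m)$. To get a worst-case bound, we truncate the run at $O(m)$ steps; by Markov's inequality this adds only a constant to the failure probability.

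\emph{Step 3: amplification.} Run $O(\log n)$ independent copies and output the median. Each copy is correct with probability at least $3/4$, so by a Chernoff bound the median is a $(1\pm 0.5)$-approximation with probability at least $1-n^{-3}$ for a suitable constant in the $O(\log n)$. Each copy costs $O(m)$ worst case, plus the one-time computation of $W$ and a preprocessing step for weighted sampling (prefix sums with binary search, or an alias table built in $O(m)$ time). The total running time is therefore $O(m\log n)$.
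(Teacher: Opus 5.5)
The paper gives no proof here; it imports the result from Karp--Luby--Madras, whose self-adjusting coverage algorithm already runs in $O(m\log(1/\delta)/\eps^2)$ time. Your plan (a constant-precision estimator in $O(m)$ time, then a median of $O(\log n)$ copies) is a reasonable way to recover the bound, but Step~2 has a genuine gap.

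\textbf{The formulas are inverted.} With $T$ successes and $N$ steps, $\mathbb{E}[N]=T\mu$ where $\mu:=m\vol(U)/W=m\,\mathbb{E}[1/c(x)]$. So the estimator matching your Step~1 is $NW/(Tm)$, not $TW/(mN)$. Likewise, for your $q=W/(m\vol(U))=1/\mu$ one has $\vol(U)=W/(mq)$, not $W/(m\cdot(1/q))$. Your output formula is KLM's, but in KLM $T$ is a fixed number of \emph{steps} and the denominator counts \emph{successes}.

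\textbf{The steps are not independent $\mathrm{Bernoulli}(q)$ trials.} This is the more serious problem. All steps of one trial test the same point $x$, and each succeeds with probability $c(x)/m$. So $N$ is a sum of $T$ i.i.d.\ geometric variables with a \emph{random} parameter, and their relative variance can be of order $m/\mu$. A concrete failure: let $X_i=A\cup B_i$, with the $B_i$ pairwise disjoint, disjoint from $A$, and of volume $\vol(A)/\sqrt m$.
\begin{itemize}
\item A trial starting in $A$ ends after one step.
\item A trial starts in some $B_i$ only with probability about $1/\sqrt m$.
\item Hence for $T=O(1)$ you see $N=T$ with probability $1-O(1/\sqrt m)$, and you output about $W/m\approx\vol(A)$ (with either formula).
\item But $\vol(U)\approx\sqrt m\,\vol(A)$.
\end{itemize}
Since $\mu\in[1,m]$ is unknown, no fixed success count gives both concentration and $O(m)$ steps on all inputs.

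\textbf{The fix is KLM's actual stopping rule.} Fix the total number of steps $T=Cm$, let $Y$ be the number of completed trials, and output $TW/(mY)$.
\begin{itemize}
\item The trial lengths $Z_i$ are i.i.d.\ with $\mathbb{E}[Z_i]=\mu\in[1,m]$.
\item Because $c(x)\ge 1$, you get $\mathbb{E}[Z_i^2]\le 2m^2\,\mathbb{E}[1/c(x)^2]\le 2m^2\,\mathbb{E}[1/c(x)]=2m\mu$.
\item Since $\{Y\ge k\}=\{Z_1+\cdots+Z_k\le T\}$, apply Chebyshev to $Z_1+\cdots+Z_k$ for $k\approx T/((1\pm\gamma)\mu)$. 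This sum has mean about $T/(1\pm\gamma)$ and variance at most $2km\mu=O(Tm)$.
\item For a large enough constant $C$, this gives $Y\in(1\pm O(\gamma))\,T/\mu$ with probability at least $3/4$, so $TW/(mY)$ is a $(1\pm 0.5)$-approximation.
\end{itemize}
Fixing the step budget rather than the success count lets the number of trials adapt automatically to the roughly $m/\mu$ that are needed. With this replacement, your Step~3 goes through and gives $O(m\log n)$. Keep the alias table, since one run may start up to $T$ trials and each needs $O(1)$-time weighted sampling. Alternatively, KLM's own bound with a step budget of $O(m\log(1/\delta)/\eps^2)=O(m\log n)$ gives the statement directly, without the median trick.
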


\noindent
We need a somewhat technical notion that is ``close'' to an $l$-sample at the right level $l$.
\begin{definition}
    \label{def:digest}
    Let $U$ be a measurable set. A family of pairs $\set{(l_1,R_1), \dots, (l_N, R_N)}$ is called an \emph{ensemble} of $U$ if $l_k \in \level(U) \pm 1$ and $R_k \sim \Pois(U,l_k)$ for all $k \in [N]$.
    A random pair $(L,S)$ is called a \emph{$(\delta,N)$-digest} of $U$ if there exists an ensemble $\class{E}$ of $U$ such that $|\class{E}| \leq N$ and $\Pr[(L,S) \notin \class{E}] \leq \delta$. The ensemble $\class{E}$ is called a \emph{support ensemble} of the digest $(L,S)$.
\end{definition}

\begin{lemma}
    \label{lem:ensemble}
    Fix measurable sets $V \subseteq U$. If $\class{E}$ is an ensemble of $U$, then with probability at least $1 - \frac{2|\class{E}|}{n^4}$, we have $\left| 2^l \card{R \cap V} - \vol(V) \right| \leq \eps \vol(U)$ for all $(l,R) \in \class{E}$.
\end{lemma}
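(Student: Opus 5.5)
The plan is a union bound over the at most $|\class{E}|$ members of the ensemble, with a Poisson tail bound applied to each member separately. Fix $(l,R)\in\class{E}$. By property (i) of $l$-samples, $R\cap V\sim\Pois(V,l)$, so $|R\cap V|\sim\Pois(\lambda)$ with $\lambda=\vol(V)/2^l$. The event to control is $|2^l|R\cap V|-\vol(V)|>\eps\vol(U)$, that is, $\bigl||R\cap V|-\lambda\bigr|>\eps\vol(U)/2^l=:t$. The deviation threshold $t$ is measured relative to $\vol(U)$ rather than $\vol(V)$. This matters, because $\lambda$ may be tiny when $V$ is small, so a multiplicative bound as in Lemma~\ref{lem:l-sample} would not suffice.

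For a Poisson variable with mean $\lambda$, I will use the standard tail bound
\[
\Pr(|Z-\lambda|\ge t)\le 2\exp\!\left(-\frac{t^2}{2(\lambda+t)}\right),
\]
which follows from the Bernstein/Chernoff bound for Poisson variables as in~\cite{Canonne19}. Set $\mu=\vol(U)/2^l$. Then $\lambda\le\mu$ and $t=\eps\mu\le\mu$ since $\eps\le1$. Hence $\lambda+t\le 2\mu$, and the exponent is at least $\eps^2\mu^2/(4\mu)=\eps^2\mu/4$.

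Next I use $l\le\level(U)+1$, which the definition of an ensemble guarantees. This gives
\[
\mu=\frac{\vol(U)}{2^l}\ge\frac{1}{2}\cdot\frac{\vol(U)}{2^{\level(U)}}\ge\frac{16\log n}{\eps^2}.
\]
So the exponent is at least $4\log n$, and the failure probability for this pair is at most $2e^{-4\log n}=2n^{-4}$, taking $\log$ to be natural, or any base at most $e$. A union bound over all $(l,R)\in\class{E}$ then gives total failure probability at most $2|\class{E}|/n^4$, as claimed.

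The main obstacle is the constant bookkeeping. The lower end $l=\level(U)+1$ loses a factor of $2$ in $\mu$, and the $\lambda+t$ term in the denominator loses another factor of $2$. Together these bring the exponent from $8\log n$ down to exactly $4\log n$, which I need to confirm matches the stated $n^{-4}$ with the paper's convention for $\log$. If the Poisson bound from~\cite{Canonne19} is stated with a different constant, I would instead derive the tail directly via the moment generating function, $\Pr(Z\ge\lambda+t)\le\exp(-t^2/(2(\lambda+t/3)))$, handling both tails separately; this gives a slightly better constant and leaves room to spare. Also note that the $R_k$ need not be independent, but the union bound does not require independence.
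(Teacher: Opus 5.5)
Your proof is correct and is essentially the paper's argument. It uses the same Poisson tail bound $2\exp\bigl(-t^2/(2(\lambda+t))\bigr)$, the same use of $l \le \level(U)+1$ to get $\vol(U)/2^l \ge 16\log n/\eps^2$, and a union bound over $\class{E}$ that needs no independence. The only cosmetic difference is that you bound $\lambda+t \le 2\vol(U)/2^l$ directly. The paper instead uses $\lambda \le d/\eps$ to get the exponent $\eps d/(2(1+\eps))$ and then $\eps \le 1$. Both give the exponent $4\log n$.
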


\begin{proof}
    Consider an arbitrary $(l,R) \in \class{E}$. By definition, $l \leq \level(U)+1$ and $R \sim \Pois(U,l)$. In particular, $|R \cap V| \sim \Pois(\lambda)$ where $\lambda := \vol(V)/2^l$. Let us write $d := \eps \vol(U) / 2^l \geq \eps \lambda$. By concentration of Poisson variables~\cite{Canonne19},
    \begin{align*}
        & \Pr\left( \left| 2^l \card{R \cap V} - \vol(V) \right| > \eps \vol(U) \right) \ 
        =\ \Pr( \left| \card{R \cap V} - \lambda \right| > d ) \\
        \leq\ & 2 \exp \left( -\frac{d^2}{2 (\lambda + d)} \right) 
        \leq 2 \exp \left( - \frac{\eps d}{2 (1 + \eps)} \right)
        = 2 \exp \left( - \frac{\eps^2 \vol(U)}{2(1 + \eps)\cdot 2^l} \right)
        \leq 2 \mathrm{e}^{-4 \log n}
        \le \frac{2}{n^4}
    \end{align*}
    where the second last step uses $\frac{\vol(U)}{2^l} \geq \frac{\vol(U)}{2^{\level(U)+1}} \geq \frac{16 \log n}{\eps^2}$. Taking a union bound over all $(l,R) \in \class{E}$ proves the lemma.
\end{proof}

\begin{corollary}
    \label{cor:digest}
    Fix measurable sets $V \subseteq U$. If $S$ is a $(\delta,N)$-digest of $U$, then
    \[ \Pr\left( \left| 2^L \card{S \cap V} - \vol(V) \right| \leq \eps \vol(U) \right)
    \geq 1 - \delta - \frac{2N}{n^4}. \]
\end{corollary}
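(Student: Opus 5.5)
Let $\class{E}$ be a support ensemble of the digest $(L,S)$, so that $|\class{E}| \le N$ and $\Pr[(L,S)\notin\class{E}] \le \delta$. The idea is to reduce to Lemma~\ref{lem:ensemble} through a union bound. The random pair $(L,S)$ is not an $l$-sample itself once we condition on $L$. However, each fixed member $(l,R)$ of the ensemble is a genuine $l$-sample with $l \in \level(U)\pm 1$. Lemma~\ref{lem:ensemble} already controls all of these simultaneously, so we never need to know how $(L,S)$ is correlated with the ensemble members.

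Concretely, define two bad events. The first is $A := \{(L,S)\notin\class{E}\}$, which has probability at most $\delta$ by the digest definition. The second is $B$, the event that some $(l,R)\in\class{E}$ violates $|2^l|R\cap V| - \vol(V)| \le \eps\vol(U)$. Since $V\subseteq U$ is fixed and $\class{E}$ is an ensemble of $U$, Lemma~\ref{lem:ensemble} gives $\Pr[B] \le 2|\class{E}|/n^4 \le 2N/n^4$.

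Outside $A\cup B$, the pair $(L,S)$ coincides with some $(l,R)\in\class{E}$. For that member the inequality holds, so $|2^L|S\cap V| - \vol(V)| \le \eps\vol(U)$. A union bound then gives failure probability at most $\delta + 2N/n^4$, which is the claim.

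The only subtle point is interpreting ``$(L,S)\in\class{E}$'' correctly. The ensemble is a family of random pairs defined on the same probability space as $(L,S)$, and the event $(L,S)\in\class{E}$ means that $(L,S)$ equals one of them as a realized outcome. Under this reading, the event that $B$ fails is a single event on which every member satisfies the bound, so the pointwise argument above is valid with no conditioning issue.
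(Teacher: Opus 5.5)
Your proposal is correct and follows the paper's proof exactly: fix a support ensemble $\class{E}$, note that the target event is implied by $(L,S)\in\class{E}$ together with the simultaneous-accuracy event of Lemma~\ref{lem:ensemble}, and apply a union bound to the failure probabilities $\delta$ and $2|\class{E}|/n^4 \le 2N/n^4$. You read $(L,S)\in\class{E}$ as a pointwise coincidence of realized outcomes with some ensemble member, and this is exactly what lets the paper's one-line argument avoid any conditioning on $L$.
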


\begin{proof}
    Let $\class{E}$ be a support ensemble of $(L,S)$, so $|\class{E}| \leq N$ and $\Pr[(L,S) \notin \class{E}] \leq \delta$. The event in the statement is implied by $(L,S) \in \class{E}$ and the event in Lemma~\ref{lem:ensemble}.
\end{proof}

\begin{lemma}
    \label{lem:erase-resample}
    Let $n \in \N$ and $\eps \in (0,\frac{1}{4}]$ be parameters. There is an algorithm \resample{$\class{X}, L$} that, given a multiset $\class{X}$ of $m \leq n$ objects and a random variable $L$, returns a set $S$ of size $|S| = O(\log(n)/\eps^2)$ such that $(L,S)$ is a $(\delta, 3)$-digest of $U := \bigcup_{X \in \class{X}} X$ for $\delta := \frac{2}{n^3} + \Pr[L \not\in \level(U) \pm 1]$. The algorithm runs in time $O(m \log(n) / \eps^2)$.
\end{lemma}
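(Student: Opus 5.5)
The algorithm is the Meel--Vinodchandran--Chakraborty style sweep, run at the single level $L$ we are given. Enumerate the objects of $\class{X}$ as $X_1,\dots,X_m$ and maintain a set $S$, initially empty. For $i=1,\dots,m$, first erase from $S$ every point $x$ with $X_i.\contains(x)$. Then draw $k\sim\Pois(X_i.\size()/2^L)$ and add $k$ independent points from $X_i.\sample()$. If $|S|$ ever exceeds a threshold $\tau := C\log(n)/\eps^2$ (with $C$ a suitable constant, say $C=400$), abort and return $S:=\emptyset$. This guarantees $|S|=O(\log(n)/\eps^2)$ always.

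For the running time, each step costs $O(|S|)=O(\tau)$ contains-queries. The sampling in step $i$ is cut off as soon as the size exceeds $\tau$, so it contributes at most $\tau+1$ samples. The total is therefore $O(m\log(n)/\eps^2)$, deterministic up to generating the Poisson count.

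\textbf{Key invariant.} Ignore the abort for the moment, and condition on $L=l$ for a fixed $l$. After step $i$, $S\sim\Pois(X_1\cup\dots\cup X_i,l)$. Inductively, after erasing, $S\cap(U_{i-1}\setminus X_i)\sim\Pois(U_{i-1}\setminus X_i,l)$ by property (i). Adding a Poisson number of uniform samples from $X_i$ gives an independent $l$-sample of $X_i$ (the standard Poissonization fact). Property (ii) on the disjoint pieces $U_{i-1}\setminus X_i$ and $X_i$ then gives an $l$-sample of $U_i$.

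We must be careful about dependence on $L$. I would require that the algorithm's internal randomness be independent of $L$. Then we can define, for each $l\in\{\level(U)-1,\level(U),\level(U)+1\}$, the random set $R_l$ output by the idealized (non-aborting) run with level $l$, using the same internal randomness. This gives the ensemble $\class{E}=\{(l,R_l)\}$ of size $3$, and each $R_l\sim\Pois(U,l)$ by the invariant. (Realistically $L$ is independent of the fresh randomness used here; I would state this as the interpretation of ``given a random variable $L$''.)

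\textbf{Main obstacle: bounding the abort probability.} We need $(L,S)\notin\class{E}$ only if $L\notin\level(U)\pm1$ or some idealized run with $l\in\level(U)\pm1$ aborts, i.e., some intermediate $|S|$ exceeds $\tau$. At step $i$ the intermediate set is an $l$-sample of $U_i\subseteq U$. Its size is $\Pois(\lambda)$ with $\lambda\le\vol(U)/2^{l}\le 4\cdot 64\log(n)/\eps^2$, using $l\ge\level(U)-1$. Within a step the set size only decreases and then increases, so the maximum within step $i$ is attained at its end. A Poisson tail bound gives $\Pr[\Pois(\lambda)>\tau]\le n^{-5}$ for $C$ large. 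A union bound over $m\le n$ steps and the three levels gives an abort probability of at most $2/n^3$. Hence $\Pr[(L,S)\notin\class{E}]\le 2/n^3+\Pr[L\notin\level(U)\pm1]$, as required.
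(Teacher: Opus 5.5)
Your proposal is correct and follows essentially the same route as the paper. It uses the same erase-then-Poisson-resample sweep with an abort threshold, the same invariant that the current set is an $l$-sample of the prefix union, and the same Poisson-tail-plus-union-bound estimate on the abort probability. Your explicit coupling of one idealized run per level $l \in \level(U)\pm 1$, with a union bound over the three levels, handles the dependence on $L$ a bit more carefully than the paper's ``fix $L$'' argument, but the approach is the same.
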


\begin{algorithm}[htb]
    \caption{Digesting a decremental union}
    \label{alg:erase-resample}
    \Function{\resample{$\class{X}, L$}}{
        $S := \emptyset$\;
        \ForEach{$X \in \class{X}$}{
            \ForEach{$x \in S$}{
                \If{$X.\contains(x)$}{
                    remove $x$ from $S$\;
                }
            }
            sample a random number $N \sim \Pois(X.\size/2^L)$\;
            \If{$|S| + N > 160 \log(n) / \eps^2$}{
                \Return $\emptyset$
            }
            \For{$k = 1,\dots,N$}{
                $x := X.\sample()$\;
                add $x$ to $S$\;
            }
        }
        \Return $S$\;
    }
\end{algorithm}

\begin{proof}
    The algorithm is described as Algorithm~\ref{alg:erase-resample}.
    The bound on the size $|S|$ is clear. The running time analysis is also easy: there are $m$ iterations, each running in time $O(\log(n)/\eps^2)$ due to the bound on $|S|$ (and on $|S|+N$).
    
    Next, we show the correctness. 
    Assume first that the algorithm does not return from the if-clause.
    Let $X_1, \dots, X_m$ be the objects in $\class{X}$ in the order enumerated by the foreach-loop, and write $U := \bigcup_{i=1}^m X_i$.
    We define the random set $X_i^{(L)}$ as the set of points sampled from $X_i$ in the for-$k$-loop; observe that $X_i^{(L)} \sim \Pois(X_i, L)$ for every $i \in [n]$.
    Let $S_i$ be the set $S$ at the end of the iteration for $X = X_i$. Note that with $S_0 := \emptyset$ the algorithm computes $S_i = (S_{i-1} \setminus X_i) \cup X_i^{(L)}$ in each iteration.
    It follows that $S_i = \bigcup_{j=1}^i X_j^{(L)} \setminus (X_{j+1} \cup \cdots \cup X_i)$. Since the random sets $X_j^{(L)} \setminus (X_{j+1} \cup \cdots \cup X_i) \sim \Pois(X_j \setminus (X_{j+1} \cup \cdots \cup X_i), L)$ are $L$-samples with disjoint supports and the union of supports is $\bigcup_{j=1}^i X_j$, we have $S_i \sim \Pois(\bigcup_{j=1}^i X_j, L)$.
    
    Now let us analyze the probability of the algorithm aborting. Observe that the algorithm aborts if and only if $|S_i| > 160 \log(n)/\eps^2$ for some $i \in [n]$. By the monotonicity of the size of the support $\bigcup_{j=1}^i X_j$, we have $\Pr[|S_1| > 160 \log(n)/\eps^2] \le \cdots \le \Pr[|S_n| > 160 \log(n)/\eps^2]$.
    Fix $L \in \level(U) \pm 1$. Since $(1 + \eps) \frac{\vol(U)}{2^L} \leq (1 + \eps) \frac{\vol(U)}{2^{\level(U)-1}} \leq (1+\eps) \frac{128 \log n}{\eps^2} \leq \frac{160 \log n}{\eps^2}$, we have
    \[
        \Pr\left( |S_n| > \frac{160 \log n}{\eps^2} \right)
        \leq \Pr\left( |S_n| > (1 + \eps) \frac{\vol(U)}{2^L} \right)
        \leq \frac{2}{n^4},
    \]
    where the second inequality follows from Lemma~\ref{lem:ensemble} by interpreting $(L,S_n)$ as a size-1 ensemble of~$U$ and plugging in $V := U$.
    By a union bound over all events $|S_i| > 160 \log(n)/\eps^2$, it follows that the algorithm aborts with probability at most $\tfrac 2{n^3}$, if $L \in \level(U) \pm 1$.

    If the algorithm does not abort, then we have shown above that it returns a set $S = S_n \sim \Pois(\bigcup_{j=1}^n X_j, L) = \Pois(U, L)$. 
    It follows that we can compare the distribution of $(L,S)$ with the size-3 ensemble $\mathcal{E} := \{(l, \Pois(U,l)) \colon l \in \level(U) \pm 1\}$. We can bound the error probability $\Pr[(L,S) \not\in \mathcal{E}]$ by the probability that $L \not\in \level(U) \pm 1$ plus the probability that the algorithm aborts assuming $L \in \level(U) \pm 1$, and we have bounded the latter probability by $\tfrac 2{n^3}$. This proves the claim.
\end{proof}

\subsection{Digesting decremental union}
Next we study the decremental setting. That is, we receive a set of objects $\class{X}$ and want to build a data structure that allows efficient deletions. At any time, it needs to quickly report a digest of the union of the surviving objects (i.e., objects that have not been deleted).

\begin{algorithm}[htb]
    \caption{Digest data structure for decremental union}
    \label{alg:decremental}
    \Parameter{$n, \eps$}
    \Public{$\class{X}, S, L$}
    \Function{\initialize{$\class{Y}$}}{
        let $\class{X} := \class{Y}$ and $L := \infty$\;
        \refresh{}\;
    }

    \Function{\refresh}{
        $v :=$ \KLM{$\class{X}$}\;
        let $L'$ be the unique integer such that $\frac{32 \log n}{\eps^2} \leq v/2^{L'} < \frac{64 \log n}{\eps^2}$\;
        $L := \min \set{L-1, L'}$\;
        $S := $ \resample{$\class{X}, L$}\;
        \ForEach{$x \in S$}{
            compute and store $m_x := \#\{ X \in \class{X} : X.\contains(x) \}$\;
        }
    }

    \Function{\del{$X$}}{
        $\class{X} := \class{X} \setminus X$\;
        \ForEach{$x \in S$}{
            \If{$X.\contains(x)$}{
                $m_x := m_x - 1$\;
            }
            \If{$m_x = 0$}{
                $S := S \setminus \{x\}$ and forget $m_x$\;
            }
        }
        \If{$|S| < \frac{24 \log n}{\eps^2}$}{
            \refresh{}\;
            \Return true\;
        }\Else{
            \Return false\;
        }
    }
\end{algorithm}

\begin{lemma}
    \label{lem:decremental}
    Let $n \in \N$ and $\eps \in (0,\frac{1}{16}]$ be parameters. 
    Algorithm~\ref{alg:decremental} supports \initialize{$\class{X}$}, which initializes with a multiset $\class{X}$ of $m \le n$ objects whose volumes are in $[(3n/\eps)^2, (3n/\eps)^4]$, and \del{$X$}, which removes object $X$ from $\class{X}$. The algorithm maintains a pair $(L,S)$ which is a $(\frac{18}{n^2}, 6n)$-digest of $\bigcup_{X \in \class{X}} X$ at any point in time, and $|S| = O(\log(n)/\eps^2)$.
    In total, \initialize and any sequence of \del operations run in expected time $O(m \log^2(n/\eps) / \eps^2)$. The space complexity is $O(m + \log(n) / \eps^2)$. 
    
    Moreover, the set $S$ only rarely gains new points: \del{$X$} returns true if the set $S$ after the deletion is not a subset of the set $S$ before the deletion (and false otherwise); and over any sequence of deletions only $O(\log(n/\eps))$ deletions return true, with probability $1 - O(n^{-2})$.
\end{lemma}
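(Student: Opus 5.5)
The plan is to analyze Algorithm~\ref{alg:decremental} in three parts: (a) the invariants for a single refresh, (b) what happens to the digest between refreshes, and (c) the number of refreshes, which controls both the running time and the ``moreover'' claim.

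\textbf{(a) A single refresh.} Let $U$ denote the current union when \refresh{} is called. By Theorem~\ref{thm:KLM}, $v\in(1\pm 0.5)\vol(U)$ with probability $1-n^{-3}$. In that case $L'$ lies within $O(1)$ of $\level(U)$. More precisely, the window $[32,64)$ scaled by $[0.5,1.5]$ gives $L'\in\level(U)\pm 1$.

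I will also maintain the invariant $L\ge \level(U)-1$. This means $L-1\ge\level(U)-2$. Since $U$ only shrinks, levels only decrease, so I need to argue that the $\min$ with $L-1$ never drops below $\level(U)-1$. The argument is this. A refresh is triggered only when $|S|<24\log n/\eps^2$. By Lemma~\ref{lem:ensemble} with $V=U$, this implies $\vol(U)/2^{L}<(1+\eps)\cdot 24\log n/\eps^2<32\log n/\eps^2$. Hence $\level(U)\le L-1$, i.e. the new $L$ satisfies $L-1\ge \level(U)$. So the $\min$ equals $L'$, or lies in $[\level(U)-1,\level(U)+1]$ anyway.

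Then Lemma~\ref{lem:erase-resample} gives that $(L,S)$ is a $(2/n^3+n^{-3}+\ldots,3)$-digest right after the refresh, and $|S|=O(\log n/\eps^2)$.

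\textbf{(b) Between refreshes.} Deleting $X$ keeps exactly the points $x\in S$ with $m_x>0$, which is $S\cap U_{\text{new}}$. By property (i) of $l$-samples, $R\cap V\sim\Pois(V,l)$. So each ensemble member $(l,R)$ of the old union maps to $(l,R\cap U_{\text{new}})$.

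The subtlety is that the levels must stay within $\level(U_{\text{new}})\pm1$. I plan to show this holds whenever no refresh is triggered. The refresh fires once $|S|<24\log n/\eps^2$, and by Lemma~\ref{lem:ensemble} with high probability $|S|$ tracks $\vol(U)/2^L$ up to an additive $\eps\vol(U_{\text{old}})/2^L$. Applying this over the support ensemble with $V=U_{\text{new}}$, and union-bounding over the at most $n$ deletion times, yields that $\vol(U_{\text{new}})/2^L\ge 16\log n/\eps^2$. Therefore $L\le\level(U_{\text{new}})+1$.

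The ensemble therefore grows to at most $3$ members per deletion time plus refresh, which is at most $6n$ overall. Errors accumulate as $O(n)\cdot n^{-3}$ plus the Lemma~\ref{lem:ensemble} failures, giving $\le 18/n^2$.

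\textbf{(c) Counting refreshes.} Since $L$ strictly decreases at each refresh, I need a lower bound on $L$. Volumes are at least $(3n/\eps)^2$ and at most $n(3n/\eps)^4$, so $\level$ ranges over an interval of length $O(\log(n/\eps))$. Together with (a), $L\ge\level(U)-1$, so the total number of refreshes is $O(\log(n/\eps))$ with probability $1-O(n^{-2})$.

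Only refreshes add points to $S$, which proves the ``moreover'' claim. Each refresh costs $O(m\log n/\eps^2)$, coming from KLM, \resample{}, and the computation of the counters $m_x$. Each deletion costs $O(\log n/\eps^2)$. This gives a total of $O(m\log^2(n/\eps)/\eps^2)$. On the failure event, the number of refreshes is still at most $n$, contributing $O(n^{-2})\cdot nm\log n/\eps^2$ to the expectation, which is negligible. The space bound is immediate.

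\textbf{Main obstacle.} I expect step (b)/(a) to be the main obstacle: making the digest and ensemble bookkeeping rigorous when $L$ is itself a random variable depending on past samples. I will handle this by conditioning on the good events for all $O(n)$ fixed ensemble members simultaneously.
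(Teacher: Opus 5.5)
Your proposal is correct and follows essentially the paper's proof: an induction over deletion times, the filtered ensemble $\{(l,R\cap U_t)\}$, the same refresh-counting argument and the same expectation argument. Your ``refresh $\Rightarrow L_{t-1}\ge\level(U_t)+1$'' and ``no refresh $\Rightarrow L_{t-1}\le\level(U_t)+1$'' steps are just the contrapositives of the paper's cases (1) and (2). The only point to tighten is (a): there Lemma~\ref{lem:ensemble} gives an additive error relative to $\vol(U_{\text{old}})$, not a multiplicative $(1+\eps)$ factor relative to the current union, so argue exactly as you do in (b), obtaining $\vol(U)/2^L < 24\log n/\eps^2 + 128\log n/\eps \le 32\log n/\eps^2$ since $\eps\le 1/16$.
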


\begin{proof}
    Fix the initial objects and an arbitrary sequence of $m$ deletions. After $t = 0, 1, \dots, m$ deletions, let $U_t$ be the union of the surviving objects, and let $L_t,S_t$ be the variables $L,S$ in the algorithm. We claim that $(L_t,S_t)$ is a $(\frac{9(t+1)}{n^3}, 3(t+1))$-digest of $U_t$.
    
    Let us prove the claim by induction. First consider initialization ($t = 0$). By Theorem~\ref{thm:KLM}, with probability at least $1 - n^{-3}$ \KLM computes $v \in (1 \pm 0.5) \vol(U_0)$, and in this case the algorithm computes $L \in \level(U_0) \pm 1$. Thus, \resample returns a $(\frac{3}{n^3},3)$-digest $(L_0,S_0)$ of $U_0$ by Lemma~\ref{lem:erase-resample}.
    
    Now we proceed to $t \geq 1$. By the induction hypothesis, $(L_{t-1},S_{t-1})$ is a $(\frac{9t}{n^3}, 3t)$-digest of $U_{t-1}$. Let $\class{E}$ be a support ensemble of $(L_{t-1}, S_{t-1})$. Define
    \[
        \class{E}' := \set{(l,R \cap U_t) : (l,R) \in \class{E},\, l \leq \level(U_t)+1 }.
    \]
    Observe that $|\class{E}'| \leq |\class{E}| \leq 3t$ and $\class{E}'$ is an ensemble of $U_t$ since $\level(U_t) \leq \level(U_{t+1})$ .
    
    Let $E$ be the event that $(L_{t-1}, S_{t-1}) \in \class{E}$. Let $F$ be the event that $\left| 2^l |R \cap U_t| - \vol(U_t) \right| \leq \eps \vol(U_{t-1})$ for all $(l,R) \in \class{E}$. Note that $\Pr(\overline{E}) \leq \frac{9t}{n^3}$ by the induction hypothesis, and $\Pr(\overline{F}) \leq \frac{2 \cdot 3t}{n^4} < \frac{6}{n^3}$ by Lemma~\ref{lem:ensemble}.
    
    Assuming $E \cap F$, we have $L_{t-1} \in \level(U_{t-1}) \pm 1$ and $2^{L_{t-1}} |S_{t-1} \cap U_t| \in \vol(U_t)\pm\eps \vol(U_{t-1})$. Let us distinguish three cases and derive further consequences:
    \begin{enumerate}[(1)]
        \item If $L_{t-1} \leq \level(U_t)$, then
        \[
            |S_{t-1} \cap U_t|
            \geq \frac{\vol(U_t)}{2^{\level(U_t)}} - \eps \frac{\vol(U_{t-1})}{2^{\level(U_{t-1})-1}}
            \geq \frac{32 \log n}{\eps^2} - \eps \frac{128 \log n}{\eps^2}
            \geq \frac{24 \log n}{\eps^2}.
        \]
        So the algorithm does not refresh. As a result, $L_t = L_{t-1} \leq \level(U_t)$ and $S_t = S_{t-1} \cap U_t$, thus $(L_t, S_t) \in \class{E}'$.

        \item If $L_{t-1} \geq \level(U_t) + 2$, then
        \[
            |S_{t-1} \cap U_t|
            \leq \frac{\vol(U_t)}{2^{\level(U_t)+2}} + \eps \frac{\vol(U_{t-1})}{2^{\level(U_{t-1})-1}}
            < \frac{16 \log n}{\eps^2} + \eps \frac{128 \log n}{\eps^2}
            \leq \frac{24 \log n}{\eps^2}.
        \]
        So the algorithm refreshes. During refresh, the algorithm computes $L' \in \level(U_t) \pm 1$ with probability at least $1 - \tfrac 1{n^3}$ by Theorem~\ref{thm:KLM}, and then we have $L_t = \min \set{L_{t-1}-1, L'} = L' \in \level(U_t) \pm 1$. Hence, it returns a $(\frac{3}{n^3}, 3)$-digest $(L_t, S_t)$ of $U_t$ by Lemma~\ref{lem:erase-resample}.
        
        \item Otherwise $L_{t-1} = \level(U_t) + 1$, and the algorithm may or may not refresh. If it does, then $(L_t,S_t)$ is a $(\frac{3}{n^3}, 3)$-digest of $U_t$ as in case (2). If it does not, then $L_t = L_{t-1} = \level(U_t) + 1$ and $S_t = S_{t-1} \cap U_t$, thus $(L_t, S_t) \in \class{E}'$.
    \end{enumerate}
    Taking all cases into account, recalling $\Pr(\overline{E \cap F}) \leq \Pr(\overline{E}) + \Pr(\overline{F}) \leq \frac{9t+6}{n^3}$, and adding the error probability~$\tfrac{3}{n^3}$ of the new digest after refreshing, we conclude that $(L_t, S_t)$ is a $(\frac{9(t+1)}{n^3}, 3(t+1))$-digest of~$U_t$. (In particular, it is a $(\frac{18}{n^2}, 6n)$-digest of the union of the remaining objects.)

    As a consequence, with probability at least $1 - O(n^{-2})$ we have $L_0 \in \level(U_0) \pm 1$ and $L_{m-1} \in \level(U_{m-1}) \pm 1$. On the other hand, $\level(U_0) \leq O(\log(n/\eps))$ and $\level(U_{m-1}) \geq 0$ (because each object has volume in $[(3n/\eps)^2, (3n/\eps)^4]$), and every refresh causes $L$ to decrease by at least one. So the number of calls to \refresh is bounded by $O(\log(n/\eps))$. Since we have $S_t \subseteq S_{t-1}$ except when we do a refresh, it follows that the set $S$ can only gain points during $O(\log(n/\eps))$ many deletions, with probability $1 - O(n^{-2})$.

    The size bound $|S| = O(\log(n)/\eps^2)$ follows from Lemma~\ref{lem:erase-resample}.

    It remains to bound the running time and space usage.
    For \initialize and \refresh, the subroutine \KLM runs in $O(m \log n)$ time by Theorem \ref{thm:KLM}; the subroutine \resample runs in $O(m \log (n) / \eps^2)$ time by Lemma \ref{lem:erase-resample}; and the loop runs in $O(m \log(n) / \eps^2)$ time since $|S| = O(\log(n)/\eps^2)$ and each point in $S$ is tested against at most $m$ objects.
  
    For \del, we spend $O(\log (n) / \eps^2)$ time in the loop. An extra $O(m \log (n) / \eps^2)$ time may be incurred by a refresh. As we argued before, there are at most $O(\log(n/\eps))$ refreshes throughout the deletion sequence with probability $1 - O(n^{-2})$. There can be at most $m$ refreshes with the remaining probability $O(n^{-2})$. So the expected time for the whole deletion sequence is $O(m \log(n) / \eps^2 \cdot (\log(n/\eps) + m/n^2)) \leq O(m \log^2(n/\eps)/\eps^2)$.

    Finally, let us analyze the space complexity. Storing $\class{X}$ in a binary search tree needs $O(m)$ space. Storing $S$ and the counters $m_x$ in a table needs $O(\log(n)/\eps^2)$ space. Storing $L$ needs constant space. So the algorithm uses $O(m + \log(n)/\eps^2)$ space in total.
\end{proof}

\subsection{Handling insertions}
In this section, we apply the decremental data structure from the previous section to derive a dynamic algorithm for union volume estimation, using a novel variant of the logarithmic method~\cite{BS80}. Throughout the algorithm, the remaining objects are partitioned into bins $\class{X}_0, \dots, \class{X}_{\log n}$. Each bin $\class{X}_j$ holds at most $2^j$ objects and is implemented by the decremental data structure. Every now and then, we merge some (small) bins into a larger bin by re-initializing the target bin with the objects in the source bins. Between any two merges that involve a bin $\class{X}_j$, it undergoes deletions only. At any time, we can access a digest $(\class{X}_j.L, \class{X}_j.S)$ of the union $\bigcup_{X \in \class{X}_j} X$. Additionally, for every point $x \in \class{X}_j.S$, we keep track of the number $n_x := \#\{ X \in \class{X}_0 \cup \cdots \cup \class{X}_{j-1} : x \in X \}$.

\begin{theorem}\label{thm:dynamicintermediate}
    Let $n \in \N$ and $\hat \eps \in (0,\frac{1}{16}]$ be parameters. 
    Algorithm~\ref{alg:dynamic} maintains an initially empty multiset $\class{X}$ of objects, where each object has volume in $[(3n/\hat \eps)^2, (3n/\hat \eps)^4]$, under updates \ins{$X$} and \del{$X$} and query \est{}, which $(1\pm\hat \eps)$-approximates the volume of the union of $\class{X}$ and is correct with high probability. The amortized expected update time is $O(\log^5(n/\hat \eps)/\hat \eps^2)$, the query time is $O(\log^4(n)/\hat \eps^2)$, and the space complexity is $O(n + \log^4(n)/\hat \eps^2)$.
\end{theorem}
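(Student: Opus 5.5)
We first fix the algorithm, since Algorithm~\ref{alg:dynamic} is only described in prose. We keep bins $\class{X}_0,\dots,\class{X}_{\log n}$ in the style of Bentley--Saxe. Each bin $\class{X}_j$ is a decremental structure as in Lemma~\ref{lem:decremental}, run with precision $\eps := \hat\eps/(c\log n)$ for a suitable constant $c$, and holds at most $2^j$ objects. For every $x \in \class{X}_j.S$ we also store the counter $n_x$, the number of objects in the lower bins $\class{X}_0,\dots,\class{X}_{j-1}$ that contain $x$.

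The operations are as follows.
\begin{itemize}
\item \emph{Insertion of $X$.} Find the smallest $j$ such that bins $0,\dots,j$ together with $X$ contain at most $2^j$ objects. Reinitialize $\class{X}_j$ with all of them and empty the lower bins. Recompute $n_x$ for the new sample points of $\class{X}_j$ by brute force against all objects in lower bins. Increment $n_x$ for sample points of higher bins $j'>j$ whose point lies in $X$; the other objects merely moved between lower bins, so those counters are otherwise unchanged.
\item \emph{Deletion of $X \in \class{X}_j$.} Call $\class{X}_j.\del{X}$. Decrement $n_x$ for the points of higher bins that lie in $X$. If the call returns true, recompute $n_x$ for the new points of $\class{X}_j.S$ by brute force.
\item \emph{Query.} Let $U_j := \bigcup_{X\in\class{X}_j} X$ and $W_j := U_j \setminus (U_0\cup\dots\cup U_{j-1})$, so that $\vol(\bigcup\class{X}) = \sum_j \vol(W_j)$. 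Output
\[ \sum_j 2^{L_j}\,\card{\set{x \in S_j : n_x = 0}}. \]
The summand is exactly $2^{L_j}|S_j \cap W_j|$.
\end{itemize}

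\textbf{Correctness.} By Lemma~\ref{lem:decremental}, each $(L_j,S_j)$ is a $(18/n^2, 6n)$-digest of $U_j$. Apply Corollary~\ref{cor:digest} with $V = W_j \subseteq U_j$. There is a subtlety here: the corollary fixes $V$, but $W_j$ depends on the lower bins. Their randomness is independent of bin $j$'s internal coins, and the update sequence is oblivious, so we condition on the lower bins and then apply the corollary. This gives $|2^{L_j}|S_j\cap W_j| - \vol(W_j)| \le \eps\vol(U_j) \le \eps\vol(\bigcup\class{X})$. Summing over the $\log n+1$ bins gives total error $\eps(\log n+1)\vol(\bigcup\class{X}) \le \hat\eps\,\vol(\bigcup\class{X})$. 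The failure probabilities are $O(n^{-2})$ per bin, and a union bound finishes. To get a genuinely high-probability guarantee at every time, we boost the constants in Lemma~\ref{lem:ensemble} via the $\log n$ factor in $\level$. Alternatively, note that the $n^{-2}$ bound is per query and is accepted as high probability.

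\textbf{Running time and space.} The query costs $\sum_j |S_j| = O(\log n \cdot \log n/\eps^2) = O(\log^3 n/\hat\eps^2)$ if we scan. The stated bound is $O(\log^4 n/\hat\eps^2)$, which leaves slack, e.g.\ for the boosting above. Space is $O(n)$ for the objects plus $O(\log n)$ samples of size $O(\log^3 n/\hat\eps^2)$ with counters. The cost splits into three parts.
\begin{enumerate}
\item \emph{Rebuilds in the logarithmic method.} An object is merged into a higher bin at most $\log n$ times. A rebuild of $\class{X}_j$ costs $O(2^j\log^2(n/\eps)/\eps^2)$ by Lemma~\ref{lem:decremental}, which also pays for all subsequent deletions in the bin. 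Computing counters for the new sample points costs $O(|S_j|\cdot 2^j)$. Charging to objects gives amortized $O(\log n\cdot\log^2(n/\eps)/\eps^2) = O(\log^5(n/\hat\eps)/\hat\eps^2)$.
\item \emph{Per-update counter adjustments.} Each update scans all higher-bin samples, costing $O(\log^3 n/\hat\eps^2)$.
\item \emph{Deletions that return true.} This is the main obstacle: such a deletion forces recomputing counters of $O(\log n/\eps^2)$ new points against up to $2^j$ lower objects. The ``moreover'' part of Lemma~\ref{lem:decremental} says this happens only $O(\log(n/\eps))$ times between two rebuilds of bin $j$, and each refresh already costs $\Theta(|\class{X}_j|\log(n)/\eps^2)$. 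The lower bins have at most $2^j$ objects in total, so I would try to charge the $O(2^j\log n/\eps^2)$ recomputation to the $2^{j-1}$ objects that bin $j$ held at its last rebuild, which is at least half its capacity by the merge rule. Combining this with the $O(\log(n/\eps))$ bound gives $O(\log(n/\eps)\cdot\log n/\eps^2)$ per object per level, hence $O(\log^5(n/\hat\eps)/\hat\eps^2)$ overall after substituting $\eps=\hat\eps/\log n$.

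Two gaps remain in this charging. First, I must check that bin $j$ holds at least $2^{j-1}$ objects at its rebuild. If the merge rule does not guarantee this, I would modify it to rebuild at capacity. Second, the $O(\log(n/\eps))$ bound fails with probability $O(n^{-2})$. In that event the cost is at most $\poly(n)$, so it contributes only $o(1)$ in expectation.
\end{enumerate}
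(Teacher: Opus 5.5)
Your architecture matches the paper's proof: the same bins and merge rule, the same estimator $\sum_j 2^{L_j}|S_j\cap W_j|$ analysed via Corollary~\ref{cor:digest} with $V=W_j$, the same total error $\eps\sum_j\vol(U_j)\le\eps(1+\log n)\vol(U)$, and the same use of the ``moreover'' part of Lemma~\ref{lem:decremental}. Your treatment of its $O(n^{-2})$ failure event is also fine.

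\textbf{The gap is in your amortization.} You charge each rebuild of bin $j$, and the $O(\log(n/\eps))$ refresh recomputations of the phase that follows, to the objects bin $j$ holds at the rebuild. From this you conclude a cost per object per level. But a merge reinitializes $\class{X}_h$ with $\{X\}\cup\class{X}_0\cup\dots\cup\class{X}_h$, so objects already in bin $h$ are rebuilt again without moving up. Concretely, take $h\ge 2$ and leave a single object $A$ in bin $h$ with the lower bins empty. Then:
\begin{enumerate}
\item insert $2^h-1$ objects, which fills the lower bins;
\item delete one object from bin $h-1$;
\item insert once more, which merges into bin $h$ together with $A$;
\item delete everything in bin $h$ except $A$.
\end{enumerate}
Repeating this cycle, $A$ takes part in $\Theta(n/2^h)$ rebuilds of the same bin. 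So the ``per object per level'' bound is false. Likewise, ``merged into a higher bin at most $\log n$ times'' does not bound how often an object is charged.

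\textbf{How to repair it.} You need minimality of $h$ in a sharper form than your first flagged gap, which incidentally needs no change to the algorithm. Since $1+\sum_{i<h}|\class{X}_i|>2^{h-1}$, more than $2^{h-1}$ of the merged objects (counting the new one) enter bin $h$ from strictly lower bins. Bin indices never decrease, so each object enters a given bin at most once. Charge the rebuild and the whole subsequent phase of bin $h$, which costs $O(2^h\log^2(n/\eps)/\eps^2)$ in expectation, only to these entering objects. This gives $O(\log^3(n/\eps)/\eps^2)=O(\log^5(n/\hat\eps)/\hat\eps^2)$ per object.

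The paper does the equivalent accounting over time rather than objects. Lower bins are emptied whenever bin $j$ or a higher bin is merged into. Hence $\Omega(2^j)$ insertions must occur between two reinitializations of bin $j$. The paper charges the merge cost and the deletion/refresh cost of each phase to those insertions, so each insertion is charged once per bin.

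\textbf{Smaller points.}
\begin{itemize}
\item With $\eps=\Theta(\hat\eps/\log n)$, each $S_j$ has $O(\log^3 n/\hat\eps^2)$ points. Scanning all $\log n+1$ samples, in the query and in the per-update counter adjustments, therefore costs $O(\log^4 n/\hat\eps^2)$, not $O(\log^3 n/\hat\eps^2)$. This is exactly the stated query bound, so the bounds still hold, but there is no slack.
\item No conditioning argument is needed for $W_j$. The merge rule depends only on bin sizes, so for a fixed update sequence the bin partition, and hence $W_j$, is deterministic.
\item The counter recomputation after a merge is free: the lower bins are then empty, so $n_x=0$.
\end{itemize}
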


\begin{algorithm}[htb]
\caption{Dynamic union volume estimation with arbitrary insertions and deletions}
\label{alg:dynamic}
    \Parameter{$n, \hat \eps$}
    
    \Function{\initialize}{
        \For{$j = 0, \dots, \log n$}{
            $\class{X}_j$.\initialize{$\emptyset$} with parameters $n$ and $\eps := \hat \eps/(1+\log n)$\;
        }
    }

    \Function{\ins{$X$}}{
        let $h$ be the minimal integer such that $1 + \sum_{j = 0}^h | \class{X}_j | \leq 2^h$\;
        $\class{X}_h$.\initialize{$\{X\} \cup \class{X}_0 \cup \cdots \cup \class{X}_h$} \tcc*{merge}
        $\class{X}_j$.\initialize{$\emptyset$} for $j =0, \dots h-1$\;
        \ForEach{$x \in \class{X}_h.S$}{
            $n_x := 0$
        }
        \ForEach{$x \in \bigcup_{j=h+1}^{\log n} \class{X}_j.S$}{
            \If{$X.\contains(x)$}{
                $n_x := n_x + 1$
            }
        }
    }

    \Function{\del{$X$}}{
        find the bin $\class{X}_j$ containing $X$\;
        $\mathrm{refreshed} := \class{X}_j$.\del{$X$}\;
        \If{refreshed}{
            \ForEach{$x \in \class{X}_j.S$}{
                $n_x := \#\{ X \in \class{X}_0 \cup \cdots \cup \class{X}_{j-1} : X.\contains(x) \}$\;
            }
        }
        \ForEach{$x \in \bigcup_{k=j}^{\log n} \class{X}_k.S$}{
            \If{$X.\contains(x)$}{
                $n_x := n_x - 1$
            }
        }
    }

    \Function{\est{}}{
        denote $S_j := \class{X}_j.S$ and $L_j := \class{X}_j.L$ for $j = 0, 1, \dots, \log n$\;
        compute $R_j := \{ x \in S_j : n_x = 0 \}$ for $j = 0, 1, \dots, \log n$\;
        \Return $\sum_{j=0}^{\log n} 2^{L_j} |R_j|$\;
    }
\end{algorithm}

\begin{proof}
    Fix a sequence of at most $n$ insertions and deletions. Then the partition of surviving objects into bins $\class{X}_1, \dots, \class{X}_{\log n}$ is determined; in particular $U_j := \bigcup_{X \in \class{X}_j} X$ is determined, and we write $V_j := U_j \setminus (U_0 \cup \cdots \cup U_{j-1})$. Note that the union of the surviving objects $U := \bigcup_{j=0}^{\log n} U_j = \bigcup_{j=0}^{\log n} V_j$ is a disjoint union of $V_j$'s, thus $\vol(U) = \sum_{j=0}^{\log n} \vol(V_j)$.
    
    Now let us consider a call to \est. By Lemma~\ref{lem:decremental}, each $S_j$ is a $(\frac{18}{n^2}, 6n)$-digest of $U_j$. Note that the algorithm computes $R_j := V_j \cap S_j$, so by Corollary~\ref{cor:digest}, we have $2^{L_j} |R_j| \in \vol(V_j) \pm \eps \vol(U_j)$ with probability at least $1 - \frac{18}{n^2} - \frac{12n}{n^4} = 1 - O(n^{-2})$. By a union bound, this holds for all $j$ simultaneously with probability $1 - O(n^{-1})$. Under this event,
    \[ \sum_{j=0}^{\log n} 2^{L_j}|R_j| \in \vol(U) \pm \eps \sum_{j=0}^{\log n} \vol(U_j). \]
    We can crudely bound the additive error by $\eps (1+\log n)\vol(U) = \hat \eps \vol(U)$, just as desired.

    Next we analyze the running time. We will charge the cost of \del to insertions. Note that in \del, it takes time $O(\log^2(n))$ to compute the bin $j$ containing $X$, if for each $j$ we store (pointers to) all objects in $\class{X}_j$ in a binary search tree. The last for-loop takes time $O(\log^2(n) / \eps^2)$. To bound the running time of the ``if refreshed'' block, we change our viewpoint and focus on the sequence of deletions occurring between two merges of a particular bin $\class{X}_j$.
    \begin{itemize}
        \item The total cost of $\class{X}_j.\del$ is $O(2^j \log^2(n/\eps) / \eps^2)$ in expectation by Lemma~\ref{lem:decremental}.  
        \item With probability $1 - O(n^{-2})$, there are $O(\log(n/\eps))$ refreshes between two merges of bin $\class{X}_j$ by Lemma~\ref{lem:decremental}, so the if-block is executed $O(\log(n/\eps))$ times. Each time, the algorithm recomputes $n_x$ for every $x \in S_j$ by testing against at most $\sum_{k = 0}^{j-1} 2^k \leq 2^j$ objects, which takes time $O(2^j \log(n) / \eps^2)$. Therefore, the total contribution of the if-block is $O(2^j \log^2(n/\eps) / \eps^2)$ in expectation.
    \end{itemize}
    Therefore, over the sequence of deletions between two merges of $\class{X}_j$, deletions from $\class{X}_j$ take expected total time $O(2^j \log^2(n/\eps) / \eps^2)$. Note that at least $2^j$ insertions occur between two merges of bin $\class{X}_j$. We can thus charge the cost of deletions to insertions. Each insertion is charged $O(\log n)$ times (once for every bin), so we charge each insertion with a cost of $O(\log^3(n/\eps) / \eps^2)$ in expectation.

    Now we analyze the running time of \ins. It depends on the value $h$. By Lemma \ref{lem:decremental}, the merge step costs $O \left( \sum_{j = 0}^h 2^j \log^2(n/\eps) / \eps^2 \right) = O \left( 2^h \log^2(n/\eps) / \eps^2 \right)$, which can be expensive in the worst case. However, bin $\class{X}_h$ is merged at most once every $2^h$ updates, so it contributes amortized cost $O(\log^2(n/\eps) / \eps^2)$ to each insertion. Summing over all $h$, the amortized cost of merges is $O(\log^3(n/\eps) / \eps^2)$ per insertion.
    Next we consider the foreach-loops. Since each digest contains $O(\log(n) / \eps^2)$ points by Lemma~\ref{lem:decremental}, and the algorithm spends constant time on each point, these loops take time $O(\log^2(n) / \eps^2)$.
    To summarize, \ins runs in amortized expected time $O(\log^3(n/\eps) / \eps^2) = O(\log^5(n/\eps) / \hat \eps^2)$ (and \del runs in amortized time 0).
    
    The running time of \est is clearly $O(\log^2(n) / \eps^2) = O(\log^4(n)/\hat \eps^2)$.
    
    The space of bin $\class{X}_j$ is $O(2^j + \log(n) / \eps^2)$ by Lemma~\ref{lem:decremental}, and we can charge the space of counters $n_x$ to it. In total the space complexity is $O\left( \sum_{j=0}^{\log n} (2^j + \log(n) / \eps^2) \right) = O(n + \log^4(n) / \hat \eps^2)$.
\end{proof}

Theorem~\ref{thm:dynamicalgo} follows by combining Theorem~\ref{thm:dynamicintermediate} with the ``moreover'' part in Lemma~\ref{lem:reduce-aspect-ratio}. Neither the space bound nor the time bound depends on the aspect ratio of the input.

\section{Streaming Algorithm With Insertions and Suffix Queries}
\label{sec:window}

This section presents our streaming algorithm with insertions and suffix queries (Algorithm~\ref{alg:window}).
In principle, we could use the language of digests (Definition~\ref{def:digest}) to analyze it. But here we choose to apply a simpler lemma that avoids the heavy machinery:
\begin{lemma}
    \label{lem:select-sample}
    Let $U$ be a measurable set with $\level(U) \in \{a,a+1,\ldots,b\}$ and consider (possibly dependent) random sets $S^{(a)}, S^{(a+1)}, \dots, S^{(b)}$ where each $S^{(l)}$ is an $l$-sample of $U$. Define a random variable $L := \min \set{l \in \{a,\ldots,b\} : |S^{(l)}| \leq \frac{100 \log n}{\eps^2}}$. Then with probability $1 - O(n^{-1})$ we have $\level(U)-2 \leq L \leq \level(U)$ and $2^L |S^{(L)}| \in (1 \pm \eps) \vol(U)$.
\end{lemma}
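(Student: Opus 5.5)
The plan is to avoid conditioning on the random index $L$ altogether. Instead, we show that a few fixed events, each concerning a \emph{fixed} level $l$, together force the conclusion; then we take a union bound over them. Write $\lambda := \level(U)$.

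We use three events. For each relevant fixed $l$, $S^{(l)}$ is an $l$-sample of $U$ by hypothesis, so $|S^{(l)}| \sim \Pois(\vol(U)/2^l)$ and Poisson concentration (as in the proofs of Lemma~\ref{lem:l-sample} and Lemma~\ref{lem:ensemble}) applies.
\begin{enumerate}[(1)]
\item \emph{The level $\lambda$ qualifies.} Since $\vol(U)/2^\lambda < 64\log n/\eps^2$, we have $|S^{(\lambda)}| \le (1+\eps)\cdot 64\log n/\eps^2 \le 100\log n/\eps^2$ except with probability $O(n^{-8})$. This uses Lemma~\ref{lem:l-sample} with $l=\lambda$, together with $\eps \le 1/2$ (the constants are otherwise a bit loose, but $\eps$ is small in context).
\item \emph{No level below $\lambda-2$ qualifies.} For each $l \in \{a,\dots,\lambda-3\}$ the mean is $\vol(U)/2^l \ge 8\cdot 32\log n/\eps^2 = 256\log n/\eps^2$. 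By Lemma~\ref{lem:l-sample} (valid since $l \le \lambda$), $|S^{(l)}| \ge (1-\eps)\cdot 256\log n/\eps^2 > 100\log n/\eps^2$ except with probability $O(n^{-8})$.
\item \emph{The estimate is accurate at the three candidate levels.} For each $l \in \{\lambda-2,\lambda-1,\lambda\}\cap\{a,\dots,b\}$, Lemma~\ref{lem:l-sample} gives $2^l|S^{(l)}| \in (1\pm\eps)\vol(U)$ except with probability $O(n^{-8})$.
\end{enumerate}

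Under all three events, $L$ is well defined, because $\lambda \in \{a,\dots,b\}$ qualifies, so $L \le \lambda$. Event (2) gives $L \ge \lambda-2$. Since $L$ then lies among the three levels covered by event (3), $2^L|S^{(L)}| \in (1\pm\eps)\vol(U)$ holds regardless of how $L$ depends on the samples.

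The main obstacle is the union bound in event (2), because the range $\{a,\dots,b\}$ is not bounded by the statement. There are two ways to handle it. Either the range has length $\poly(n)$ (in all applications it is $O(\log(n/\eps))$ levels by Lemma~\ref{lem:reduce-aspect-ratio}), and the union bound over $O(n^{-8})$ failures is fine. Or we sum the tails directly: for $l = \lambda-3-k$ the mean grows like $2^k\cdot 256\log n/\eps^2$, so the failure probability decays like $\exp(-\Omega(2^k\log n))$ and the series sums to $O(n^{-8})$ independently of $a$. I would use the second argument to keep the lemma self-contained. Combining everything, the total failure probability is $O(n^{-8}) \subseteq O(n^{-1})$.
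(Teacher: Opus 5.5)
Your proof is correct and is essentially the paper's argument: you use only fixed-level events and never condition on $L$. You show that level $\level(U)$ qualifies while no level at most $\level(U)-3$ does, and you handle the unbounded range of levels via the doubly exponential decay of the Poisson tails, exactly as the paper does. The only difference is that the paper uses the constant-factor events $2^l|S^{(l)}|\in(1\pm 0.5)\vol(U)$ for the threshold comparisons, which makes the argument valid for every $\eps\le 1$ and removes your extra assumption $\eps\le 1/2$.
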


\begin{proof}
    For each $a \leq l \leq \level(U)$, define two events
    \[
        E_l := \set{2^l |S^{(l)}| \in (1 \pm \eps) \vol(U)} \quad\text{and}\quad
        F_l := \set{2^l |S^{(l)}| \in (1 \pm 0.5) \vol(U)}.
    \]
    Then $\Pr(\overline{E_l}) \leq O(n^{-8})$ by Lemma~\ref{lem:l-sample}. Similarly as in Lemma~\ref{lem:l-sample} we can bound 
    \begin{align*}
        \Pr(\overline{F_l})
        &= \Pr \left( \left| 2^l |S^{(l)}| - \vol(U) \right| > \tfrac 12 \vol(U) \right) \\
        &\leq 2\exp\left( \frac{-(1/2)^2 \vol(U)}{4 \cdot 2^l} \right)
        \leq 2\exp\left( \frac{-2 \log(n) \cdot 2^{\level(U)}}{2^l \cdot \eps^2} \right)
        \leq O(n^{-2^{\level(U)-l}}),
    \end{align*}
    where we used the inequality $\frac{32 \log n}{\eps^2} \leq \frac{\vol(U)}{2^{\level(U)}}$ from the definition of $\level(U)$, and $\eps \le 1$. A union bound now shows that all events $F_l$ for $l = a,\ldots,\level(U)$ simultaneously hold with probability at least $1 - O(n^{-1})$. Hence, we have
    \[
        \Pr\left( \bigcap_{l=\level(U)-2}^{\level(U)} E_l \cap \bigcap_{l=a}^{\level(U)} F_l \right)
        \geq 1 - O(n^{-1}).
    \]
    Conditioned on this event,
    \begin{itemize}
        \item For all $a \leq l \leq \level(U) - 3$, $\card{S^{(l)}} \geq 0.5 \cdot \frac{\vol(U)}{2^l} \geq 0.5 \cdot 8 \cdot \frac{32 \log n}{\eps^2} = \frac{128 \log n}{\eps^2}$. Hence $L \geq \level(U) - 2$.
        \item $\card{S^{\level(U)}} \leq 1.5 \cdot \frac{\vol(U)}{2^{\level(U)}} \leq 1.5 \cdot \frac{64 \log n}{\eps^2} \leq \frac{96 \log n}{\eps^2}$. Hence $L \leq \level(U)$.
        \item $2^l |S^{(l)}|$ yields the desired approximation for each $\level(U) - 2 \le l \le \level(U)$.
    \end{itemize}
    These together prove the claim.
\end{proof}

\begin{algorithm}[htb]
    \caption{Streaming algorithm for union volume estimation}
    \label{alg:window}
    \Parameter{$n, \eps$}
    \Function{\initialize{}}{
        $S^{(l)} := \emptyset$ and $s^{(l)} := 0$ for all $0 \leq l \leq 5 \log(3n/\eps)$\;
        $t := 0$\;
        $C := 100 \log(n)/\eps^2$\;
    }
    \Function{\ins{$X$}}{
        $t := t + 1$\;
        \For{$l = 0, \dots, 5 \log(3n/\eps)$}{
            \ForEach{$x \in S^{(l)}$}{
                \If{$X.\contains(x)$}{
                    remove $x$ from $S^{(l)}$
                }
            }
            sample $N \sim \Pois(X.\size / 2^l)$\;
            \If{$N \le C$}{
                binary search for the smallest $a$ with $N + |\{ x \in S^{(l)} \colon \mathrm{time}(x) \ge a \}| \le C$\;
                remove all $x \in S^{(l)}$ with $\mathrm{time}(x) < a$\;
                $s^{(l)} := a$\;
                \For{$i = 1, \dots N$}{
                    $x := X.\sample()$\;
                    set $\mathrm{time}(x) := t$ and add $x$ to $S^{(l)}$\;
                }
            }\Else{
                $s^{(l)} := t+1$, $S^{(l)} := \emptyset$\;
            }
        }
    }
    \Function{\est{$s$}}{
        $L := \min \set{l : s^{(l)} \leq s}$\;
        $R := \set{x \in S^{(L)} : \mathrm{time}(x) \geq s }$\;
        \Return $2^L |R|$\;
    }
\end{algorithm}

\begin{theorem}
\label{thm:windowalgo2}
    Algorithm~\ref{alg:window} implicitly maintains a sequence of objects $X_1, \dots, X_t$ where each object has volume in $[(3n/\eps)^2, (3n/\eps)^4]$. The algorithm supports update \ins{$X$}, which is given oracle access to the object $X$ and appends $X$ (as $X_{t+1}$) to the sequence, and query \est{$s$}, which outputs a $(1\pm\eps)$-approximation to $\vol(X_s \cup \cdots \cup X_t)$ with high probability. The update time and space complexity are $O(\log(n/\eps) \log(n)/\eps^2)$, and the query time is $O(\log(n)/\eps^2)$.
\end{theorem}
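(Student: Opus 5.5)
The plan is to reduce correctness to Lemma~\ref{lem:select-sample}, applied to $U := X_s \cup \cdots \cup X_t$ and the truncated sets $R^{(l)} := \{x \in S^{(l)} : \mathrm{time}(x) \ge s\}$. The first step is to fix $s \le t$ and couple the algorithm with an idealized, unbounded process. For each level $l$, let $T^{(l)}$ be the set obtained by running the Meel--Vinodchandran--Chakraborty erase-and-resample rule with no cap and no truncation. On insertion of $X_i$, remove the points of $T^{(l)}$ contained in $X_i$, then add $N_i \sim \Pois(\vol(X_i)/2^l)$ fresh samples of $X_i$ with timestamp $i$. This process uses the same random choices as Algorithm~\ref{alg:window}.

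Exactly as in the proof of Lemma~\ref{lem:erase-resample}, the points of $T^{(l)}$ with timestamp $\ge s$ equal $\bigcup_{j=s}^{t} X_j^{(l)} \setminus (X_{j+1}\cup\cdots\cup X_t)$. This is a union of independent $l$-samples with disjoint supports, so $\{x\in T^{(l)} : \mathrm{time}(x)\ge s\} \sim \Pois(U,l)$.

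The second step is the key invariant linking the real algorithm to this idealized one, and I expect it to be the main obstacle. I would prove by induction on $t$ that
\begin{itemize}
\item[(a)] $S^{(l)} = \{x \in T^{(l)} : \mathrm{time}(x) \ge s^{(l)}\}$, and
\item[(b)] $s^{(l)}$ is the smallest timestamp $a$ such that $|\{x\in T^{(l)}:\mathrm{time}(x)\ge a\}| \le C$.
\end{itemize}
Nonemptiness issues at $a=t+1$ are trivial.

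The induction step uses two observations. First, erasing by $X$ acts identically on $T^{(l)}$ and on its suffix. Second, the suffix counts $|\{x\in T^{(l)}:\mathrm{time}(x)\ge a\}|$ never increase for a fixed $a \le t$ other than through new samples, and those new samples all have timestamp $t+1$, so monotonicity in $a$ is preserved. Both branches of the algorithm (the binary search branch and the case $N > C$) then match (b).

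With this invariant, $s^{(l)} \le s$ holds if and only if $|R^{(l)}_{\mathrm{ideal}}| \le C$, where $R^{(l)}_{\mathrm{ideal}}$ is the $\ge s$ suffix of $T^{(l)}$. In that case the algorithm's $R$ equals this ideal suffix. Consequently $L = \min\{l : |\Pois(U,l)\text{-sample}| \le C\}$, which is precisely the selection rule of Lemma~\ref{lem:select-sample}.

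The third step checks the range hypothesis of Lemma~\ref{lem:select-sample} with $a=0$ and $b = 5\log(3n/\eps)$. Since $\vol(U) \in [(3n/\eps)^2,\, n(3n/\eps)^4]$, we get $\level(U) \ge 0$ for small $\eps$. We also need the top level $b$ to be selectable; there the expected sample size is $O(1)$ if $\level(U) \le b$, which holds because $\vol(U)/2^b \le n(3n/\eps)^{-1} < 32\log n/\eps^2$.

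Applying Lemma~\ref{lem:select-sample} then gives a $(1\pm\eps)$-approximation with probability $1-O(n^{-1})$. This bound holds for each fixed query, and we can amplify to high probability by adjusting constants, for example by using $C$ with a larger constant together with the $O(n^{-2^{\level(U)-l}})$ tails.

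The last step is the complexity analysis. Each $|S^{(l)}| \le C = O(\log n/\eps^2)$, and there are $O(\log(n/\eps))$ levels, so the space is $O(\log(n/\eps)\log n/\eps^2)$. The update time per level is $O(C)$: this covers the containment tests, the binary search over timestamps (stored sorted, since points are appended in time order), and sampling $N \le C$ points. Summing over levels gives $O(\log(n/\eps)\log n/\eps^2)$.

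A query finds $L$ by scanning or binary searching over the levels. One needs $\{l : s^{(l)}\le s\}$ to behave well, but scanning $O(\log(n/\eps))$ levels is cheap anyway. The query then filters $S^{(L)}$, giving query time $O(\log n/\eps^2)$ up to the level scan.
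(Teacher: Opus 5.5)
Your route is the paper's. Your $\{x\in T^{(l)}:\mathrm{time}(x)\ge a\}$ is its $U^{(l)}[a,t]$, and your (a), (b) are its invariants (ii), (i). Invariant (a) is fine provided the binary search ranges only over $a\ge s^{(l)}_{\mathrm{old}}$. If the search may go lower, (a) already fails: take disjoint $X_1,X_2$ with $N_1>C\ge N_2$ at some level. Then $s^{(l)}$ drops below $2$, while $S^{(l)}=X_2^{(l)}$ no longer contains $X_1^{(l)}\subseteq U^{(l)}[1,2]$.

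The genuine gap is (b). Monotonicity in $a$ is automatic, since the suffixes are nested. What (b) actually needs is that every $a<s^{(l)}_{\mathrm{old}}$ still has suffix count $>C$ after inserting $X_{t+1}$. Erasure can destroy this, and the algorithm cannot notice because it has already discarded those points. Concretely, insert the same object twice, $X_2=X_1$, at a level with $N_1>C\ge N_2$:
\begin{itemize}
\item after time~$1$ the else-branch leaves $s^{(l)}=2$ and $S^{(l)}=\emptyset$;
\item at time~$2$ the search returns $a=2$;
\item but the ideal set is $X_2^{(l)}$ alone, so the suffix count at $a=1$ is $N_2\le C$, and (b) demands $s^{(l)}=1$.
\end{itemize}
If $\vol(X_1)/2^l\approx C$ (so $l=\level(U)-1$), this happens with constant probability. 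Hence for the query $s=1$ your claim that $L$ coincides with the selection rule of Lemma~\ref{lem:select-sample} really fails. The paper's proof asserts the same invariant and verifies it only for $a\ge s^{(l)}_{\mathrm{old}}$, so following it does not close this hole.

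The theorem survives with a one-sided argument.
\begin{itemize}
\item \emph{Lower bound on $L$.} Since $|S^{(l)}|\le C$ always and the suffixes are nested, (a) gives: $s^{(l)}\le s$ implies $|U^{(l)}[s,t]|\le C$. So $L$ is at least the lemma's selection, and the events $F_l$ from the proof of Lemma~\ref{lem:select-sample} give $L\ge\level(U)-2$.
\item \emph{Upper bound on $L$.} Let $\lambda:=\level(U)$; with the restricted search, $s^{(\lambda)}$ is non-decreasing. At the first time $t'$ it exceeds $s$, either the else-branch fired (so $|X^{(\lambda)}_{t'}|>C$), or the search rejected $a-1\ge s\ge s^{(\lambda)}_{\mathrm{old}}$. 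In both cases $t'\ge s$ and $|U^{(\lambda)}[s,t']|>C$.
\item \emph{Probability bound.} Each $U^{(\lambda)}[s,t']\sim\Pois(U[s,t'],\lambda)$ has mean at most $\vol(U[s,t])/2^\lambda<64\log n/\eps^2$. So it exceeds $C=100\log n/\eps^2$ with probability $O(n^{-6})$. A union bound over the at most $n$ times $t'$ gives $s^{(\lambda)}\le s$ with high probability.
\item \emph{Conclusion.} Then $L\in[\lambda-2,\lambda]$, $R=U^{(L)}[s,t]$ by (a), and the events $E_l$ finish the proof.
\end{itemize}
Your complexity analysis is fine. No amplification or change of $C$ is needed, since $1-O(n^{-1})$ already counts as high probability here.
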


\begin{proof}
    Since each object has volume in $[(3n/\eps)^2, (3n/\eps)^4]$, their union has volume in $[(3n/\eps)^2, (3n/\eps)^5]$, so the level of the union (as in Definition~\ref{def:level}) is in $[0, 5\log(3n/\eps)]$. In particular, the correct level is considered by some choice of $l$ in the algorithm.
    
    Let us denote $U[a,b] := \bigcup_{i=a}^b X_i$, with the understanding that $U[a,b] = \emptyset$ for $a > b$. For the sake of analysis, we define the following random sets for $l = 0, \dots, 5 \log(3n/\eps)$:
    \begin{itemize}
        \item $X_i^{(l)} \sim \Pois(X_i, l)$ for every $i \in [n]$.
        \item $U^{(l)}[a,b] := \bigcup_{i=a}^{b} ( X_i^{(l)} \setminus U[i+1,b] )$ for every $a \leq b$.
    \end{itemize}
    Observe that $U^{(l)}[a,b]$ is a union of independent $l$-samples $X_i^{(l)} \setminus U[i+1,b] \sim \Pois(X_i \setminus U[i+1,b], l)$, whose supports are mutually disjoint, and the union of supports is $U[a,b]$. Therefore $U^{(l)}[a,b] \sim \Pois(U[a,b], l)$. Also note that $U^{(l)}[1,t] \supseteq \cdots \supseteq U^{(l)}[t,t]$.

    Pretending that the for-$i$-loop of \ins is always executed, we may interpret $X_i^{(l)}$ as the set of points sampled at iteration $l$ of \ins{$X_i$}. We claim two invariants over time:
    \begin{enumerate}[(i)]
        \item $s^{(l)} = \min \set{a : \card{U^{(l)}[a,t]} \leq C }$;
        \item $S^{(l)} = U^{(l)}[s^{(l)}, t]$.
    \end{enumerate}
    
    The claims apparently hold after initialization ($t = 0$), so we focus on the call \ins{$X_i$} at time $t \geq 1$. Consider an arbitrary iteration $l$ of the outer loop. By the invariant, at the beginning of iteration $l$ we have $S^{(l)} = U^{(l)}[s^{(l)}, t-1]$. The foreach-loop removes all points in $X_t$, so at the end of the foreach-loop we have $S^{(l)} = U^{(l)}[s^{(l)}, t-1] \setminus X_t$. Next $N \sim \Pois(X_t.\size/2^l)$ is sampled, and we interpret it as determining $|X_t^{(l)}|$ beforehand---the actual $X_t^{(l)}$ is realized later in the for-$i$-loop. 
    
    Observe that for any $s^{(l)} \le a \le t$ we have
    \begin{align*}
        U^{(l)}[a,t]
        &= X_t^{(l)} \cup (U^{(l)}[a, t-1] \setminus X_t) \\
        &= X_t^{(l)} \cup ((U^{(l)}[s^{(l)}, t-1] \cap U[a, t-1]) \setminus X_t) \\
        &= X_t^{(l)} \cup ((U^{(l)}[s^{(l)}, t-1] \setminus X_t) \cap U[a, t-1]) \\
        &= X_t^{(l)} \cup (S^{(l)} \cap U[a,t-1])  \\
        &= X_t^{(l)} \cup \{x \in S^{(l)} \colon \mathrm{time}(X) \ge a \},
    \end{align*}
    and thus $|U^{(l)}[a,t]| = N + |\{x \in S^{(l)} \colon \mathrm{time}(X) \ge a \}|$. It follows that the binary search in the if-block correctly updates $s^{(l)}$ according to invariant (i). It also updates $S^{(l)}$ to still satisfy $S^{(l)} = U^{(l)}[s^{(l)}, t-1] \setminus X_t$ (for the new value of $s^{(l)}$). The for-$i$-loop now samples $X_t^{(l)}$ of size $N = |X_t^{(l)}|$ and adds it to $S^{(l)}$, so we obtain $S^{(l)} = U^{(l)}[s^{(l)}, t]$, so invariant (ii) holds. Correctness in the else-case also clearly holds.

    The running time of \ins is easy to bound.
    Consider a fixed iteration $l$ of some insertion. The foreach-loop takes time $O(|S^{(l)}|) = O(C)$, due to invariants (i) and (ii). 
    If we store the set $S^{(l)}$ in a binary search tree ordered by $\mathrm{time}(x)$, then the binary search takes time $O(\log n)$, and in the same time bound we can remove all items with $\mathrm{time}(x) < a$.
    The for-$i$-loop takes time $O(N)$ and is only executed if $N \le C$, so it runs in time $O(C)$.
    As there are $O(\log(n/\eps))$ iterations over~$l$, \ins runs in time $O(\log(n/\eps) \cdot (C + \log n)) = O(\log(n/\eps) \log(n)/\eps^2)$.

    \smallskip
    To show the correctness of \est{$s$} at time $t$ we make two more claims:
    \begin{enumerate}[(i)]
        \setcounter{enumi}{2}
        \item $L = \min \set{ l : \card{U^{(l)}[s,t]} \leq C }$;
        \item $R = U^{(L)}[s,t]$.
    \end{enumerate}
    Inspecting \est{$s$} we have $L = \min \set{l : s^{(l)} \leq s}$. We have $s^{(l)} \leq s$ if and only if $\card{U^{(l)}[s,t]} \leq C$, due to (i) and the monotonicity $\card{U^{(l)}[1,t]} \geq \cdots \geq \card{U^{(l)}[s,t]}$; this proves (iii). Inspecting \est{$s$} we have $R = \set{x \in S^{(L)} : \mathrm{time}(x) \geq s } = S^{(L)} \cap U[s,t]$. Since $S^{(L)} = U^{(L)}[s^{(L)},t]$ by (ii), and $s^{(L)} \leq s$, we obtain (iv).
    
    With the claims established, we apply Lemma \ref{lem:select-sample} on the (dependent) random sets $U^{(l)}[s,t]$ for $0 \le l \le 5 \log(3n/\eps)$. Due to (iii), our $L$ is defined exactly as in the lemma. The lemma implies that $2^L |U^{(L)}[s,t]|$ is a $(1 \pm \eps)$-approximation to $\card{U[s,t]}$ with high probability. Since $|U^{(L)}[s,t]| = |R|$ by (iv), we return a correct estimate with high probability.

    Clearly, the running time of \est is proportional to $O(\log(n/\eps) + C) \leq O(\log(n)/\eps^2)$.

    Finally, the space complexity is $O(\log(n/\eps) \cdot C) = O(\log(n/\eps) \log(n)/\eps^2 )$.
\end{proof}

Theorem~\ref{thm:windowalgo} now follows by combining Theorem~\ref{thm:windowalgo2} with Lemma~\ref{lem:reduce-aspect-ratio}.

We point out that Algorithm~\ref{alg:window} specializes to an insertion-only streaming algorithm if we hardwire $s = 1$ in the queries. In fact, the streaming algorithm by Meel, Vinodchandran, and Chakraborty \cite{MVC21} can be viewed as a space-optimized version of Algorithm~\ref{alg:window}. Some subtle inaccuracies of their proof were addressed in our proof above.

\section{Streaming Algorithm for Convex Bodies With Insertions and Deletions}
\label{sec:convex}
In this section we present a streaming algorithm for convex bodies in $\R^d$ under insertions and deletions. Throughout this section we assume that $d$ is constant. We also assume the following input condition in line with the literature on volume estimation. Let $R \geq r > 0$ be given. We assume that every object is convex, is contained in $[0,R]^d$, and contains a ball of radius $r$. The balls for different objects may differ, and we do not know them a priori. The objects are accessed via oracles as before.

For $\alpha > 0$ and a convex body $X$ with center $o$, we define the \emph{$\alpha$-dilation} of $X$ by $\alpha X := \{ o + \alpha (x-o) : x \in X \}$. Unless otherwise stated, we take $o$ as the center of gravity.

\subsection{Reduction to point counting}
Let us introduce a related problem called \emph{union point counting}: Given a parameter $\Delta \in \N$ and a multiset of objects $\class{X}$ in $\R^d$ in the oracle model, compute a $(1 \pm \eps)$-approximation to the cardinality $\card{[\Delta]^d \cap \bigcup_{X \in \class{X}} X}$. Just like union volume estimation, we study this problem in the dynamic setting where $\class{X}$ undergoes insertions and deletions.

As the parameter $\Delta$ is usually implied from the context, we denote $\discrete{U} := [\Delta]^d \cap U$. To relate union point counting and union volume estimation, we need a lemma about convexity.

\begin{lemma}
    \label{lem:grid}
    Let $X \subset \R^d$ be a convex body that contains a ball of radius $r > 0$. Assume $0 < \delta \leq r/d$ and let $\partial_\delta X := \set{x \in \R^d : \mathrm{dist}(x,\partial X) \leq \delta}$ be the $\delta$-neighborhood of the boundary $\partial X$. Then we have $\vol(\partial_\delta X) \leq (2 \mathrm{e} d \delta / r) \cdot \vol(X)$.
\end{lemma}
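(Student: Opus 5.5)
The plan is to split $\partial_\delta X$ into an inner part $\partial_\delta X \cap X$ and an outer part $\partial_\delta X \setminus X$, and to control both by comparing $X$ with dilations about the center $c$ of a ball $B(c,r) \subseteq X$. (We dilate about $c$ here rather than the center of gravity; the statement does not depend on this choice.) The one geometric fact needed is the following. Let $h_X(u)$ be the support function of $X$ with respect to $c$. Since $B(c,r) \subseteq X$, we have $h_X(u) \geq r$ for every unit direction $u$. Consequently, for $\lambda \ge 1$ the dilation $\lambda X$ about $c$ has support function $\lambda h_X$, which is at least $h_X + (\lambda-1) r$. This means $\lambda X \supseteq X + B(0,(\lambda-1)r)$, the Minkowski sum.

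For the outer part, take $\lambda := 1 + \delta/r$. Every point within distance $\delta$ of $X$ then lies in $X + B(0,\delta) \subseteq \lambda X$, so $\vol(\partial_\delta X \setminus X) \le (\lambda^d - 1)\vol(X)$.

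For the inner part, take $\mu := 1 - \delta/r > 0$; this is positive because $\delta \le r/d \le r$. I claim that every point $y \in \mu X$ has distance more than $\delta$ to $\partial X$, or at least that $\mu X \subseteq X \setminus \partial_\delta X$ up to a null set. To see it, apply the previous fact to the convex body $\mu X$, which contains $B(c,\mu r)$, with factor $1/\mu$. This gives
\[
    X = \mu^{-1}(\mu X) \supseteq \mu X + B\bigl(0, (\mu^{-1}-1)\mu r\bigr) = \mu X + B(0, r-\mu r) = \mu X + B(0,\delta).
\]
So every point of $\mu X$ has a $\delta$-ball around it inside $X$, hence its distance to $\partial X$ is at least $\delta$; the set where equality holds is negligible, or one can replace $\mu$ by a slightly smaller value. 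It follows that $\vol(\partial_\delta X \cap X) \le (1-\mu^d)\vol(X)$.

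Adding the two parts,
\[
    \vol(\partial_\delta X) \le \bigl((1+\delta/r)^d - (1-\delta/r)^d\bigr)\vol(X).
\]
To finish, let $t := \delta/r \le 1/d$. Then $(1+t)^d - 1 \le \mathrm{e}^{dt} - 1 \le (\mathrm{e}-1)\,dt$, using convexity of $\mathrm{e}^{s}$ on $[0,1]$ with $dt \le 1$. Also $1-(1-t)^d \le dt$ by Bernoulli's inequality. The sum is at most $\mathrm{e}\, d t \le 2\mathrm{e}\, d\delta/r$, which is the claimed bound with room to spare.

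The main obstacle is the support-function containment $\lambda X \supseteq X + B(0,(\lambda-1)r)$ together with its inner version. I would justify it by checking support functions in every direction, since for convex bodies Minkowski addition corresponds to adding support functions and containment is equivalent to pointwise comparison of support functions.
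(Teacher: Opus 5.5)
Your proof is correct. The outer shell is handled as in the paper, but the inner shell is handled differently. For the outer part, the paper observes that $(\delta/r)X$, dilated about the ball center, contains a $\delta$-ball, and uses $X + (\delta/r)X = (1+\delta/r)X$ for convex $X$. That is the same fact as your support-function containment $\lambda X \supseteq X + B(0,(\lambda-1)r)$.

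For the inner part $\partial_\delta X \cap X$, the paper does not give a direct bound. It asserts $\vol(\partial_\delta X \cap X) \le \vol(\partial_\delta X \setminus X)$ for convex $X$, with only a sketch as justification: approximate $X$ by polytopes and compare inner and outer volume in the cone over each facet. It then doubles the outer bound $(\mathrm{e}\, d\delta/r)\vol(X)$, which is where the factor $2\mathrm{e}$ comes from.

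You instead apply the same dilation fact to the contracted body $(1-\delta/r)X$. This shows that the contracted body stays at distance at least $\delta$ from $\partial X$, which bounds the inner shell by $\bigl(1-(1-\delta/r)^d\bigr)\vol(X)$. Your handling of the equality case, by shrinking $\mu$ slightly and passing to the limit, is sound.

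Your route uses a single geometric ingredient twice and is fully rigorous, whereas the facet-cone comparison is the least carefully justified step of the paper's proof. It also gives the sharper constant $\mathrm{e}\, d\delta/r$. The paper's inner-versus-outer comparison has the advantage of being independent of the inscribed ball, and it is true in general: it follows from Steiner's formula together with monotonicity of surface area under inclusion of convex bodies. It also lets the paper do only one numerical estimate.
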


\begin{proof}
    We take the center of the $r$-ball in $X$ as the center of $X$. Consider the sets $S_\text{in} := \partial_\delta X \cap X$ and $S_\text{out} := \partial_\delta X \setminus X$. It is obvious that $\vol(S_\text{in}) \leq \vol(S_\text{out})$ because $X$ is convex. (For a proof, it suffices to consider convex polytopes as they can approximate convex bodies within arbitrary precision. For each cone formed by the center and a facet of the polytope, the outer volume is larger than the inner volume.) So it remains to bound $\vol(S_\text{out})$.

    To this end, observe that $S_\text{out} \subseteq (X + (\delta/r) X) \setminus X$ where the addition is the Minkowski sum. Indeed, $(\delta/r) X$ contains a $\delta$-ball by definition, so the Minkowski sum covers the neighborhood $\partial_\delta X$.

    Since $X$ is convex, we have $X + (\delta/r) X = (1+\delta/r) X$. Hence, we can bound
    \begin{align*}
        \vol(S_\text{out})
        &\leq \vol((1+\delta/r)X) - \vol(X) \\
        &= ((1+\delta/r)^d - 1) \vol(X) \\
        &\leq (1+\delta/r)^d \left( 1 - (1 - \delta/r)^d \right) \vol(X) \\
        &\leq (1+\delta/r)^d \cdot (d \delta / r) \cdot \vol(X) \\
        &\leq (\mathrm{e} d \delta/r) \cdot \vol(X),
    \end{align*}
    where the second last step applied a union bound, and the last step used $\delta \leq r/d$ and the inequality $1+x \leq e^x$. The lemma follows after taking $S_\text{in}$ into account.
\end{proof}

\begin{lemma}
    \label{lem:discretization}
    Assume that each input object is a convex body in $[0,R]^d$ and contains a ball of radius $r>0$. If there is a dynamic algorithm for union point counting with update and query time $T(n,\Delta,\eps)$ and space $S(n,\Delta,\eps)$, then there is a dynamic algorithm for union volume estimation with update and query time $O(T(n, \Delta', \frac{\eps}{3}))$ and space $O(S(n, \Delta', \frac{\eps}{3}))$ where $\Delta' = O(\frac{n R}{\eps r})$.
\end{lemma}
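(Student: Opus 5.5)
The reduction rescales space so that a fine grid approximates volume. We choose a grid spacing $\delta$ and, via the map $x \mapsto x/\delta$, identify the lattice $\delta\Z^d \cap (0,R]^d$ with $[\Delta']^d$, where $\Delta' := \lceil R/\delta \rceil$. Each object $X$ is fed to the point-counting algorithm as the scaled object $X/\delta$. Its oracles are obtained from those of $X$ at unit overhead: volume is multiplied by $\delta^{-d}$, samples are scaled, and membership queries are unscaled first. To answer a volume query, we multiply the point-counting estimate by $\delta^d$. The time and space bounds are then immediate, so the whole task is to pick $\delta$ so that $\delta^d \cdot |\discrete{U}|$ is a $(1\pm\eps/3)$-approximation of $\vol(U)$, where $U$ is the current union. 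Composing this with the $(1\pm\eps/3)$ guarantee of the counting algorithm gives a $(1\pm\eps)$ guarantee overall, since $(1\pm\eps/3)^2 \subseteq (1\pm\eps)$ for $\eps \le 1$.

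For the comparison, assign each grid point $p$ the axis-aligned cube $Q_p$ of side $\delta$ having $p$ as its top corner. These cubes tile $(0,R]^d$, and each has diameter $\sqrt d\,\delta$. We then compare $\delta^d|\discrete{U}| = \vol\bigl(\bigcup_{p\in U} Q_p\bigr)$ with $\vol(U)$. Any point that lies in exactly one of the two sets $\bigcup_{p\in U}Q_p$ and $U$ lies in a cube $Q_p$ that meets both $U$ and its complement. Every point of such a cube is within distance $\sqrt d\,\delta$ of $\partial U$. Moreover, $\partial U \subseteq \bigcup_{X} \partial X$, so such a point lies in $\partial_{\sqrt d\delta}X$ for some input object $X$. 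It follows that
\[ \bigl|\delta^d|\discrete{U}| - \vol(U)\bigr| \le \sum_{X\in\class{X}} \vol(\partial_{\sqrt d\delta} X). \]
Applying Lemma~\ref{lem:grid} with $\sqrt d\,\delta \le r/d$ bounds each term by $(2\mathrm{e}d^{3/2}\delta/r)\vol(X)$. Since every object satisfies $\vol(X) \le \vol(U)$ and there are at most $n$ objects, the total error is at most $n\cdot(2\mathrm{e}d^{3/2}\delta/r)\vol(U)$.

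We therefore set $\delta := \eps r/(6\mathrm{e} d^{3/2} n)$. This makes the error at most $(\eps/3)\vol(U)$ and also guarantees $\sqrt d\,\delta \le r/d$, so Lemma~\ref{lem:grid} applies. The resulting grid size is $\Delta' = O(nR/(\eps r))$ for constant $d$, as claimed. The main point to handle carefully is the boundary step: the union $U$ itself need not be convex, so Lemma~\ref{lem:grid} cannot be applied to $U$ directly. The argument above avoids this by charging each error cube to the boundary neighbourhood of a single convex object, and it pays a factor $n$ for doing so. That factor is absorbed into $\delta$, which is why the bound on $\Delta'$ is linear in $n$. A minor additional check is that points on the boundary of $[0,R]^d$ are handled by the half-open cube convention, and since boundaries have measure zero they do not affect the volume comparison.
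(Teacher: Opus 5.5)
Your proof is correct and follows essentially the same route as the paper: rescale so the grid spacing is $\Theta(\varepsilon r/(d^{3/2}n))$, equate the grid count with the volume of the union of grid cells, bound the discrepancy by the sum over objects of the volumes of the cell-diameter neighbourhoods of $\partial X$ via Lemma~\ref{lem:grid}, and absorb the factor $n$ using $\mathrm{vol}(X)\le\mathrm{vol}(U)$. The differences are cosmetic. The paper uses cells centred at grid points and bounds $\mathrm{vol}(U\oplus U')\le\sum_i\mathrm{vol}(X_i\oplus X_i')$ object by object, whereas you use half-open top-corner cells and charge through $\partial U\subseteq\bigcup_X\partial X$; your convention handles points with a zero coordinate a bit more cleanly.
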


\begin{proof}
    We scale up the space (and implicitly the objects) by a factor of $\lambda := \frac{18 d^{3/2} n}{\eps r}$ in each dimension. So now each object is a convex body in $[0, \lambda R]^d$ and contains a ball of radius $\lambda r = \frac{18 d^{3/2} n}{\eps}$. We subdivide the space into unit cubes such that the centers of the cubes lie on the integer grid $[\Delta]^d$, where $\Delta := \lambda R + 1$. We build a dynamic data structure for union point counting on $[\Delta]^d$ with precision parameter $\eps/3$. The insertions and deletions of objects are directly passed to the data structure. To estimate the union volume of the surviving objects, we query the data structure and obtain a count $Z$. We output $Z/\lambda^d$.

    The time and space complexity are clearly as stated. To show correctness, let $X_1, \dots, X_n$ be the scaled objects currently in $\class{X}$. For each $i \in [n]$, let $X_i'$ be the union of cubes whose centers are in $X_i$. Denote $U := \bigcup_{i=1}^n X_i$ and $U' := \bigcup_{i=1}^n X_i'$. Since the counting data structure is correct, $Z \in (1\pm\frac{\eps}{3}) |\discrete{U}|$. On the other hand, $|\discrete{U}| = \vol(U')$. It remains to relate $\vol(U)$ and $\vol(U')$. To this end, we write $\oplus$ for the symmetric difference and crudely bound
    \[
        \vol(U \oplus U')
        = \vol \left( \bigg( \bigcup_{i=1}^n X_i \bigg) \oplus \bigg( \bigcup_{i=1}^n X'_i \bigg) \right)
        \leq \sum_{i=1}^n \vol(X_i \oplus X'_i). \tag{$\star$}
    \]

    We claim that $\vol(X_i \oplus X_i') \leq \frac{\eps}{3n} \vol(X_i)$ for all $i \in [n]$. Indeed, the symmetric difference $X_i \oplus X_i'$ is contained in the neighborhood $\partial_\delta X_i$ for $\delta := \sqrt{d}$. Since $\delta \leq (\lambda r)/d$, we can apply Lemma~\ref{lem:grid} to bound $\vol(\partial_\delta X_i) \leq \frac{2 \mathrm{e} d \delta}{\lambda r} \vol(X_i) < \frac{\eps}{3n} \vol(X_i)$. Now we can continue to bound
    \[
        (\star)
        \leq \frac{\eps}{3n} \sum_{i=1}^n \vol(X_i)
        \leq \frac{\eps}{3n} \sum_{i=1}^n \vol(U)
        = \frac{\eps}{3} \vol(U).
    \]
    Therefore, $\vol(U') \in (1 \pm \frac{\eps}{3}) \vol(U)$, and consequently $Z \in (1\pm \frac{\eps}{3})^2 \vol(U) \subset (1\pm\eps) \vol(U)$.
\end{proof}

Due to Lemma~\ref{lem:discretization}, we focus on solving union point counting in what follows.

\subsection{Preparations}

\begin{theorem}[John ellipsoid \cite{John48}]
    \label{thm:john-ellipsoid}
    Let $X \subset \R^d$ be a convex body. There exists a unique ellipsoid $E \subseteq X$ of maximum volume, and it satisfies $X \subseteq 2d^{3/2} E$. Similarly, there exists a unique ellipsoid $E' \supseteq X$ of minimum volume, and it satisfies $X \supseteq \frac{1}{2d^{3/2}} E'$. The sets $E$ and $E'$ are called the inner and outer \emph{John ellipsoids} of $X$, respectively.
\end{theorem}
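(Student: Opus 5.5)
We prove the two halves in parallel. Each half has three parts: existence, uniqueness, and the dilation bound. The slack in the constant $2d^{3/2}$ (the classical statement has $d$) lets us be crude in the final computation. Throughout, write an ellipsoid as $E = AB^d + c$, where $B^d$ is the unit ball, $A$ is symmetric positive definite and $c \in \R^d$; then $\vol(E) = \det(A)\vol(B^d)$, and its center (hence the center of gravity used for dilation) is $c$. Since dilation commutes with affine maps, every containment claim may be checked after an affine change of coordinates that sends the extremal ellipsoid to $B^d$.

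\emph{Existence and uniqueness.} For the inner ellipsoid, consider the set $K := \{(A,c) : AB^d + c \subseteq X\}$. It is closed, and it is bounded because $X$ is bounded. It is also convex: if $A_iB^d + c_i \subseteq X$ for $i=1,2$, then every point $\frac{A_1+A_2}{2}u + \frac{c_1+c_2}{2}$ with $u \in B^d$ is the midpoint of two points of $X$. Since $X$ has nonempty interior, $K$ contains a pair with $\det A > 0$. Hence the continuous function $\det$ attains a positive maximum on $K$.

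For uniqueness, suppose two maximizers $(A_1,c_1)$ and $(A_2,c_2)$ exist. Strict concavity of $\log\det$ on positive definite matrices gives $\det\frac{A_1+A_2}{2} \geq \sqrt{\det A_1 \det A_2}$, with equality only if $A_1 = A_2$. So $A_1 = A_2 =: A$. If moreover $c_1 \neq c_2$, normalize so that $A = I$. Then $\mathrm{conv}(E_1 \cup E_2) \subseteq X$ is a ``capsule'', and the ellipsoid centered at $\frac{c_1+c_2}{2}$ with semi-axis $1 + |c_1-c_2|/2$ in direction $c_1 - c_2$ and $1$ orthogonally lies in it and has strictly larger volume, a contradiction.

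For the outer ellipsoid, write $E' = \{x : |Mx - b| \leq 1\}$ with $\vol(E') \propto \det(M)^{-1}$. The set of $(M,b)$ such that $X \subseteq E'$ is convex, since $|Mx-b|$ is convex in $(M,b)$. Containing $X$ forces $M$ to be bounded, so $M$ ranges over a compact set once we restrict to $\det M$ bounded below. Strict convexity of $-\log\det$ then gives existence and uniqueness of $M$. Uniqueness of $b$ follows as before: if $b_1 \neq b_2$, averaging the two ellipsoids produces one containing $X$ with the same $M$, which can then be shrunk.

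\emph{Dilation bounds.} For the inner case, normalize $E = B^d$ and suppose some $x \in X$ has $|x| = \rho > 2d^{3/2}$; rotate so that $x = \rho e_1$. Convexity gives $\mathrm{conv}(B^d \cup \{x\}) \subseteq X$. Consider the ellipsoid centered at $\tau e_1$ with semi-axis $1+\tau$ along $e_1$ and semi-axes $\beta < 1$ in the other directions. We check that it fits in the cone $\mathrm{conv}(B^d \cup \{x\})$ for $\beta^2 \approx 1 - C\tau/\rho$, using the tangent-cone condition at the apex. Its volume is then $(1+\tau)\beta^{d-1} \approx 1 + \tau - (d-1)C\tau/(2\rho)$, which exceeds $1$ for small $\tau > 0$ as soon as $\rho > C' d$. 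This contradicts maximality. The crude bound $\rho \leq 2d^{3/2}$ gives ample room for rounding the constants.

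The outer case is the polar analogue. Normalize $E' = B^d$ and suppose $(2d^{3/2})^{-1}B^d \not\subseteq X$. By convexity, $X$ then lies in a halfspace $\{x_1 \leq s\}$ with $s < (2d^{3/2})^{-1}$. The cap-cut body $B^d \cap \{x_1 \leq s\}$ is contained in an ellipsoid centered at $-\tau e_1$ with semi-axis $1-\tau$ along $e_1$ and slightly enlarged semi-axes orthogonally, and a first-order expansion shows its volume is below $1$. This contradicts minimality. Alternatively, the outer statement can be deduced from the inner one by polarity about the center of $E'$.

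\emph{Main obstacle.} The hardest step is the pair of explicit ellipsoid-fitting computations: showing that the perturbed ellipsoid fits inside the cone (respectively contains the cut ball) and then controlling the first-order volume change. I would first try the one-parameter family above, differentiating the volume at $\tau = 0$.
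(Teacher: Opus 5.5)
The paper does not prove this statement. It quotes it from John~\cite{John48} and uses it as a black box in Corollary~\ref{cor:bounding-box}. The constant $2d^{3/2}$ there is a loose form of John's sharp factor $d$, since $dE \subseteq 2d^{3/2}E$ and $\frac{1}{2d^{3/2}}E' \subseteq \frac{1}{d}E'$.

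Your proposal reconstructs the classical variational proof instead, which is a legitimate self-contained route. The citation buys brevity; your argument buys independence from the literature and, once finished, the sharp constant too. Your existence and uniqueness arguments are sound: compactness of the feasible parameters, strict log-concavity of $\det$ to fix the shape, and the capsule (respectively parallelogram-law) argument to fix the center. One proviso: $M$, like $A$, must be taken symmetric positive definite. You state this for $A$ but not for $M$. Every ellipsoid has such a representation by polar decomposition, and without it $M$ and $QM$ ($Q$ orthogonal) give the same ellipsoid, so strict concavity says nothing.

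The dilation bounds, however, are only asserted with ``$\approx$'' and unspecified $C,C'$, and they are the real content. They do close exactly.

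\emph{Inner case.} The support function of $Q=\mathrm{conv}(B^d\cup\{\rho e_1\})$ is $\max(1,\rho u_1)$. That of your $F$ is $h_F(u)=\tau u_1+\sqrt{(1+\tau)^2u_1^2+\beta^2(1-u_1^2)}$.
\begin{itemize}
\item For $u_1=c\le 1/\rho$, squaring $h_F(u)\le 1$ gives exactly $(1+c)\bigl[(1-c)(1-\beta^2)-2\tau c\bigr]\ge 0$. Hence $\beta^2=1-\frac{2\tau}{\rho-1}$ suffices.
\item For $c\ge 1/\rho$, the condition $h_F(u)\le\rho c$ then follows by monotonicity in $c^2$ whenever $\tau<(\rho-1)/2$.
\item The binding directions are $u_1=1/\rho$, i.e.\ the rim where the cone touches the ball, not the apex.
\item The factor $(1+c)$ is what makes the vanishing slack at the contact point $-e_1$ harmless. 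A first-order expansion alone would not certify this.
\end{itemize}
The volume ratio $(1+\tau)\bigl(1-\frac{2\tau}{\rho-1}\bigr)^{(d-1)/2}$ has derivative $1-\frac{d-1}{\rho-1}$ at $\tau=0$, which is positive iff $\rho>d$.

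\emph{Outer case.} With $X\subseteq B^d\cap\{x_1\le s\}$, take center $-\tau e_1$ and semi-axes $1-\tau$ and $\beta$, where $\beta^2=\frac{(1-s)(1-\tau)^2}{1-s-2\tau}$. Then $g(x_1)=\frac{(x_1+\tau)^2}{(1-\tau)^2}+\frac{1-x_1^2}{\beta^2}$ is a convex quadratic with $g(-1)=g(s)=1$. So the spherical part of the cut ball, and hence its convex hull, lies in the ellipsoid. Its log-volume has derivative $-d+\frac{d-1}{1-s}$ at $\tau=0$, which is negative iff $s<1/d$.

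\emph{Polarity aside.} Drop it or justify it. The polar of $E'$ about its center is only the largest ellipsoid in $X^\circ$ \emph{among those with that center}. Ruling out off-center competitors needs John's contact-point characterization, which you have not proved.
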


\begin{theorem}[Welzl \cite{Welzl05}]
    \label{thm:min-ellipsoid}
    There is an algorithm that computes the minimum enclosing ellipsoid of $k$ points in $\R^d$ in $O(k)$ expected time for constant $d$.
\end{theorem}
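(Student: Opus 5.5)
The statement is the classical result of Welzl, and I would prove it by casting the minimum enclosing ellipsoid as a low-dimensional \emph{LP-type} problem and running Welzl's randomized recursive algorithm. Throughout, the dimension $d$ is a constant. Let $D := d(d+3)/2$, which is the number of parameters describing an ellipsoid $\{x : (x-c)^\top A (x-c) \le 1\}$ with $A$ positive definite. For finite sets $P, R$ of points, define $\mathrm{mel}(P,R)$ as the smallest-volume ellipsoid that contains $P \cup R$ and has all points of $R$ on its boundary.

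The algorithm is \textsc{minell}$(P,R)$. If $P=\emptyset$ or $|R| = D$, it computes $\mathrm{mel}(\emptyset,R)$ directly. Otherwise it picks a uniformly random $p\in P$ and sets $E := $ \textsc{minell}$(P\setminus\{p\},R)$. If $p\in E$ it returns $E$; otherwise it returns \textsc{minell}$(P\setminus\{p\},R\cup\{p\})$. The answer to the theorem is \textsc{minell}$(P,\emptyset)$.

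\textbf{Structural lemmas.} Correctness rests on two facts, in this order.
\begin{enumerate}
\item[(a)] \emph{Uniqueness.} If $\mathrm{mel}(P,R)$ exists, it is unique. This extends the uniqueness part of Theorem~\ref{thm:john-ellipsoid} to boundary constraints. I would prove it by convex combination: take two optimal ellipsoids $(A_0,c_0)$ and $(A_1,c_1)$ and form the midpoint $(A_{1/2},c_{1/2})$ in a suitable parametrization. The midpoint ellipsoid still contains every point that both contain, and it still passes through $R$ in the appropriate sense. Strict concavity of $\log\det$ then gives it strictly larger $\det A$, i.e.\ strictly smaller volume, unless the two ellipsoids coincide.
\item[(b)] \emph{Violation lemma.} If $p\notin \mathrm{mel}(P\setminus\{p\},R)$, then $p$ lies on the boundary of $\mathrm{mel}(P,R)$. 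This follows by interpolating continuously between the two ellipsoids, using the same convexity argument as in (a).
\end{enumerate}
By (a) and (b), the recursion computes $\mathrm{mel}(P,R)$ by induction on $|P|$. Two further consequences of the same argument: $\mathrm{mel}$ is determined by at most $D$ boundary points, and monotonicity holds. These are exactly the axioms of an LP-type problem with combinatorial dimension $D$.

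\textbf{Running time by backwards analysis.} Let $T(k,j)$ be the expected number of primitive steps on $|P| = k$ and $|R| = j$. The optimum $\mathrm{mel}(P,R)$ is determined by at most $D-j$ points of $P$. The random $p$ triggers the second recursive call only if it is one of these points, which happens with probability at most $(D-j)/k$. This gives the recurrence
\[
T(k,j) \le T(k-1,j) + O(1) + \frac{D-j}{k}\, T(k-1,j+1).
\]
Solving by induction on $D-j$ gives $T(k,j) \le c_{D-j}\, k$ with $c_{i} = O(1) + i\, c_{i-1}$, so $T(k,0) = O(D!\cdot k) = O(k)$ for constant $d$.

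\textbf{Main obstacle.} The delicate part is the geometry, not the probability. Lemma (a) and (b) need care, because the naive averaging of two ellipsoids in the $(A,c)$ parametrization does not obviously preserve containment or boundary incidence. I would first try Welzl's trick of averaging the quadratic forms $f_\lambda(x) = (1-\lambda) f_0(x) + \lambda f_1(x)$. This preserves the conditions $f \le 1$ on $P$ and $f = 1$ on $R$ linearly, and the resulting set is again an ellipsoid whose volume is controlled via concavity of $\log\det$ after normalization.

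The base case, computing $\mathrm{mel}(\emptyset,R)$ for $|R|\le D$, is a constant-size algebraic optimization. In the real-RAM model with constant $d$ I would treat it as taking $O(1)$ time.
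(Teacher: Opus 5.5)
The paper gives no proof of this statement; it imports it from Welzl, and your outline reconstructs Welzl's own algorithm and backwards analysis. Lemmas (a) and (b) go through once you commit to averaging the quadratic forms $f_\lambda=(1-\lambda)f_0+\lambda f_1$, with one correction. Write $f_\lambda(x)=(x-c_\lambda)^\top A_\lambda(x-c_\lambda)+z_\lambda$ with $z_\lambda\ge 0$. Strict concavity of $\log\det$ gives strictness only when $A_0\neq A_1$. For two translates ($A_0=A_1$, $c_0\neq c_1$), strictness has to come from $z_\lambda>0$.

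The genuine gap is the counting step behind your recurrence: the claim that $\mathrm{mel}(P,R)$ is determined by at most $D-|R|$ points of $P$, and likewise the base case $|R|=D$. Neither follows from the unconstrained problem being LP-type of combinatorial dimension $D$, and both fail for arbitrary $R$. In $\R^3$ (where $D=9$), let $R$ be six points on a circle in the plane $z=0$. Every quadric through $R$ restricts on that plane to a multiple of the circle's equation, so the ellipsoids through $R$ form a $4$-parameter family, not a $3$-parameter one. A basis of $(P,R)$ can then have $4>D-|R|$ points, and $D=9$ points containing six such co-circular points need not determine a unique ellipsoid.

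What rescues the algorithm is an invariant you never state. The evaluation functionals $g\mapsto g(r)$, $r\in R$, on the $(D+1)$-dimensional space of quadratic polynomials stay linearly independent throughout the recursion. To see this, suppose $g(p)=\sum_{r\in R}\lambda_r g(r)$ for all quadratic $g$. Then $p$ lies on the boundary of every ellipsoid whose boundary contains $R$, hence $p\in\mathrm{mel}(P\setminus\{p\},R)$. So $p$ is never a violator and never enters $R$.

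Given the invariant, the ellipsoids through $R$ form a $(D-|R|)$-dimensional projective family. On this family, volume has convex sublevel sets (by the same averaging trick), and each $p\in P$ contributes a half-space. Apply Helly's theorem in dimension $D-|R|$ to the open sublevel set below the optimum together with these half-spaces. This yields $B\subseteq P$ with $|B|\le D-|R|$ and $\mathrm{mel}(B,R)=\mathrm{mel}(P,R)$, which is what your recurrence needs. For $|R|=D$ the family is a single ellipsoid, which justifies the base case.

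You also need a convention for flat (lower-dimensional) ellipsoids. Subcalls such as \textsc{minell}$(\emptyset,R)$ with $|R|\le d$ have no minimum-volume solution.
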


\begin{corollary}
    \label{cor:bounding-box}
    Given a convex body $X \subset \R^d$, we can compute in expected time $O(\log n)$ a rotated box $B \supseteq X$ such that $\vol(B) \leq (12d^4)^d \vol(X)$.
\end{corollary}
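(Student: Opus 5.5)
We will combine Welzl's algorithm (Theorem~\ref{thm:min-ellipsoid}) with the John ellipsoid bounds (Theorem~\ref{thm:john-ellipsoid}), applied to a random point sample of $X$. The plan has four steps.

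\textbf{Step 1: build an inner ellipsoid.} Draw $k = O(\log n)$ points from $X$ using $X.\sample()$, with the constant in the $O(\cdot)$ depending on $d$. Compute their minimum enclosing ellipsoid $E_0$ with Theorem~\ref{thm:min-ellipsoid}, in expected time $O(k) = O(\log n)$. Since the points lie in the convex set $X$, their convex hull $P$ lies in $X$. By Theorem~\ref{thm:john-ellipsoid} applied to $P$, we get $\frac{1}{2d^{3/2}} E_0 \subseteq P \subseteq X$. So $E := \frac{1}{2d^{3/2}}E_0$ is an ellipsoid inside $X$.

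\textbf{Step 2: show the hull is large (main obstacle).} We need $X \subseteq c\,E$ with $c = O(d^{3/2})$, with high probability. It suffices to show that the hull $P$ of the samples contains $\frac12 E^*$, where $E^*$ is the John (maximum inscribed) ellipsoid of $X$. Then $\vol(E_0) \ge \vol(P)$ is comparable to $\vol(X)$.

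First, apply an affine map so that $E^*$ becomes the unit ball. Then $X \subseteq 2d^{3/2}B$, so $\vol(B) \ge \vol(X)/(2d^{3/2})^d$. Next, consider a constant number (depending on $d$) of caps of $X$ cut off by halfspaces that are tangent to $\frac12 B$ from outside. Each such cap contains a region of volume $\Omega_d(\vol(X))$. Each sample lands in a given cap with constant probability, so with $k = C_d \log n$ samples every cap is hit with probability $1 - n^{-\Omega(1)}$. Hitting all caps of a suitable finite net of directions forces $P$ to contain a constant-factor shrinking of $B$, say $\frac14 B$.

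With that in hand, the chain is $X \subseteq 2d^{3/2} B$ and $\frac14 B \subseteq P \subseteq 2d^{3/2} E$, where the last inclusion is Step 1 rewritten. Hence $X \subseteq 16 d^3 E$, up to constants. Combining this with the volume bounds gives the exponent. For the final inequality, instead of routing through $X \subseteq cE$, a cleaner option is to use $P \subseteq E_0 \subseteq 2d^{3/2}$-dilated hull together with $\vol(P) \ge \vol(\tfrac14 B) \ge \vol(X)/(8d^{3/2})^d$. This gives $\vol(E_0) \le (2d^{3/2})^d \vol(P)$ directly, but only a lower bound on $\vol(E_0)$ relative to $X$. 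We do need $X \subseteq c E_0$, which the chain above provides. If an unlucky sample fails the check, we can verify and repeat, so the expected time remains $O(\log n)$.

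\textbf{Step 3: replace the ellipsoid by a box.} Take $B' := $ the box circumscribing $cE_0$ along its principal axes; this is a rotated box containing $X$. Its volume is $\vol(B') = (2/\sqrt{\pi})^d\Gamma(d/2+1)\cdot\vol(cE_0)$, which is at most $d^{d}\vol(cE_0)$ up to constant factors.

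\textbf{Step 4: bound the volume.} We have $\vol(cE_0) = c^d \vol(E_0)$. Also $\vol(E_0) \le (2d^{3/2})^d\vol(E) \le (2d^{3/2})^d\vol(X)$, because $E \subseteq X$. Multiplying out gives $\vol(B') \le (\text{poly}(d))^d \vol(X)$. The stated constant $(12d^4)^d$ then follows by tracking the factors $2d^{3/2}$ and $c$.

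The hardest step is Step 2, the high-probability argument that $O(\log n)$ uniform samples capture a constant fraction of the John ellipsoid. Our first attempt there will be the cap-covering argument over an $\varepsilon$-net of directions (with $\varepsilon$ depending only on $d$), using that convexity makes each relevant cap of $X$ have volume $\Omega_d(\vol(X))$.
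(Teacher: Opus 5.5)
Your proposal is correct in substance and runs the same algorithm as the paper: draw $\Theta_d(\log n)$ samples, compute their minimum enclosing ellipsoid with Welzl's algorithm, dilate by a $\mathrm{poly}(d)$ factor, and take the circumscribing box along its principal axes. Where you differ is in how you prove the key high-probability containment.

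The paper takes a maximum-volume box $G$ inside the inner John ellipsoid and cuts it into $3^d$ cells. Each of the $2^d$ corner cells carries at least a $(6d^2)^{-d}$ fraction of $\vol(X)$. One sample in every corner cell forces the hull to contain the central cell $G_0$, and $X \subseteq 6d^2 G_0$ then gives $X \subseteq 6d^2 F$.

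You instead normalize the John ellipsoid to the unit ball and hit the caps $X \cap \{u \cdot x \ge 1/2\}$ for $u$ ranging over a net of directions. The step you leave implicit, passing from net directions to all directions, does go through. Every sample $p$ satisfies $\|p\| \le 2d^{3/2}$. So for a unit vector $v$ and a net point $u$ with $\|u-v\| \le 1/(8d^{3/2})$, you get $v \cdot p \ge u \cdot p - \|u-v\|\,\|p\| \ge 1/4$. Hence the support function of $P$ is at least $1/4$ in every direction, and $\frac14 B \subseteq P$.

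The paper's grid needs only $2^d$ events, and $(12d^4)^d$ drops out of a three-line chain. Your net has size $d^{O(d)}$ and messier constants, but it is more flexible: a finer net pushes the shrink factor toward $1$. Bounding $\vol(E_0)$ by applying John's theorem to the hull $P$, as you do, gives the same $(2d^{3/2})^d \vol(X)$ bound as the paper's comparison of $F$ with the outer John ellipsoid of $X$.

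Two corrections.

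First, drop the ``verify and repeat'' remark. With only \textsc{size}, \textsc{sample} and \textsc{contains} oracles there is no way to certify $X \subseteq B'$. Exactly as in the paper, the guarantee is that $B' \supseteq X$ with probability $1 - n^{-\Omega(1)}$, and the expected $O(\log n)$ time comes from Welzl's algorithm alone.

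Second, you never actually track the constant. Your $\frac14$-shrink gives $X \subseteq 8d^{3/2} E_0$, and hence $\vol(B') \le (\sqrt{d} \cdot 8d^{3/2} \cdot 2d^{3/2})^d \vol(X) = (16 d^{7/2})^d \vol(X)$. This is below $(12d^4)^d$ only for $d \ge 2$; for $d = 1$ the two caps give $\frac12 B \subseteq P$ directly. Any $d^{O(d)}$ factor suffices for Lemma~\ref{lem:single-weak-sample}, so this is cosmetic.
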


\begin{proof}
    The algorithm is simple: Sample $k := (6d^2)^d (d + 2\log n)$ random points $x_1, \dots, x_k \in X$, compute the minimum enclosing ellipsoid $F$ using Theorem~\ref{thm:min-ellipsoid},\footnote{
    As an alternative to Welzl's algorithm, we can apply Barequet and Har-Peled's approximate bounding box algorithm~\cite{BarequetH01} to these $k$ random points.}
    then output the minimum bounding box $B$ of the dilated ellipsoid $6d^2 F$.

    The running time is clearly $O(k) = O(\log n)$ since $d$ is a constant. To show correctness, let $E$ be the inner John ellipsoid of $X$, and let $G$ be a box of maximum volume contained in $E$. (If $E$ is a ball, then $G$ is a cube. In general $G$ is a cube stretched along the axes of $E$.) We divide each side of $G$ evenly into three segments. This induces $3^d$ grid cells in $G$, of which $2^d$ cells are at the corners. We denote the cell at the center by $G_0$. See Figure~\ref{fig:convex-approx}.
    
    \begin{figure}[htb]
        \centering
        \includegraphics{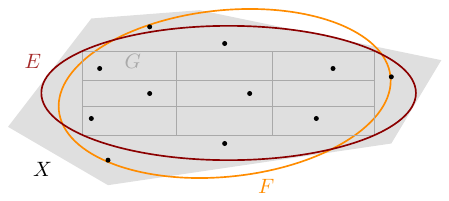}
        \caption{The convex body $X$, the inner John ellipsoid $E$, its inscribing box $G$, the subdivision into cells, and the minimum ellipsoid $F$ of the sample points.}
        \label{fig:convex-approx}
    \end{figure}

    Note that $G = 3 G_0$ and $E \subseteq \sqrt{d} G$ by definition. Additionally, $X \subseteq 2d^{3/2} E$ by Theorem~\ref{thm:john-ellipsoid}. Chaining these together, we obtain $X \subseteq 6 d^2 G_0$. In particular, the volume of each cell is at least $\rho := \frac{1}{(6d^2)^d}$ times the volume of $X$.

    Now consider a specific corner cell. For each $i \in [k]$, the sample point $x_i$ hits the cell with probability at least $\rho$. So the probability that no sample point hits the cell is at most
    \[ (1 - \rho)^k \leq \exp(-\rho k) = \mathrm{e}^{-(d + 2 \log n)} = \mathrm{e}^{-d} \cdot n^{-2}. \]
    Applying a union bound over all corner cells, we have with probability at least $1-n^{-2}$ that all corner cells are hit. Conditioned on this event, $G_0 \subseteq F$ and thus $X \subseteq 6d^2 F \subseteq B$.

    It remains to relate $\vol(B)$ to $\vol(X)$. To this end, we consider the outer John ellipsoid $E'$ of $X$. We have $\vol(E') \leq (2d^{3/2})^d \vol(X)$ by Theorem~\ref{thm:john-ellipsoid}. Next note that $\vol(F) \leq \vol(E')$, for otherwise $E'$ would be a smaller ellipsoid containing all sample points, contradicting the definition of $F$. Finally, $B \subseteq \sqrt{d} \cdot (6d^2 F)$ by definition, thus $\vol(B) \leq (6d^{5/2})^d \vol(F)$. Chaining these together, we conclude that $\vol(B) \leq (12d^4)^d \vol(X)$.
\end{proof}

\begin{theorem}[Lenstra~\cite{Lenstra83}]\!\!\footnote{There were subsequent improved algorithms, see e.g., \cite{ReisR23}, but we don't need them here.}
    \label{thm:ILP}
    Let $k \in \N$ be a constant. There is an algorithm that solves any given integer linear program with $k$ variables and description size $m$ in time $O(m^{c_k})$, where $c_k$ is a constant depending only on $k$.
\end{theorem}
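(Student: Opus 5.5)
The statement is Lenstra's classical result, which the paper cites rather than proves. My plan is to reproduce the standard argument: \emph{branch on lattice hyperplanes}, with LLL-type basis reduction and a flatness argument as the engine. Throughout, $k$ is constant, so any factor depending only on $k$ counts as $O(1)$.

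\textbf{Preprocessing.} First reduce optimization to feasibility. The optimal value of a bounded ILP has bit length polynomial in $m$, so binary search on the objective adds only a $\poly(m)$ factor. Next handle degenerate polyhedra. If the feasible region $P=\{x\in\R^k : Ax\le b\}$ is not full-dimensional, detect this with LP and compute its affine hull in polynomial time. Intersect the lattice $\Z^k$ with that hull (via Hermite normal form) to get a lower-dimensional lattice, and recurse in fewer variables. Finally, use standard bounds to assume $P$ is bounded with polynomial bit size.

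\textbf{Rounding $P$.} For full-dimensional bounded $P$, compute in polynomial time an affine map $T$ and a ball so that $B(c,\rho)\subseteq T(P)\subseteq B(c,\gamma_k\rho)$, with $\gamma_k$ depending only on $k$. An ellipsoid-method rounding, or a John-type ellipsoid (cf.\ Theorem~\ref{thm:john-ellipsoid}) computed approximately, suffices. The lattice $\Z^k$ becomes $\Lambda=T\Z^k$. Apply LLL reduction to a basis of $\Lambda$, obtaining $b_1,\dots,b_k$ with $\prod_i\|b_i\|\le 2^{O(k^2)}\det\Lambda$.

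\textbf{Dichotomy.} Order the reduced basis so that $b_k$ is the longest vector. There are two cases.
\begin{itemize}
\item If $\rho$ is large relative to $\|b_k\|$ (say $\rho\ge c_k\|b_k\|$ for suitable $c_k$), then rounding $c$ to a nearby lattice point using the reduced basis gives a point within distance $\sum_i\|b_i\|/2\le\rho$ of $c$. That point lies in $T(P)$, so the instance is feasible.
\item Otherwise let $H=\mathrm{span}(b_1,\dots,b_{k-1})$. Since $\mathrm{dist}(b_k,H)\ge 2^{-O(k^2)}\|b_k\|$, the lattice hyperplanes $H+jb_k$ are spaced at least that far apart. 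The set $T(P)\subseteq B(c,\gamma_k\rho)$ therefore meets only $O_k(1)$ of them. Recurse on each slice $P\cap\{x : \langle a,x\rangle=j\}$, which is an ILP in $k-1$ variables.
\end{itemize}

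\textbf{Running time.} The recursion has branching factor $f(k)$ and depth $k$, and each node costs $\poly(m)$ for LP, rounding and LLL. The total is therefore $f(k)^k\cdot\poly(m)=O(m^{c_k})$. A separate check is needed that the description sizes of the slices stay polynomial in $m$; they do, because the hyperplane normals are rows of the inverse of the reduced basis.

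\textbf{Main obstacle.} The step I expect to be hardest is the quantitative flatness step: showing that $T(P)$ meets only a constant number of lattice hyperplanes whenever the rounding point cannot be found. This requires carefully combining the LLL orthogonality defect with the rounding constant $\gamma_k$. I would first try to prove it directly from the Gram--Schmidt lengths $\|b_k^*\|$. If that becomes messy, I would fall back on Khinchine's flatness theorem as a black box.
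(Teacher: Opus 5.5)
The paper gives no proof of this statement; it imports it from Lenstra. Your outline (round $P$ between two concentric balls, LLL-reduce $T\Z^k$, then either round $c$ to a lattice point inside $B(c,\rho)$ or branch on the $O_k(1)$ lattice hyperplanes $H+jb_k$ meeting $T(P)$, recursing to depth $k$) is a correct reconstruction of Lenstra's original argument.

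The step you flag as the main obstacle is already closed by the estimate you wrote down. If $\rho<\tfrac{k}{2}\|b_k\|$ and consecutive hyperplanes are at distance $\mathrm{dist}(b_k,H)\ge 2^{-O(k^2)}\|b_k\|$, then $B(c,\gamma_k\rho)$ meets at most $k\gamma_k 2^{O(k^2)}+1$ of them, so you do not need Khinchine's flatness theorem. Likewise, any polynomial blow-up of description sizes compounding over the $k$ recursion levels is harmless, since the exponent in $O(m^{c_k})$ may depend on $k$.
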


\begin{corollary}
    \label{cor:ILP}
    Let $k \in \N$ be a constant. There is an algorithm that lists all feasible solutions of any given linear program with $k$ variables and description size $m$ in time $O(z \cdot m^{c_k'})$, where $z$ is the output size and $c_k'$ is a constant depending only on $k$.
\end{corollary}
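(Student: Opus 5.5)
We derive Corollary~\ref{cor:ILP} from Theorem~\ref{thm:ILP} by a search that never explores a region of the integer lattice without a feasible point. (We read the statement as concerning integer feasible solutions of a linear program with $k$ variables whose feasible region is bounded, so that $z$ is finite.) The idea is a recursive branching over the variables in the style of a lexicographic enumeration. Each branch is guarded by a call to Lenstra's algorithm, so every node of the search tree is charged to a distinct output point.

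Concretely, we first use Theorem~\ref{thm:ILP} to decide feasibility. If the program is infeasible we stop. Otherwise we optimize: minimize and maximize $x_1$ over the integer feasible set, each by one ILP call. Boundedness guarantees that both optima exist; call them $a$ and $b$. We then need to handle the values $x_1 = c$ for $a \le c \le b$, and naively there may be many empty slices.

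To avoid this, we proceed as follows. Starting from $c=a$, we fix $x_1=c$ by adding the two constraints $c \le x_1 \le c$. We recursively enumerate the resulting program, which has $k-1$ effective variables after substitution. Then we find the next value by one ILP call that minimizes $x_1$ subject to the original constraints together with $x_1 \ge c+1$. This jumps directly to the next nonempty slice, or reports infeasibility, which ends the loop. The base case $k=0$ is a single feasibility check that outputs the point.

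By induction, the number of ILP calls at each level is at most a constant times the number of nonempty slices, and each nonempty slice contributes at least one output point. Hence the total number of calls is $O(k z) = O(z)$, plus $O(1)$ when $z=0$.

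Each call works on a program whose description size is $m$ plus the added constraints. The added constants $c$ are values of optimal solutions, whose bit sizes are polynomial in $m$ by the standard bound on the size of ILP optima. The size of each subprogram is therefore $\mathrm{poly}(m)$, and each call costs $O(m^{c_k'})$ for a suitable constant $c_k'$.

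The main obstacle is exactly this description-size bookkeeping, together with ensuring that optima exist. For the latter we rely on boundedness. In our application the polytope is a bounded box intersected with hash constraints, so this holds; alternatively, one can add the bounding box explicitly. For the bit sizes, we invoke the polynomial bound on the size of vertex and optimal solutions of ILPs, so that the added constraints do not blow up the description.
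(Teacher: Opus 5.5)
Your proposal is correct and matches the paper's proof: both recursively fix one coordinate at a time (the paper slices on $x_k$, you on $x_1$), use one ILP call to jump to the next nonempty slice, and charge each slice to at least one output point, so the number of ILP calls is $O_k(z)$. Your extra feasibility and maximization calls are harmless. Your explicit bit-size argument for the substituted constants (via boundedness and the polynomial size of vertices) is more careful than the paper, which simply takes the subproblem size to be $m+1$.
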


\begin{proof}
	We describe a recursive algorithm \ILP{$P,k$} that lists all integral points inside a $k$-dimensional polytope $P$. Denoting by $P[x_i \mapsto a]$ the polytope $P$ after fixing the $i$-th coordinate to a constant $a$, the algorithm is given as Algorithm~\ref{alg:ILP}.
    
	\begin{algorithm}[htbp]
        \caption{Listing all solutions in an ILP}
        \label{alg:ILP}
        \Function{\ILP{$P, k$}}{
            \If{$k = 0$}{
                output the unique point in $P$\;
            }\Else{
                $a := \min \set{x_k : x \in P \cap \Z^k}$\;
                \While{$a < \infty$}{
                    \ILP{$P[x_k \mapsto a], k-1$}\;
                    $a := \min \set{x_k : x \in P \cap \Z^k, ~x_k \geq a+1}$\;
                }
            }
        }
    \end{algorithm}

    We show by induction on $k$ that \ILP{$P,k$} is correct and runs in time $O_k(z \cdot m^{c_k'})$, where $O_k(\cdot)$ hides constant factors depending on $k$. In the base case $k = 0$, the polytope $P$ is a singleton and thus $z = 1$. The algorithm correctly outputs the unique point in $P$ in time $O(m) = O(z \cdot m)$.

    Now consider $k \geq 1$. Denote by $a_1 < \dots < a_r < \infty$ the evolution of $a$ during the while-loop. Observe that $a_1, \dots, a_r$ enumerates all feasible integer assignments of the last coordinate $x_k$. For each $t \in [r]$, the algorithm recurses on the non-empty subpolytope $P_t := P[x_k \mapsto a_t]$. By induction, this lists all (say $z_t \geq 1$ many) integral points in $P_t$. Correctness follows since the polytopes $P_1, \dots, P_r$ are disjoint and cover all integral points of $P$. Moreover, we have $z = \sum_{t=1}^r z_t$. To bound the running time, for each $t \in [r]$ the recursive call costs time $O_k(z_t \cdot (m+1)^{c_{k-1}'}) = O_k(z_t \cdot m^{c_{k-1}'})$ by induction, and updating $a$ costs time $O(m^{c_k})$ by Theorem~\ref{thm:ILP}. So the total time is
	\[ \sum_{t=1}^r \left( O(m^{c_k}) + O_k(z_t \cdot m^{c_{k-1}'}) \right) = O_k(r \cdot m^{c_k} + z \cdot m^{c_{k-1}'}). \]
	Note that $r \leq z$ since $z_1, \dots, z_r \geq 1$. So the time is $O_k(z \cdot m^{c_k'})$ where $c_k' := \max(c_k, c_{k-1}')$.
\end{proof}

\begin{theorem}[Sparse recovery, e.g., \cite{FrahlingIS08,JayaramWZ22}]
    \label{thm:sparse-recovery}
    Let $\delta \in (0,1)$ and $k \in \N$ be parameters, and let $\Omega$ be a finite universe. There is a data structure that implicitly maintains a vector $v = (v_x)_{x \in \Omega}$ using $O(k \polylog (|\Omega|/\delta))$ words of space. The vector is initially zero and the data structure supports two operations in time $O(\polylog (|\Omega|/\delta))$:
    \begin{itemize}
        \item \update{$x, b$}: given $x \in \Omega$ and $b \in \set{-1,1}$, updates $v_x := v_x + b$.
        \item \support{}: returns the support size $s := \#\set{x \in \Omega: v_x \neq 0}$ if $s \leq k$, and $\infty$ if $s > k$. The answer is correct with probability at least $1-\delta$. 
    \end{itemize}
\end{theorem}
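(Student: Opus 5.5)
The plan is to build a \emph{linear sketch} of $v$ from which every vector of support at most $k$ can be decoded. A randomized verification step then certifies that the decoded vector equals $v$, and this is how the case $s>k$ is detected. All randomness is fixed at initialization and stored in $O(\polylog(|\Omega|/\delta))$ words. Every component of the sketch is a linear function of $v$, so \update{$x,b$} simply adds $b$ times the contribution of coordinate $x$ to each component.

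\textbf{Step 1: a 1-sparse detector.} Identify $\Omega$ with $\{1,\dots,|\Omega|\}$. Fix a prime $p = \Theta(|\Omega|^2/\delta')$ and a uniformly random $z \in \F_p$. For a set $A\subseteq\Omega$ the detector stores three quantities:
\[ c_0=\sum_{x\in A} v_x, \qquad c_1=\sum_{x\in A} x\, v_x, \qquad f=\sum_{x\in A} v_x z^x \bmod p . \]
If $c_0\neq 0$, the candidate index is $x^\ast=c_1/c_0$; it is accepted only if $x^\ast$ is an integer in $A$ and $f = c_0 z^{x^\ast} \bmod p$. If $v|_A$ is exactly 1-sparse, this returns the unique entry $(x^\ast, v_{x^\ast})$. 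Otherwise the fingerprint check is a comparison of two distinct polynomials in $z$ of degree at most $|\Omega|$, so by Schwartz--Zippel it passes with probability at most $|\Omega|/p$. An all-zero bucket is recognized by $c_0=c_1=f=0$, with the same error bound.

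\textbf{Step 2: isolation by hashing.} Choose $O(\log(k/\delta))$ independent pairwise independent hash functions $h_j:\Omega\to[2k]$ and run one detector per bucket of each $h_j$. If $s\le k$, then for a fixed $j$ each support element is alone in its bucket with probability at least $1/2$. Over all $j$, every support element is therefore isolated in some repetition with probability $1-\delta/3$, by a union bound over the at most $k$ support elements. Decoding collects every accepted pair $(x,v_x)$ across all buckets and forms the candidate vector $w$. If more than $k$ distinct indices are found, it immediately reports $\infty$.

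\textbf{Step 3: verification.} Keep one global fingerprint $F=\sum_x v_x z'^x \bmod p$ for an independent random $z'$. After decoding, compute $F-\sum_x w_x z'^x$. If this is nonzero, report $\infty$; otherwise report $|\mathrm{supp}(w)|$. When $s\le k$ and decoding succeeded we have $w=v$, so the correct value is returned. When $w\neq v$, in particular whenever $s>k$, a false report occurs with probability at most $|\Omega|/p$. Taking $\delta'=\delta/\polylog$ and applying a union bound over all $O(k\log(k/\delta))$ detectors and the verifier gives total error at most $\delta$. Note that the error analysis is per query, which is all the statement requires.

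\textbf{Complexity, and the main obstacle.} Space is $O(k\log(k/\delta))$ detectors times $O(1)$ words each, with words of $O(\log(|\Omega|/\delta))$ bits; this matches the claimed space bound. An update touches one bucket per hash function, and each touch costs a modular exponentiation $z^x$ of $O(\log|\Omega|)$ multiplications, giving polylog time. The step I expect to be the real obstacle is the query time. Naive decoding scans all $O(k\log(k/\delta))$ buckets, which is polylogarithmic only when $k$ is, as in our application where $k=O(\log(n)/\eps^2)$ up to polylog factors. To get the bound independent of $k$, I would maintain the decoding incrementally. Each bucket keeps a flag recording whether it currently passes the 1-sparse or zero test, re-evaluated in $O(1)$ word operations after each update to it. Global counters then track the number of distinct isolated indices, using a dictionary keyed by index with multiplicity counts, together with the running value of $\sum_x w_x z'^x$. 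With this bookkeeping, \support{} reduces to reading these counters and comparing the maintained fingerprints. Handling the remaining subtleties, namely keeping the dictionary consistent as buckets toggle between isolated and collided states, is the part I expect to need the most care.
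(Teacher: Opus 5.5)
The paper gives no proof of this statement; it imports sparse recovery as a black box from \cite{FrahlingIS08,JayaramWZ22}. Your self-contained construction is the standard one from that literature and the argument is correct. It hashes pairwise-independently into $2k$ buckets with $O(\log(k/\delta))$ repetitions, runs a 1-sparse test per bucket via $(c_0,c_1)$ plus a polynomial fingerprint, and uses a global fingerprint to certify $w=v$. The error accounting works out as follows.
\begin{itemize}
\item Isolation fails with probability at most $\delta/3$ when $s\le k$.
\item Every false acceptance, and the false verification, is a Schwartz--Zippel event of probability at most $|\Omega|/p$.
\item A union bound over the $O(k\log(k/\delta))$ detectors closes the argument, taking $k\le|\Omega|$ without loss of generality.
\end{itemize}

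What your route buys over the citation is that it exposes hypotheses the statement leaves implicit.
\begin{itemize}
\item Your claim that the compared polynomials are distinct needs nonzero integer entries to remain nonzero modulo $p$, i.e.\ $\max_x|v_x|<p$; $c_0,c_1$ must also fit in words. So you must assume a polynomial bound on the number of updates. This holds in the application, where $0\le v_x\le n$.
\item The guarantee is per query, against an update sequence chosen independently of the sketch's randomness. Algorithm~\ref{alg:convex} satisfies this, since its updates depend on $a,b$ and the objects but not on the sketches.
\end{itemize}

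The obstacle you flag is milder than you expect. Key the dictionary by decoded pairs $(x,c_0)$ with multiplicities, so that it is literally the multiset of the current outputs of all buckets. Each bucket's output is a deterministic function of its current counters. Hence the maintained number of distinct keys, and the running value $\sum c_0\, z'^x$ over distinct keys, coincide with what a full scan would produce. The per-query analysis therefore applies verbatim; under the no-false-positive event, keys with equal $x$ carry equal values. Re-evaluating a touched bucket costs $O(\log|\Omega|)$ multiplications for $z^{x^\ast}$ rather than $O(1)$, which is still polylogarithmic.

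Finally, the paper never needs $k$-independent query time. \textsc{estimate} in Algorithm~\ref{alg:convex} calls \textsc{support} on only $O(\log\Delta)$ sketches with $k=100/\eps^2$. Theorem~\ref{thm:convex-point-count} claims only $O(\polylog(\Delta)/\eps^2)$ time per operation, so naive $O(k\polylog)$ decoding already suffices there.
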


\subsection{Counting via weak sampling}
We need a variant of $l$-samples in the discrete setting:
\begin{definition}[weak $q$-sample]
    Let $\discrete{U}$ be a finite set and $q \in [0,1]$. A random subset $S \subseteq \discrete{U}$ is a \emph{weak $q$-sample} of $\discrete{U}$ if $\Pr(x \in S) = q$ for all $x \in \discrete{U}$, and the events $\set{x \in S}, \set{y \in S}$ are independent for all pairs of distinct $x,y \in \discrete{U}$.
\end{definition}

\begin{lemma}
    \label{lem:weak-sample}
    If $S$ is a weak $q$-sample of $\discrete{U}$ with $q \geq \frac{20}{\eps^2 |\discrete{U}|}$, then $|S|/q \in (1 \pm \eps) |\discrete{U}|$ with probability at least $19/20$.
\end{lemma}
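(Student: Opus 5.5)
The plan is a second-moment argument followed by Chebyshev's inequality. Write $|S| = \sum_{x \in \discrete{U}} I_x$, where $I_x$ is the indicator of the event $\{x \in S\}$. By definition of a weak $q$-sample, each $I_x$ is a Bernoulli variable with mean $q$. Linearity of expectation then gives $\mathbb{E}|S| = q|\discrete{U}|$.

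Next, compute the variance. Pairwise independence makes the covariance of $I_x$ and $I_y$ vanish for all distinct $x,y$. The variance is therefore additive:
\[
\mathrm{Var}(|S|) = \sum_{x\in\discrete{U}} \mathrm{Var}(I_x) = |\discrete{U}|\, q(1-q) \leq q|\discrete{U}|.
\]

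Now apply Chebyshev's inequality with deviation $\eps q|\discrete{U}|$:
\[
\Pr\bigl(\bigl||S| - q|\discrete{U}|\bigr| > \eps q |\discrete{U}|\bigr) \leq \frac{q|\discrete{U}|}{\eps^2 q^2 |\discrete{U}|^2} = \frac{1}{\eps^2 q |\discrete{U}|} \leq \frac{1}{20},
\]
where the last step uses the hypothesis $q \geq \frac{20}{\eps^2|\discrete{U}|}$. Dividing the complementary event by $q$ yields $|S|/q \in (1\pm\eps)|\discrete{U}|$ with probability at least $19/20$.

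There is no real obstacle here. The only point needing care is that the variance computation uses nothing beyond pairwise independence, which is exactly what the definition provides. The degenerate case $\discrete{U} = \emptyset$ cannot arise, since the hypothesis on $q$ implicitly requires $|\discrete{U}| > 0$.
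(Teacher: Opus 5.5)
Your proposal is correct and follows the paper's proof essentially verbatim: decompose $|S|$ into indicators, use pairwise independence to get $\mathrm{Var}|S| = q(1-q)|\discrete{U}| \leq \mathbb{E}|S|$, and apply Chebyshev with deviation $\eps q|\discrete{U}|$ to bound the failure probability by $1/(\eps^2 q|\discrete{U}|) \leq 1/20$. Your use of a strict inequality in the Chebyshev event, whose complement is exactly the closed interval $(1\pm\eps)|\discrete{U}|$, is a harmless and slightly cleaner variant of the paper's $\geq$.
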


\begin{proof}
    We bound the expectation and variance of $|S|$. For each $x \in \discrete{U}$ we define a random variable $I_x$ indicating the event $\{x \in S\}$. Clearly $|S| = \sum_{x \in \discrete{U}} I_x$, so $\E |S| = q|\discrete{U}|$. Since the indicator variables are pairwise independent, $\Var |S| = \sum_{x \in \discrete{U}} \Var I_x = q(1-q) |\discrete{U}| \leq \E |S|$. Applying Chebyshev's inequality,
    \[
        \Pr \left( \card{\frac{|S|}{q} - |\discrete{U}|} \geq \eps |\discrete{U}| \right)
        \leq \frac{\Var |S|}{(\eps \E |S|)^2}
        \leq \frac{1}{\eps^2 \E |S|}
        = \frac{1}{\eps^2 q|\discrete{U}|}
        \leq \frac{1}{20}. \qedhere
    \]
    \aftermath
\end{proof}

Let us explain the idea of our data structure. Fix a prime $p \in [\Delta^d, 2 \Delta^d]$ and sample $a,b \in \F_p$ uniformly at random. Define a hash function $h_{a,b}: [\Delta]^d \to \F_p$ by
\[ h_{a,b}(x_1, \dots, x_d) := \left( a + b\,\sum_{i=1}^d \Delta^{i-1} x_i \right) \bmod p. \]
The data structure morally maintains $S^{(l)} := \{x \in \discrete{U} : h_{a,b}(x) \leq \lceil p/2^l \rceil-1\}$ for each $l \in \N$, where $U$ is the union of the surviving objects. Note that $\Pr(x \in S^{(l)}) = \lceil p/2^l\rceil / p =: q_l$, and the events $\{x \in S^{(l)}\}, \{y \in S^{(l)}\}$ are pairwise independent for $x \neq y$. So $S^{(l)}$ is a weak $q_l$-sample of $\discrete{U}$. We then try to apply Lemma~\ref{lem:weak-sample} to retrieve an estimate of $|\discrete{U}|$.

\subsection{Weak sampling from a single object}
\begin{lemma}
    \label{lem:single-weak-sample}
    There is an algorithm \filter{$X, a, b, l$} that, given a convex body $X$, random numbers $a,b \in \F_p$, and an integer $l$, computes $\discrete{X} \cap S^{(l)} = \{x \in \discrete{X} : h_{a,b}(x) \leq \lceil p/2^l \rceil - 1\}$. It runs in time $O(\polylog(\Delta) \cdot |\discrete{X}| / 2^l)$ in expectation over the randomness of $a,b$.
\end{lemma}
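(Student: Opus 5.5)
Following the technical overview, we first enclose $X$ in a small rotated box, then enumerate the sampled grid points of that box by integer linear programming, and finally discard the points outside $X$. Concretely, \filter{$X,a,b,l$} first calls Corollary~\ref{cor:bounding-box} to obtain, in expected $O(\log n)$ time, a rotated box $B \supseteq X$ with $\vol(B) \le (12d^4)^d \vol(X)$. It then lists all integer points $x \in [\Delta]^d \cap B$ with $h_{a,b}(x) \le \lceil p/2^l\rceil - 1$. For each listed point it keeps $x$ iff $X.\contains(x)$. Correctness is immediate, since $\discrete{X} \subseteq \discrete{B}$ and we only filter.

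The listing step is phrased as an ILP in constant dimension. Introduce integer variables $x_1,\dots,x_d$ and an auxiliary integer $y$. The constraints are:
\begin{itemize}
\item $1 \le x_i \le \Delta$ for every $i$;
\item the $2d$ linear inequalities describing $B$;
\item $0 \le a + b\sum_i \Delta^{i-1}x_i - yp \le \lceil p/2^l\rceil - 1$.
\end{itemize}
The last constraint encodes the modular condition because, for fixed $x$, there is exactly one integer $y$ making $a + b\sum_i \Delta^{i-1}x_i - yp$ lie in $[0,p-1]$. Hence feasible solutions are in bijection with $\discrete{B} \cap S^{(l)}$.

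The description size is $m = O(\polylog \Delta)$ bits, assuming the box coordinates are rounded to $O(\log \Delta)$ bits. A slight outward enlargement of $B$ keeps $B\supseteq X$ and changes the volume by only a constant factor, because $X$ contains a ball of radius $r$ and has been scaled up. By Corollary~\ref{cor:ILP} with $k=d+1$, the listing takes $O(z \cdot \polylog \Delta)$ time, where $z = |\discrete{B} \cap S^{(l)}|$. One caveat is that when $z=0$ the algorithm still pays one ILP call, so the bound is $O((z+1)\polylog\Delta)$. I would absorb the additive term either into the stated bound, noting it is used only where $|\discrete{X}|/2^l \gtrsim 1$, or state it explicitly as a $+\polylog$ term.

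The expected running time then comes from $\E z = q_l\,|\discrete{B}|$, since each grid point lies in $S^{(l)}$ with probability $q_l = \lceil p/2^l\rceil/p \le 2/2^l$ for $l \le \log p$. The step I expect to be the main obstacle is bounding $|\discrete{B}|$ by $O(|\discrete{X}|)$. Volume comparison alone is not enough for a lattice count, so we use the scaling from Lemma~\ref{lem:discretization}. There $X$ contains a ball of radius $\lambda r \gg d$, and $B$ is comparable in size to $X$, so by Lemma~\ref{lem:grid}, applied to both $X$ and $B$, the grid counts satisfy $|\discrete{B}| \le 2\vol(B)$ and $|\discrete{X}| \ge \vol(X)/2$. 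Lemma~\ref{lem:grid} applies to $B$ because $B$ is convex and contains $X$, hence the same ball. Therefore $|\discrete{B}| = O((12d^4)^d |\discrete{X}|) = O(|\discrete{X}|)$ for constant $d$. We also account for the $O(n^{-2})$ failure probability of Corollary~\ref{cor:bounding-box}, either by verifying $B$ or by noting that a bad box costs at most $\poly(\Delta)$ with tiny probability. This gives expected time $O(\polylog(\Delta)\,|\discrete{X}|/2^l)$, with the containment tests adding only $O(z)$.
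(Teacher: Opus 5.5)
Your proposal is correct and follows the paper's proof essentially step for step: bounding box from Corollary~\ref{cor:bounding-box}, listing the sampled lattice points of $B$ via an ILP and Corollary~\ref{cor:ILP}, filtering with $X.\contains(x)$, and bounding $\E z = q_l|\discrete{B}| = O(|\discrete{X}|/2^l)$ via Lemma~\ref{lem:grid} applied to both $B$ and $X$; your two-sided $(d+1)$-variable constraint is just an equivalent compression of the paper's remainder and quotient variables $y,z$. Your additive $\polylog(\Delta)$ caveat (which also covers the $O(\log n)$ box computation) is a fair point the paper glosses over, but keep it as an explicit additive term rather than arguing $|\discrete{X}|/2^l = \Omega(1)$: Algorithm~\ref{alg:convex} also filters at large $l$, where $|\discrete{X}|/2^l \ll 1$, and the extra term is harmless there because each of the $O(\log \Delta)$ iterations is already charged $O(\polylog(\Delta)/\eps^2)$ time.
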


\begin{proof}
    We describe the algorithm \filter{$X, a, b, l$} as follows.
    \begin{enumerate}
        \item Compute a bounding box $B \supseteq X$ with $\vol(B) \leq (12d^4)^d \vol(X)$ by Corollary \ref{cor:bounding-box}. Let $H_1, \dots, H_{2d}$ be the defining hyperplanes such that $B = \set{x \in \R^d : H_i(x) \geq 0, \forall i \in [2d]}$.
        
        \item Compute the set of solutions $\class{S}$ to the following integer linear program by Corollary~\ref{cor:ILP}.
        \begin{align*}
            \forall i \in [2d], \quad H_i(x) &\geq 0 \quad \\
            a + b\,\sum_{i=1}^d \Delta^{i-1} x_i &= y + zp \\
            x &\in [\Delta]^d \\
            y &\in \{0, \dots, \lceil p/2^l \rceil - 1\} \\
            z &\in \{0, \dots, d\Delta^d\}
        \end{align*}
        Here, additions and multiplications are over $\Z$ instead of $\F_p$.
        
        \item Return $\set{x : (x,y,z) \in \class{S} \text{ and } X.\contains(x)}$.
    \end{enumerate}
    
    The correctness proof is easy. For $x \in [\Delta]^d$, we write $\eta_{a,b}(x) := a + b \sum_{i=1}^d \Delta^{i-1} x_i$ and note that $0 \leq \eta_{a,b}(x) < p + pd\Delta^d$. Hence, $h_{a,b}(x) = y$ if and only if $\eta_{a,b}(x) = y+zp$ for some (unique) $z \in \set{0, \dots, d\Delta^d}$. So there is a bijection between $\set{x \in \discrete{B} : h_{a,b}(x) \leq \lceil p/2^l \rceil - 1}$ and $\class{S}$. Since $\discrete{B} \supseteq \discrete{X}$, the algorithm returns $\set{x \in \discrete{X} : h_{a,b}(x) \leq \lceil p/2^l \rceil - 1}$.
    
    We next analyze the running time. Step 1 runs in $O(\log n)$ time by Corollary \ref{cor:bounding-box}. For step 2, the ILP contains $d+2$ variables and $2d+1$ linear inequalities. So for constant~$d$ it runs in time $(\log \Delta)^{c_{d+2}'} = \polylog(\Delta)$ per solution by Corollary~\ref{cor:ILP}. Observe that $\E |\class{S}| = q_l |\discrete{B}| = O(|\discrete{B}|/2^l)$ over the randomness of $a,b$. So the expected total time is $O(\polylog(\Delta) |\discrete{B}| / 2^l)$. Recall that $\vol(B) \leq O(\vol(X))$, and that $\frac{|\discrete{B}|}{\vol(B)}, \frac{|\discrete{X}|}{\vol(X)} \in [\frac{1}{2}, 2]$ by Lemma \ref{lem:grid}, so we have $|\discrete{B}| \leq O(|\discrete{X}|)$. Therefore, the running time is $O(\polylog(\Delta) \cdot |\discrete{X}|/2^l)$.
\end{proof}

\subsection{Weak sampling from the union}

\begin{algorithm}[htb]
    \caption{Union volume estimation for dynamic convex bodies}
    \label{alg:convex}
    \Parameter{$n, \eps, \Delta$}
    \Function{\initialize{}}{
        fix a prime $p \in [\Delta^d, 2\Delta^d]$\;
        sample $a, b \in \F_p$ uniformly at random\;
        \For{$l = 0, \dots, \lfloor \log p \rfloor$}{
            instantiate a data structure $v^l$ in Theorem~\ref{thm:sparse-recovery} with $k := \frac{100}{\eps^2}$ and $\delta := \frac{1}{20 \log p}$\;
            $q_l := \lceil p / 2^l \rceil / p$\;
        }
    }

    \Function{\ins{$X$}}{
        \For{$l = 0, \dots, \lfloor \log p \rfloor$}{
            \If{$X.\size \leq 1600 / (\eps^2 q_l)$}{
                \ForEach{$x \in$ \filter{$X, a, b, l$}}{
                    $v^l$.\update{$x, 1$}\;
                }
            }
        }
    }

    \Function{\del{$X$}}{
        \For{$l = 0, \dots, \lfloor \log p \rfloor$}{
            \If{$X.\size \leq 1600 / (\eps^2 q_l)$}{
                \ForEach{$x \in$ \filter{$X, a, b, l$}}{
                    $v^l$.\update{$x, -1$}\;
                }
            }
        }
    }

    \Function{\est{}}{
        $L := 1 + \max \set{l : v^l.\support{} = \infty}$\;
        \Return $v^L.\support{} / q_L$\;
    }
    \end{algorithm}

\begin{theorem}
    \label{thm:convex-point-count}
    Assume $\eps \in (0, \frac{1}{4}]$. Algorithm~\ref{alg:convex} satisfies the following correctness guarantee. Consider an arbitrary sequence of \ins and \del operations, and let $\class{X}$ be the multiset of surviving objects. With probability at least $0.7$, \est{} returns a $(1 \pm \eps)$-approximation to $|\discrete{U}| = |[\Delta]^d \cap U|$ where $U = \bigcup_{X \in \class{X}} X$. Moreover, the algorithm uses $O(\polylog(\Delta)/\eps^2)$ space and expected time per operation.
\end{theorem}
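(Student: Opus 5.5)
We first fix notation and identify what each sparse-recovery structure $v^l$ stores. For every $l$, let $\class{X}_l \subseteq \class{X}$ be the surviving objects with $X.\size \le 1600/(\eps^2 q_l)$, i.e.\ the objects whose insertions and deletions were actually passed to $v^l$. Since every \del{$X$} reuses the same $a,b$ and the same deterministic threshold test as \ins{$X$}, and \filter returns exactly $\discrete{X}\cap S^{(l)}$ by Lemma~\ref{lem:single-weak-sample}, the updates cancel exactly. Hence $v^l_x$ equals the number of $X \in \class{X}_l$ with $x \in \discrete{X} \cap S^{(l)}$. Its support is therefore $\discrete{U_l} \cap S^{(l)}$ with $U_l := \bigcup_{X\in\class{X}_l} X$. (Multiplicities are non-negative, so nothing cancels spuriously.)

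Next we locate the right level. Let $\ell^*$ be the largest $l$ with $q_l |\discrete{U}| \ge 400/\eps^2$. Then $q_{\ell^*}|\discrete{U}| < 1600/\eps^2$ roughly, since $q_l$ halves up to rounding. At level $\ell^*$, every surviving object has $|\discrete X| \le |\discrete U|$, so $\vol(X) \lesssim 2|\discrete{U}| \le 3200/(\eps^2 q_{\ell^*})$. To make the filter exact at the relevant levels, I would tweak the constants, or argue as follows. For $l \ge \ell^*$ we have $U_l = U$ up to constant slack, so $\discrete{U_l}\cap S^{(l)} = \discrete U \cap S^{(l)}$. This is the delicate point: I expect to adjust the threshold $1600$ against the $400$ and $100$ constants, using $\vol(X)\le 2|\discrete X|$ from Lemma~\ref{lem:grid}, so that every object passes the filter at levels $\ell^*$ and $\ell^*\pm1$.

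The key probabilistic steps follow. (a) By Lemma~\ref{lem:weak-sample} with a Chebyshev bound of failure probability at most $1/20$, at the level $L^\circ$ where $q|\discrete U| \in [20\cdot$const$/\eps^2, \ldots]$, the support size is about $q|\discrete U| \le 100/\eps^2 = k$. Thus \support{} returns a finite value giving a $(1\pm\eps)$-estimate. (b) At level $L^\circ-1$ (or at the level where $q|\discrete U|$ exceeds $200/\eps^2$), Chebyshev shows the support exceeds $k$ with probability at least $19/20$, so \support{} returns $\infty$. (c) For all larger $l$, where the support is $\le k$ in expectation by a large margin, we want $\support{}\ne\infty$ simultaneously. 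Summing the Chebyshev failure bounds $\mathrm{Var}/(\text{gap})^2$ gives a geometric series dominated by the first term. Note that for levels with a small $l$, objects may be filtered out, so the support can be smaller than $|\discrete U\cap S^{(l)}|$. This only matters for levels below $L$, where we need the value $\infty$. So for (b) we need the level just below $L$ to still include all objects, which the threshold is designed to ensure.

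Then $L = 1+\max\{l : \support{}=\infty\}$ lands in a window of one or two levels around $\ell^*$, where the estimate is accurate. A union bound over the $\lfloor\log p\rfloor+1$ sparse-recovery failures (each with probability $\delta=1/(20\log p)$, so $1/20$ in total) and the few Chebyshev events gives success probability at least $0.7$. The step I expect to be the main obstacle is making this window argument rigorous. Monotonicity in $l$ fails for pairwise-independent samples, because $S^{(l)}$ is nested in $l$ but the filter sets $\class{X}_l$ vary with $l$. I would handle it by using the nestedness $S^{(l+1)}\subseteq S^{(l)}$ (same hash, decreasing threshold) together with $\class{X}_{l}\subseteq\class{X}_{l+1}$ to get monotone support sizes at all levels above $\ell^*-1$.

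For complexity, the space is $O(\log p)$ sparse-recovery structures of size $O(k\polylog(\Delta))$, i.e.\ $O(\polylog(\Delta)/\eps^2)$. The expected time per update, by Lemma~\ref{lem:single-weak-sample}, is $\sum_l \polylog(\Delta)|\discrete X|/2^l$ restricted to levels with $|\discrete X| \lesssim 1/(\eps^2 q_l)$. Each such term is $O(\polylog(\Delta)/\eps^2)$, times $\polylog$ per sparse-recovery update, over $O(\log\Delta)$ levels. The query costs $O(\log p)$ \support{} calls.
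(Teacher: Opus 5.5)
Your outline has the same architecture as the paper's proof: identify $\mathrm{supp}(v^l)$, show that every surviving object passes the size test near the right level, use the nestedness of the $S^{(l)}$, apply Lemma~\ref{lem:weak-sample} at a few levels, and union-bound with the sparse-recovery failures. However, the step you flag as delicate is left open, and as you set it up it does not go through.

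The theorem concerns Algorithm~\ref{alg:convex} with its fixed constants (threshold $1600/(\eps^2 q_l)$, $k=100/\eps^2$), so ``tweaking the constants'' is not available. With your anchor $\ell^*$ and the crude bound $\vol(X)\le 2|\discrete X|$, your estimate $\vol(X)<3200/(\eps^2 q_{\ell^*})$ does not clear the threshold at $\ell^*$, let alone at $\ell^*-1$. So the claim that every object is passed to $v^l$ at $\ell^*$ and $\ell^*\pm1$ is unsupported, and with it the claim that the level just below $L$ reports $\infty$. Because the window is never fixed, the number of Chebyshev events, and hence the bound $0.7$, is never actually derived. The geometric-series Chebyshev bound you propose for the higher levels would also eat into that budget.

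The missing piece is to anchor the window where the given constants provably work.
\begin{enumerate}
\item Take $j$ with $20/\eps^2\le q_j|\discrete U|<80/\eps^2$. Every surviving $X$ satisfies $X.\size\le\vol(U)\le(1+\eps)|\discrete U|<100/(\eps^2 q_j)$. The middle inequality comes from the discretization in Lemma~\ref{lem:discretization} and is sharper than your factor $2$.
\item For $l\ge j-4$ we have $q_l\le q_{j-4}\le 16q_j$, because $\lceil 16y\rceil\le 16\lceil y\rceil$. Hence $X.\size\le 1600/(\eps^2 q_l)$.
\item The threshold only grows with $l$, so your sets $\class{X}_l$ increase with $l$ and there is no monotonicity problem. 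Thus $\mathrm{supp}(v^l)=S^{(l)}$ for all $l\ge j-4$, and these sets are nested.
\item Now only the sparse-recovery event and the five events $E_l=\{|S^{(l)}|/q_l\in(1\pm\eps)|\discrete U|\}$, $j-4\le l\le j$, are needed. Lemma~\ref{lem:weak-sample} applies since $q_l\ge q_j$ there.
\item $E_{j-4}$ together with $q_{j-4}\ge 8q_j$ gives $|S^{(j-4)}|\ge\frac34\cdot\frac{160}{\eps^2}>k$, so $v^{j-4}$ reports $\infty$.
\item $E_j$ gives $|S^{(j)}|\le\frac54\cdot\frac{80}{\eps^2}=k$, and nestedness yields $|S^{(l)}|\le k$ for every $l\ge j$ with no further probabilistic argument.
\end{enumerate}
Thus $L\in[j-3,j]$, $E_L$ gives the $(1\pm\eps)$-estimate, and the failure probability is at most $6/20$.

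Your own anchor can also be rescued without altering the algorithm by starting the window at $\ell^*+1$. There $q_{\ell^*+1}|\discrete U|\in[200/\eps^2,400/\eps^2)$, so even your crude bound gives $\vol(X)<800/(\eps^2 q_l)$ for all $l\ge\ell^*+1$.

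Your complexity analysis is fine.
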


\begin{proof}
    Let us define a hierarchy of random sets $S^{(0)} \supseteq S^{(1)} \supseteq \cdots \supseteq S^{\lfloor \log p \rfloor}$, where
    \[ S^{(l)} := \set{ x \in \discrete{U} : h_{a,b}(x) \leq \lceil p/2^l \rceil - 1 }. \]
    Note that $S^{(l})$ is a weak $q_l$-sample of $\discrete{U}$, where $q_l := \lceil p/2^l \rceil / p \in [2^{-l}, 2^{-l+1}]$. Since $q_l/q_r \in [2^{r-l-1}, 2^{r-l+1}]$ for all $l,r$, there exists an integer $l^* \in \N$ such that $20/\eps^2 \leq q_{l^*} |\discrete{U}| < 80/\eps^2$.
    
    For $l \geq l^*-4$, we make some observations on the sparse vectors $v^l$ in the algorithm. Every surviving object $X \in \class{X}$ satisfies $X.\size \leq \vol(U) \leq (1+\eps) |\discrete{U}| \leq \frac{5}{4} \cdot 80 / (\eps^2 q_{l^*}) \leq 1600 / (\eps^2 q_l)$, which satisfies the if-condition at $l$ when $X$ was inserted. The if-block thus increments the vector $v^l$ on coordinates $X \cap S^{(l)}$. On the other hand, every non-surviving object is treated consistently during insertion and deletion, so it does not contribute to $v^l$. Therefore, $\supp(v^l) = \bigcup_{X \in \class{X}} (X \cap S^{(l)}) = S^{(l)}$. In particular, $\supp(v^{l^*-4}) \supseteq \cdots \supseteq \supp(v^{\lfloor \log p \rfloor})$.

    Now consider a call to \est{}. Let $F$ be the event that all sparse vectors behave correctly according to Theorem~\ref{thm:sparse-recovery}. By a union bound, $\Pr(F) \geq 19/20$. We also define an event $E_l := \set{|S^{(l)}|/q_l \in (1 \pm \eps) |\discrete{U}|}$ for $l^*-4 \leq l \leq l^*$. Then $\Pr(E_l) \geq 19/20$ by Lemma~\ref{lem:weak-sample}. Under the event $F \cap \bigcap_{l=l^*-4}^{l^*} E_l$, we have
    \begin{itemize}
        \item $\card{S^{l^*-4}} \geq (1-\eps) \cdot q_{l^*-4} |\discrete{U}| \geq \frac{3}{4} \cdot \frac{20}{\eps^2} \frac{q_{l^*-4}}{q_{l^*}} \geq \frac{120}{\eps^2} > k$. So $v^{l^*-4}.\support{} = \infty$.
        \item $\card{S^{l^*}} \leq (1+\eps) \cdot q_{l^*} |\discrete{U}| \leq \frac{5}{4} \cdot \frac{80}{\eps^2} = \frac{100}{\eps^2} = k$. So $v^l.\support{} < \infty$ for all $l \geq l^*$.
    \end{itemize}
    So the algorithm computes $l^* - 3 \leq L \leq l^*$ and outputs $v^L.\support{}/q_L = |S^{(L)}|/q_L$, which is a $(1 \pm \eps)$-approximation to $|\discrete{U}|$. The success probability is thus at least $\Pr( F \cap \bigcap_{l=l^*-4}^{l^*} E_l ) \geq \frac{19}{20} - 5 \cdot \frac{1}{20} = 0.7$.

    Next we analyze the complexity. The space is dominated by $\log p = O(\log \Delta)$ sparse vectors, each using $O(k\polylog(\Delta))$ words by Theorem~\ref{thm:sparse-recovery}. So the space complexity is $O(\polylog(\Delta)/\eps^2)$. Regarding time complexity, each update has $\log p = O(\log \Delta)$ iterations, and each iteration runs in expected time $O(\polylog(\Delta) \cdot 2^{-l} \cdot 2^l/\eps^2) = O(\polylog(\Delta) / \eps^2)$ by Lemma~\ref{lem:single-weak-sample}. So the update time is $O(\polylog(\Delta) / \eps^2)$ as claimed.
\end{proof}

Now Theorem~\ref{thm:convexalgo} follows directly from Theorem~\ref{thm:convex-point-count} and Lemma~\ref{lem:discretization}. (To amplify the success probability, we run $O(\log n)$ independent copies and output the median.)

\bibliography{ref}

\appendix

\section{Fat Objects Are Easy}
\label{sec:fat}

We briefly mention that for fat objects in a constant dimension, there are simple ways to solve the dynamic union volume estimation. To this end, we can build a quadtree and replace each object $X$ by a union $\widetilde{X}$ of $O(1/\eps^d)$ quadtree cells of side length $O(\eps \cdot \mathrm{diam}(X))$ where $\mathrm{diam}(X)$ is the diameter of $X$.  It is straightforward to maintain the volume of the union of all these quadtree cells, using a tree structure augmented with intermediate volumes stored at each node. (We may assume that the aspect ratio is polynomial in $n$ due to Lemma \ref{lem:reduce-aspect-ratio} and so the tree height is $O(\log n)$.)

For fat objects, it is not difficult to show that the volume of $\bigcup_{X\in \class{X}} \widetilde{X}$ (which is contained in a union of a $(1+O(\eps))$-factor expansion of $X$) gives a good approximation to the volume of $\bigcup_{X \in \class{X}} X$.
It may even be possible to adapt this approach in streaming settings by incorporating known streaming techniques.  

The problem becomes more nontrivial and subtle when the objects may not be fat: First, we cannot afford to replace a non-fat object with quadtree cells. Second, and more curiously, for non-fat objects, if we take some $(1+\eps)$-factor expansion $\widetilde{X}$ of each object $X$, it is not obvious how the volume of $\bigcup_{X\in \class{X}}\widetilde{X}$ relates to the volume of $\bigcup_{X\in \class{X}} X$.  In fact, this type of question (even in the special case when the objects are narrow ``rods'') is related to the infamous \emph{Kakeya set conjecture}, for which considerable recent effort has been spent on just the 3D case~\cite{WangZ25} (and much less is known in 4D and beyond)!

\end{document}